\documentclass[12pt]{article}
\usepackage[margin=1in]{geometry}
\usepackage{graphicx}
\usepackage[dvipsnames]{xcolor}
\usepackage{color}
\usepackage{amssymb}
\usepackage[nointegrals]{wasysym}
\usepackage{amsmath}
\usepackage{amsthm}
\usepackage{bbm}
\usepackage{pifont}

\usepackage{hyperref}
\usepackage{url}

\usepackage{thmtools,thm-restate}

\newtheorem{theorem}{Theorem}
\newtheorem{lemma}[theorem]{Lemma}

\newtheorem{corollary}{Corollary}
\theoremstyle{definition}
\newtheorem{claim}{Claim}
\newtheorem{Defi}{Definition}
\newtheorem{problem}{Problem}
\newtheorem{example}{Example}

\usepackage[authoryear,round,sort]{natbib}

\usepackage{algorithm}
\usepackage{subcaption}
\usepackage{algpseudocode}
\usepackage{booktabs}
\usepackage{tikz}
\usepackage[capitalize]{cleveref}
\crefname{equation}{}{}
\crefname{claim}{Claim}{Claims}
\Crefname{appsec}{Appendix}{Appendices}

\algblockdefx{WithProbability}{EndWithProbability}[1]{\textbf{with probability}
#1}{\textbf{end}}
\renewcommand{\b}[1]{\left[ #1 \right]}
\def\dTV{d_{\mathrm{TV}}}

\def\R{\mathbb{R}}
\def\Zplus{\mathbb{Z}_{\ge 0}}

\def\be{\mathbf{e}}
\def\bs{\mathbf{s}}
\def\br{\mathbf{r}}
\def\bS{\mathbf{S}}

\def\bx{\mathbf{x}}
\def\bX{\mathbb{X}}
\def\by{\mathbf{y}}
\def\bz{\mathbf{z}}
\def\bv{\mathbf{v}}
\def\bV{\mathbf{V}}
\def\mcX{\mc{X}}

\def\bc{\mathbf{c}}
\def\bM{\mathbf{M}}
\def\bS{\mathbf{S}}
\def\bT{\mathbf{T}}

\def\probname{\textbf{MMS}}
\def\sampprobname{\textbf{MMS-Sample}}
\def\relaxation{\tau_{\mathrm{rel}}}
\def\relaxationk{\tau_{\mathrm{rel},k}}
\def\relaxationgamma{\tau_{\mathrm{rel},\gamma}}
\def\mixingtime{\tau_{\mathrm{mix}}}

\def\mixingtimestate{\tau_{\mathrm{mix}}(\varepsilon; P, \bx_0)}

\def\mixinggammalb{t^*_\gamma(\varepsilon)}

\def\numsampk{N_{k}}
\def\numsampgamma{N_{\gamma}}
\def\numsamplbk{\underline{N}_k}
\def\numsampubk{\overline{N}_k}

\def\numsamplbgamma{\underline{N}_\gamma}
\def\numsamplbgammastar{\underline{N}_{\gamma^*}}

\def\numsampubgammaeps{\overline{N}_\gamma(\varepsilon)}
\def\solutiondensity{p_{\gamma}}
\def\solutiondensitystar{p_{\gamma}^*}
\def\Mgamma{M_{\gamma}}
\def\Pgamma{P_{\gamma}}
\def\Qgamma{Q_{\gamma}}
\def\Mreduced{M_{k}}
\def\Preduced{P_{k}}
\def\Preducedtwo{P_{2}}
\def\given{\; | \;}
\def\tpg{\tilde \pi_{\gamma}}

\newcommand{\ALSynMRPct}{2.52}
\newcommand{\ALPUMSMRPct}{3.36}
\newcommand{\ALSynSizeOne}{0.27}
\newcommand{\ALPUMSSizeOne}{0.32}
\newcommand{\ALTVDUnadjustedAll}{0.126}
\newcommand{\ALTVDAdjustedAll}{0.097}

\newcommand{\ALTVDAdjustedAccOne}{0.083}

\newcommand{\ALlzzoTVDUnadjustedAll}{0.091}
\newcommand{\ALlzzoTVDAdjustedAll}{0.077}

\newcommand{\ALlzzoTVDAdjustedAccOne}{0.069}

\newcommand{\ALlzzzzoTVDUnadjustedAll}{0.109}
\newcommand{\ALlzzzzoTVDAdjustedAll}{0.080}

\newcommand{\ALlzzzzoTVDAdjustedAccOne}{0.067}

\newcommand{\NVSynMRPct}{6.27}
\newcommand{\NVPUMSMRPct}{11.59}

\newcommand{\NVTVDUnadjustedAll}{0.229}
\newcommand{\NVTVDAdjustedAll}{0.219}

\newcommand{\NVTVDAdjustedAccOne}{0.205}

\newcommand{\NVlzzoTVDUnadjustedAll}{0.217}
\newcommand{\NVlzzoTVDAdjustedAll}{0.212}

\newcommand{\NVlzzoTVDAdjustedAccOne}{0.198}

\newcommand{\NVlzzzzoTVDUnadjustedAll}{0.222}
\newcommand{\NVlzzzzoTVDAdjustedAll}{0.217}

\newcommand{\NVlzzzzoTVDAdjustedAccOne}{0.201}

\newcommand{\NVmrtwoSynMRPct}{10.59}

\def\Dpums{\mc P}
\def\Drw{\tilde{\Dpums}}

\def\empdist{\hat{\mc Q}}
\def\empdistrw{\tilde{\mc Q}}

\crefname{equation}{}{}
\crefname{claim}{Claim}{Claims}
\newcommand{\incolor}[1]{\textcolor{red}{\mathbf{#1}}}

\algblockdefx{WithProbability}{EndWithProbability}[1]{\textbf{with probability}
#1}{\textbf{end}}
\newcommand{\Plus}{\mathord{\text{\ding{58}}}}
\newcommand{\mc}[1]{\ensuremath{\mathcal{#1}}}
\newcommand\numberthis{\addtocounter{equation}{1}\tag{\theequation}}

\newcommand{\p}[1]{\left( #1 \right)}
\renewcommand{\b}[1]{\left[ #1 \right]}
\DeclareMathOperator*{\argmax}{arg\,max}
\DeclareMathOperator*{\argmin}{arg\,min}

\newcommand{\EE}[2]{\mathbb{E}_{#1} \b{#2}}
\def\dTV{d_{\mathrm{TV}}}

\def\R{\mathbb{R}}
\def\Zplus{\mathbb{Z}_{\ge 0}}

\def\be{\mathbf{e}}
\def\bs{\mathbf{s}}
\def\br{\mathbf{r}}
\def\bS{\mathbf{S}}

\def\bx{\mathbf{x}}
\def\bX{\mathbb{X}}
\def\by{\mathbf{y}}
\def\bz{\mathbf{z}}
\def\bv{\mathbf{v}}
\def\bV{\mathbf{V}}
\def\mcX{\mc {X}}

\def\bc{\mathbf{c}}
\def\bM{\mathbf{M}}
\def\bS{\mathbf{S}}
\def\bT{\mathbf{T}}

\def\probname{\textbf{MMS}}
\def\sampprobname{\textbf{MMS-Sample}}

\def\relaxation{\tau_{\mathrm{rel}}}
\def\relaxationk{\tau_{\mathrm{rel},k}}
\def\relaxationgamma{\tau_{\mathrm{rel},\gamma}}
\def\mixingtime{\tau_{\mathrm{mix}}}

\def\mixingtimestate{\tau_{\mathrm{mix}}(\varepsilon; P, \bx_0)}

\def\mixinggammalb{\tau^*_\gamma(\varepsilon)}

\def\numsampk{N_{k}}
\def\numsampgamma{N_{\gamma}}
\def\numsamplbk{\underline{N}_k}
\def\numsampubk{\overline{N}_k}

\def\numsamplbgamma{\underline{N}_\gamma}
\def\numsamplbgammastar{\underline{N}_{\gamma^*}}

\def\numsampubgammaeps{\overline{N}_\gamma(\varepsilon)}
\def\solutiondensity{p_{\gamma}}
\def\solutiondensitystar{p_{\gamma}^*}

\def\Mgamma{M_{\gamma}}
\def\Pgamma{P_{\gamma}}
\def\Qgamma{Q_{\gamma}}
\def\Mreduced{M_{k}}
\def\Preduced{P_{k}}
\def\Preducedtwo{P_{k=2}}
\def\given{\; | \;}
\def\tpg{\tilde \pi_{\gamma}}

\title{Synthetic Census Data Generation via Multidimensional Multiset Sum}
\date{}

\author{Cynthia Dwork
  \thanks{Department of Computer Science, Harvard
  University}
  \and
  Kristjan Greenewald
  \thanks{MIT-IBM Watson AI Lab; IBM Research, Cambridge, MA, USA}
  \and
  Manish Raghavan
  \thanks{MIT Sloan School of Management and Department of
  EECS}
}

\begin{document}

\maketitle

\begin{abstract}

  The US Decennial Census provides valuable data for both research and policy
  purposes. Census data are subject to a variety of disclosure avoidance
  techniques prior to release in order to preserve respondent confidentiality.
  While many are interested in studying the impacts of disclosure avoidance
  methods on downstream analyses, particularly with the introduction of
  differential privacy in the 2020 Decennial Census, these efforts are limited by
  a critical lack of data: The underlying ``microdata,'' which serve as necessary
  input to disclosure avoidance methods, are kept confidential.

  In this work, we aim to address this limitation by providing tools to generate
  synthetic microdata solely from published Census statistics, which can then be
  used as input to any number of disclosure avoidance algorithms for the sake of
  evaluation and carrying out comparisons. We define a principled distribution
  over microdata given published Census statistics and design algorithms to sample
  from this distribution. We formulate synthetic data generation in this context
  as a knapsack-style combinatorial optimization problem and develop novel
  algorithms for this setting. While the problem we study is provably hard, we
  show empirically that our methods work well in practice, and we offer
  theoretical arguments to explain our performance. Finally, we verify that the
  data we produce are ``close'' to the desired ground truth.
\end{abstract}
\maketitle

\section{Introduction}
\label{sec:introduction}

Scholars, practitioners, and policy-makers rely on US Decennial Census data for
a wide range of research and decision-making tasks. For privacy reasons, Census
data are not released in full. All released data are instead subject to a
variety of disclosure avoidance practices. For example, data may be perturbed
before release or have their location information coarsened. The unperturbed
``microdata'' are kept secret for 72 years after their collection.

The Census Bureau updated its disclosure avoidance system to use differential
privacy for the release of the 2020 Decennial Census instead of its prior
swapping-based methodology~\citep{abowd2018us,abowd20222020}. This sparked
renewed interest in the properties of various privacy-preserving methods and
their impacts on downstream consumers of Census data products. For example,
decisions involving budgeting, voting rights and redistricting, and planning
rely on accurate and consistent Census
data~\citep{cohen2022census,steed2022policy}. Social scientists also rely on
Census data for a wide range of research including health and social
mobility~\citep{ruggles2019differential}. These stakeholders have begun to ask a
critical question: How reliable can these analyses be under privacy-preserving
techniques?

A key obstacle to answering this question is the secrecy of the data themself:
while the TopDown algorithm used by the Census in 2020 is public, the underlying
data on which it is run are not. Ideally, one could simply simulate multiple runs
of the TopDown algorithm (or any other proposed alternative) to characterize its
effects on downstream quantities of interest. But the algorithm(s) in question
take as input the underlying \textit{microdata}, which are only released 72
years after they are collected.

Prior research on disclosure avoidance and the Census has used a variety of
workarounds, including a limited sample of public Census demonstration
data~\citep{kenny2023evaluating,dick2023confidence}, older Census
data~\citep{bailie2023can,petti2019differential}, or heuristic methods to
generate microdata~\citep{christ2022differential,cohen2022census}. We discuss
these heuristics in greater detail in \Cref{sec:related}.

In this work, we seek to enable comprehensive research into the US Census,
including research on privacy, by providing a principled method to generate
synthetic Census microdata from publicly available data sources. Our aim is to
enable research into the impacts of privacy-preserving technology on Census data
beyond the above-mentioned heuristic workarounds. Moreover, our tools could
provide a starting point for Census data consumers to estimate and potentially
correct for the biases induced by disclosure avoidance algorithms by estimating
how they affect quantities of interest, which we discuss further in
\Cref{sec:conclusion}.

At a high level, we combine block-level aggregate statistics with a random
sample of microdata containing only coarse location
information~\citep{beckman1996creating}. Importantly, our goals and methods
differ from those of many Census-specific ``reconstruction attacks,'' which seek
to analyze privacy-preserving methods by testing whether and how many rows of
the microdata can be reconstructed from publicly released
information~\citep{abowd2018staring,abowd-declaration,dick2023confidence,francis2022note}
(see also: \citet{dinur2003revealing}). Reconstruction attacks typically seek to
find the \textit{most likely} microdata given the available information, which
will in general lead to a more homogeneous dataset at a population level. In
contrast, our aim is to sample from a representative distribution over
microdata. We do not use any auxiliary information (i.e., non-Census data
products), and we seek to generate representative, state-wide synthetic
microdata, instead of a fraction of the rows. We intend for researchers to
perform downstream analyses over multiple samples of this microdata. If a
finding (e.g.,~that a particular disclosure avoidance method biases a statistic
of interest) holds across multiple samples from this distribution, we may be
more concerned that it holds for the ground truth data as well. Note that we do
not intend for our synthetic data to be interpreted as ``ground truth''; our
goal is to provide synthetic data that are both faithful to published
information and statistically plausible.

At a technical level, we formulate synthetic data generation in this context as
a knapsack-style combinatorial optimization problem. Given aggregate statistics
for each Census block (e.g., number of households, number of individuals of each
race, \dots), we seek to sample households that, when put together, exactly
match the aggregate statistics reported by the Census. We design a Markov Chain
Monte Carlo algorithm to sample appropriate households. While the problem we
seek to solve is NP-hard, we provide both theoretical and empirical evidence to
show that our methods perform sufficiently well to be viable in this setting,
allowing us to sample datasets across entire US states. We provide code and
detailed instructions for others to generate their own synthetic data at
\url{https://github.com/mraghavan/synthetic-census}. Our implementation is
specific to the 2010 US Decennial Census, but our broader framework can be
adapted to general population synthesis tasks.\footnote{Changes to the 2020
  Census prevent our methods from being directly applicable. In particular, the
  2020 Census includes the ``Privacy-Protected Microdata File'' (PPMF), which is
  a 100\% enumeration of synthetic persons and
  households~\citep{uscensus2024ppmf}. However, because the persons and
  household files are separate, future work could adapt our techniques to create
synthetic microdata by combining these datasets.}

\paragraph*{Organization of the paper.} In \Cref{sec:problem-formulation}, we
formalize synthetic data generation in our setting as a combinatorial
optimization problem. We discuss related work in \Cref{sec:related}. In
\Cref{sec:mcmc}, we present and analyze a pair of Markov Chain Monte Carlo
(MCMC) algorithms, finding that our approach empirically performs much better
than generalizations of prior work. We describe our overall hybrid integer
linear programming-MCMC algorithm in \Cref{sec:overall}. In
\Cref{sec:evaluation}, we evaluate the representativeness of data sampled via
this algorithm. We discuss the implications and usage of our methods in
\Cref{sec:conclusion}.

\section{Problem Formulation}
\label{sec:problem-formulation}

\paragraph*{Notation.}

We will denote nonnegative integer vectors with bold lower-case letters (e.g.,
$\bx$) and use calligraphic upper-case letters (e.g., $\mcX$ or $\mc D$) to
denote sets and probability distributions. We will use bold upper-case letters
for nonnegative integer matrices (e.g., $\bV$). We will write $\bx_i$ to denote
the $i$th vector in a set and $\bx[i]$ to denote the $i$th entry of vector
$\bx$, indexing vectors beginning with 1. We use $\preceq$ and $\succeq$ to
denote a vector being element-wise $\le$ (resp. $\ge$) another vector. We will
use $\bx_{i \gets g}$ to denote $\bx$ with its $i$th entry replaced by the value
$g$. $[n]$ refers to the set $\{1, \dots, n\}$. For a random variable $\bX$ with
distribution $\sigma$ over a discrete set $\mcX$, we will write $\sigma(\bx) =
\Pr[\bX = \bx]$ for $\bx \in \mcX$ and $\sigma(\mc S) = \Pr[\bX \in \mc S]$ for
$\mc S \subseteq \mcX$. To refer to the conditional distribution of $\sigma$ on
$\mc S \subseteq \mcX$, we write $\sigma \given \bx \in \mc S$.

\subsection{Empirical setting}

Our goal is to generate synthetic microdata based on the 2010 US Decennial
Census. Cleaned Census responses are collected in a dataset known as the Census
Edited File (CEF) often referred to as ``microdata.'' To meet its statutory
privacy obligations, the Census Bureau does not release this dataset. Instead,
they apply a suite of disclosure avoidance techniques (including adding noise,
censoring outliers, etc.) before releasing aggregate statistics. Often,
statistics are released at the Census block level, where a Census block
typically consists of at most a few hundred households. In particular, ``Summary
File 1'' (SF1) provides granular demographic information for each Census block
\textit{after} disclosure avoidance techniques are used.

For each block, we will seek to sample a collection of households whose
characteristics match statistics reported in SF1.\footnote{We obtain SF1
data via IPUMS \citep{ipums}.} As a result, our data will exactly match SF1
along all the attributes we choose (detailed below). Importantly, our methods
preserve structural zeros: if SF1 reports zero people with certain
characteristics in a block, we ensure that our data have the same property.

The statistics included in SF1 are fairly detailed and include counts of
individuals with various attributes (e.g., number of Hispanic persons) and
detailed household types (e.g., number of households with three members headed
by a householder of two or more races).\footnote{We choose of a subset of the
  SF1 statistics to match, which we describe in more detail in
\Cref{app:encoding}.} For a given block, we will denote these statistics by a
nonnegative integer vector $\bc \in \Zplus^d$. An important feature we will rely
on is the fact that $\bc$ encodes the total number of households in a block,
which we will refer to as $m$. In our case, each vector has dimension $d = 135$,
where each dimension encodes a particular count. For more details, see
\Cref{app:encoding}.

In addition to SF1, the Census Bureau also releases the Public Use Microdata
Sample (PUMS), consisting of a 10\% sample of households across each
state.\footnote{\url{https://www.census.gov/data/datasets/2010/dec/stateside-pums.html}}
Crucially, for privacy reasons, the PUMS does not contain granular geographic
locations for each household; each household is annotated with the Public Use
Microdata Area (PUMA) in which it resides. For our purposes, we ignore PUMA
information and treat the PUMS as simply a statewide sample.\footnote{Each PUMA
has at least 100,000 individuals in it. We aggregate to the state level because
otherwise, the data are too sparse for our methods to be effective. As a result,
we will fail to capture regional variation in household composition that are not
explained by SF1. Intuitively, our formulation makes the assumption that,
conditioned on the SF1 counts, the distribution of households is
location-invariant within the state.} We will treat each distinct household in
the PUMS dataset as a vector $\bv_i \in \Zplus^d$ (encoded in the same
$d=135$-dimensional space as each block). As before, each entry represents the
count of a certain property, which can either be the number of individuals in a
household satisfying a particular demographic property or the (binary) indicator
for whether the household as a whole satisfies a property. The frequency of each
distinct household type in this encoding yields a distribution $\mc D$. Let $n$
be the number of distinct household types, each represented by a vector $\bv_i$,
in a given state (in our case, $n$ is on the order of a few thousand), and
define $\bV \in \Zplus^{d \times n}$ to be the matrix with columns $\bv_i$. It
will sometimes be convenient to refer to the set of household type vectors,
which we will denote $\mc V \triangleq \{\bv_1, \dots, \bv_n\}$.

With this data, our goal is as follows. A ``solution'' to a block $b$ is a
multiset $\bx \in \Zplus^n$ (represented as a vector of cardinalities of each
element) such that $\bV \bx = \bc_b$. Note that this linear equality exactly
captures the constraint that, when summed together, the characteristics of the
multiset of households exactly match those reported in SF1.

\subsection{Handling multiplicity}

If each block had a unique $\bx$ satisfying $\bV \bx = \bc_b$, then this would
suffice---we could find $\bx$ for each block $b$, producing the entire
microdata. This is not the case.\footnote{For privacy reasons, this is to be
expected.} As a simplified example, consider a block with four individuals, two
white and two Asian, split into two households of size two. Without further
information, there are two possibilities: the block could contain either two
racially homogeneous households or two multiracial households.

When faced with multiple possible solutions underlying a given block, what
should we do? We could of course choose arbitrarily. The Census' own
demonstration reconstruction attack appears to take this
approach~\citep{abowd2018staring}. But in the above example, we might believe
that statistically speaking, it is more likely that a block contains two
racially homogeneous households than two multiracial households. This intuition
is borne out in the PUMS microdata sample: multiracial households are far less
frequent than racially homogeneous households. How should this information
inform our sample when multiple possible reconstructions exist?

One approach would be to try to find the ``most likely'' reconstruction (for
some definition of likelihood we will have to make formal). If our goal was a
reconstruction attack, this might be the right choice. But if we do this for all
blocks, our overall sample will be quite biased. For example, if racially
homogeneous households are much more frequent than multiracial households, our
resulting dataset will contain very few multiracial households relative to what
we know about the overall population.

Instead, we take a different approach designed to produce a more representative
sample. Let $\mcX_b \triangleq \{\bx : \bV \bx = \bc_b\}$ be the set of all
valid solutions for a given block. In other words, $\mcX_b$ is the set of
combinations of households that, when summed together, match the aggregate
block-level counts. We will specify a distribution $\pi$ over $\mcX_b$ based on
the PUMS distribution $\mc D$ and seek to sample from $\mcX_b$ according to
$\pi$. Ideally, our choice of $\pi$ would lead to a representative statewide
sample, meaning that when sampling across the entire state, the expected
frequency of each household in our sampled microdata matches its empirical PUMS
frequency in $\mc D$. Unfortunately, this would require $\pi$ to depend on
$\bc_b$ for all $b$ in a state, which would be prohibitively expensive. Instead,
we specify a natural generative model to induce a distribution $\pi$. We
evaluate the representativeness of samples produced by $\pi$ in
\Cref{sec:evaluation}.

\subsection{A generative model}

Consider a generative model in which we are given the PUMS distribution $\mc D$
over the set of household types $\mc V \triangleq \{\bv_1, \dots, \bv_n\}$. Assume
that for a given Census block $b$, aggregate statistics $\bc_b$ are chosen
exogenously. (We will often drop the subscript $b$ for ease of notation.) Then,
a multiset $\bx$ of households is sampled i.i.d. according to $\mc D$. We are
interested in the distribution over multisets of households that this produces
conditioned on the event that the aggregate characteristics of sampled
households exactly matches the reported statistics: that is, conditioned on $\bV
\bx = \bc$.

Intuitively, we can think of this as the distribution induced via rejection
sampling; indeed, a basic (and prohibitively inefficient) algorithm to sample
from this distribution is to repeatedly sample households i.i.d. from $\mc D$
until either $\bV \bx = \bc$ or $(\bV \bx)[i] > \bc[i]$ for some $i$, and accept
the first $\bx$ that satisfies $\bV \bx = \bc$. (See \Cref{alg:rejection} in
\Cref{app:rejection}.) In this generative model, for $\bx$ such that $\bV \bx =
\bc$, the posterior $\pi$ induced by rejection sampling is given (up to a
normalizing constant) by
\begin{align*}
  \numberthis \label{eq:piD-def}
  \pi(\bx) \propto f(\bx) \triangleq \binom{\| \bx \|_1}{\bx} \prod_{i=1}^n
  \Pr_{\mc D}[\bv_i]^{\bx[i]},
\end{align*}
where the multinomial coefficient, defined $\binom{\| \bx \|_1}{\bx} =
\|\bx\|_1!/(\bx[1]! \dots \bx[n]!)$ accounts for the multiplicity of elements in
$\bx$. (Recall that $m = \|\bx\|_1$, the number of households in a block, is
known for each block.) For $\bx$ such that $\bV \bx \ne \bc$, we define
$\pi(\bx)$ to be 0. With this, we have specified a combinatorial optimization
problem: sample from $\mcX = \{\bx : \bV \bx = \bc\}$ according to the
distribution $\pi$ in \Cref{eq:piD-def}. We formalize this below.

\subsection{Formal problem definition}

Sampling from $\pi(\cdot)$ over $\mcX$ is closely related to the classical
subset sum problem. In subset sum, the goal is to choose a subset of integers
that sum to a desired target. This can be generalized to the multidimensional
subset sum problem, in which we choose a subset of integer vectors that sum to a
target vector. The problem we seek to solve (sampling from $\mcX$) can be
thought of as a (nonnegative) ``multidimensional multiset sum'' problem, similar
to multidimensional subset sum but with a slight modification: we may choose
vectors with replacement. We define the decision version of the multidimensional
subset sum (\probname) problem and its sampling analogue (\sampprobname) below:
\begin{problem}[\probname]
  Given a matrix $\bV \in \Zplus^{n \times d}$ and vector $\bc \in \Zplus^d$, 
  determine whether $\mcX = \{\bx \in \Zplus^n : \bV \bx = \bc\}$ is non-empty.
\end{problem}
\begin{problem}[\sampprobname]
  Given a matrix $\bV \in \Zplus^{n \times d}$ and a vector $\bc \in \Zplus^d$,
  let $\mcX = \{\bx \in \Zplus^n : \bV \bx = \bc\}$. Given a distribution
  $\sigma$ over $\mcX$ specified up to a normalizing constant, sample from
  $\sigma$.
\end{problem}
\probname\ is NP-hard (see \Cref{clm:hardness} in \Cref{app:hardness} for
details), making it hard to approximate $|\mcX|$ to within a constant
factor. As a result, \sampprobname\ is NP-hard~\citep{jerrum1986random}.
Moreover, relaxing the constraint to $\bV \bx \approx \bc$ is unlikely to make
the problem significantly easier, since approximate multidimensional subset sum
is NP-hard for $d >
2$~\citep{emiris2017approximating,magazine1984note,kulik2010there}. Our key
technical contribution is an algorithm for \sampprobname\ that works
sufficiently well in practice for the purposes of generating synthetic Census
microdata. We focus on sampling from $\pi$ given by~\Cref{eq:piD-def}, but our
algorithms work for any distribution $\sigma$ specified up to a normalizing
constant. Before describing our solution, we discuss related work on population
synthesis, Census disclosure avoidance, and similar combinatorial optimization
problems.

\section{Related work}
\label{sec:related}

\paragraph*{Population synthesis.}

Researchers have sought to combine microdata with tabular data to generate
synthetic populations for a variety of purposes including agent-based simulation
and transportation analyses~\citep{beckman1996creating}. At a high level, this
body of literature focuses on two steps: building models of population
distributions and sampling from those distributions to meet given constraints
\citep{muller2010population}. Much of the focus of this literature has been on
building better models of populations \citep[e.g.~][]{creecy2009feasibility,
farooq2013simulation, casati2015synthetic, sun2015bayesian,
sun2018hierarchical}, often by modeling complex relationships between
demographic attributes or hierarchical household-individual relationships. This
is particularly valuable when dealing with rich, high-dimensional data. More
recent work has also used machine-learning approaches to model
populations~\citep{albiston2024neural,gussenbauer2024simulation}. In contrast,
we take a particularly simple approach to population modeling, treating the PUMS
as a population model.

Given a population model, the literature contains a few high-level strategies to
produce a sample satisfying aggregate constraints like the ones imposed by SF1.
Early techniques like iterative proportional fitting (IPF)
\citep{deming1940least} seek to fit weights for each member of a population
distribution such that sampling from the population distribution according to
those weights would yield the desired aggregate counts in
expectation~\citep{birkin1988synthesis, beckman1996creating}. In our notation,
these methods find real-valued $\bx \in \R_{\ge 0}^n$ such that $\bV \bx = \bc$,
and they seek to align $\bx$ with the population model $\mc D$. They do not
guarantee that the constraints are met \textit{ex post}, i.e.,~after an actual
dataset is sampled. For large geographic regions, sampling this way will
approximately satisfy the given constraints. But for small geographic regions
(e.g.,~Census blocks), sampling schemes yield high variance, meaning generated
datasets will differ significantly from the given constraints. A class of
heuristics known as ``deterministic reweighting'' schemes seek to reduce the
variance produced by sampling~\citep{ballas2005simbritain,
creecy2009feasibility, lovelace2013truncate, casati2015synthetic}, but these
methods tend to either fail to provide exact guarantees or introduce bias into
the sample \citep{muller2010population}.

Most related to ours, a line of work draws on combinatorial optimization
heuristics like hill-climbing, simulated annealing, and genetic algorithms to
construct a sample that comes as close as possible to matching the aggregate
counts~\citep{williamson1998estimation, voas2000evaluation, harland2012creating,
ma2015synthetic, wu2022synthetic, whitworth2022synthacs,
gussenbauer2024simulation}. The quality of these approaches is often measured by
the (sometimes squared) difference between the aggregate counts and the sample
counts $\|\bV \bx - \bc\|_1$. In general, these heuristics fail to guarantee
$\bV \bx = \bc$.

Our primary technical contribution lies in sampling, not in modeling the
population distribution (since we adopt the PUMS as our population model). In
contrast to prior work, our work guarantees exact matches to the aggregate
counts. If this was our only goal, standard integer linear programming (ILP)
would suffice, since our problem setting is simple enough that we do not need to
rely on heuristic methods from combinatorial optimization. However, instead of
simply finding a single solution, our methods are designed to sample from a
known distribution over all possible solutions. We could in principle apply our
techniques to use more complex population models than the simple PUMS
distribution used here; we defer such investigations to future work.

\paragraph*{Analyzing Census disclosure avoidance systems.}

Prior work that has generated synthetic microdata for the purposes of analyzing
Census disclosure avoidance systems
\citep{cohen2022census,christ2022differential} has relied on heuristics that do
not produce reliable household-level data. \citet{cohen2022census} explicitly
note that their synthetic data do not contain household information, preventing
them from fully replicating the Census Bureau's disclosure avoidance system.
(They do perform experiments in which they arbitrarily group individuals into
households of size 5 in their synthetic data.) \citet{christ2022differential}
use a combination of heuristics to produce a limited sample of data by randomly
selecting blocks and pooling data. This enables them to generate
individual-level data that bear some resemblance to the ground truth. In
contrast, our methods generate state-wide microdata at the household level.
Household-level data are strictly more general than individual-level microdata,
since we can produce an individual-level dataset simply by enumerating
individuals in each household.

Our work bears some resemblance to recent ``reconstruction attacks'' on Census
data, which attempt to reconstruct rows of the dataset given Census statistics
and potentially external
information~\citep{abowd-declaration,dick2023confidence,francis2022note}. In
particular, the Census Bureau conducted a reconstruction attack which used
integer programming to combine information across multiple tables to produce
microdata~\citep{abowd2018staring}. Our work extends the scope of these efforts
in a number of ways. First, in order to facilitate the analysis of the impacts of disclosure
avoidance techniques like swapping,
we produce household-level data instead of
individual-level data.
Second, we integrate information about the distribution of
households from the PUMS, allowing us to produce a more representative dataset.
As discussed earlier, our goal of representativeness is quite different from
reconstruction: maximizing reconstruction ``accuracy'' would suggest choosing
the most likely sample, which would be fairly homogeneous and unrepresentative;
in contrast, we seek to produce a dataset that matches state-wide household
statistics, which introduces additional complexity. In principle, if we were
interested in adapting our methods to produce an effective reconstruction attack
at the household level, we could sample $\bx^* = \argmax_{\bx \in \mc X}
\pi(\bx)$. In \Cref{app:reduced-omitted}, we provide a linear approximation to
$\pi(\cdot)$ that could be used to solve this via integer linear programming.

\paragraph*{Multidimensional knapsack, subset sum, and related combinatorial
optimization problems.}

Prior work considers closely related problems of knapsack sampling/counting and
systems of linear Diophantine equations (equations of the form $\bV \bx = \bc$
with integrality constraints). While there exist polynomial-time approximation
schemes for knapsack
counting~\citep{morris2004random,dyer2003approximate,gopalan2011fptas,rizzi2014faster,gawrychowski2018faster,kayibi2018mixing,lawler1977fast},
these techniques do not generalize to our setting because (1) we require
\textit{exact}, not just \textit{feasible} solutions (i.e., $\bV \bx = \bc$
instead of $\bV \bx \le \bc$); and (2) these algorithms scale exponentially with
dimension $d$. Because \sampprobname\ is NP-hard to approximate, we should not
expect a polynomial-time approximation scheme. Other related work includes
generalizations of the knapsack problem~\citep{hendrix2015bounded} and dynamic
programming approaches~\citep{bossek2021exact}, which are empirically too slow
in our high-dimensional setting.

Several specialized algorithms for knapsack and subset sum-style problems have
appeared in the literature, some of which extend to multidimensional settings
\citep[e.g.,][]{cabot1970enumeration,ingargiola1977general,puchinger2010multidimensional,salkin1975knapsack,martello1987algorithms,pisinger1999exact}.
We do not experiment with them here since our own implementations of these
methods are unlikely to compete with general-purpose but highly optimized
integer linear programming packages, which we make heavy use of. Future work
might be able to take advantage of these to further optimize the methods we
develop here.

Our work draws most closely on the MCMC algorithm of \citet{dyer1993mildly}. A
direct adaptation of their algorithm (\Cref{sec:simple}) is still too slow in
our setting, but we develop a new MCMC approach that works better
empirically (\Cref{sec:reduced}). We characterize the performance of our
algorithms using a long line of theoretical results on the mixing time of Markov
chains~\citep{diaconis1991geometric,lawler1988bounds,jerrum1988conductance,sinclair1989approximate,jerrum1989approximating}.

\sampprobname\ can also be described as sampling nonnegative solutions to a
system of linear Diophantine equations. While these systems have been studied
extensively~\citep{blankinship1966algorithm,bradley1971algorithms,chou1982algorithms,lazebnik1996systems,bradler2016number,aardal2000solving,sanchez2016linear},
algorithms in this setting generally produce integer solutions, not
\textit{nonnegative} integer solutions. (Recall that we require nonnegativity
because a multiset of households cannot contain a negative number of copies of a
household.) The literature contains some results on determining the existence of
or bounding the number of nonnegative
solutions~\citep{bradler2016number,mahmoudvand2010number}, but these are not
algorithmic in nature.

In addition to all of these techniques, we have one more tool at our disposal:
Integer Linear Programming (ILP). For sufficiently small problems, we can use
highly optimized software packages\footnote{We use Gurobi
(\url{https://www.gurobi.com/}) in our experiments. Free alternative solvers can
also be used in its place.} for ILP to simply enumerate
$\mcX$ and sample according to $\pi$ as desired. In all of our instances (where
each instance is a Census block), we find that ILP suffices to determine whether
$\mcX$ is non-empty. Enumeration, however, has a clear downside: its complexity
scales with $|\mcX|$, which may be exponentially large in $n$, the number of
possible households. For example, using ILP to enumerate up up to 5000 elements
of $\mcX_b$ from each Census block in $b$ Alabama and Nevada, we plot the
distribution of $\min(|\mcX_b|, 5000)$ in \Cref{fig:heavy-tail}. Our results
suggest that $|\mcX_b|$ has a heavy-tailed distribution,\footnote{We do not
attempt to evaluate whether the power-law distribution fits these data well.
\Cref{fig:heavy-tail-AL,fig:heavy-tail-NV} are simply meant to be illustrative.}
meaning that enumerating $\mcX_b$ completely for each block is likely to be
computationally infeasible.\footnote{For reference, enumerating up to 5000
  solutions from $\mcX_b$ for each block $b$ takes thousands of CPU-hours for
  Alabama and Nevada, which have 135,838 and 35,916 non-empty Census blocks
respectively.} For instances where $|\mcX_b|$ is too large, we need more
efficient sampling algorithms. We turn to Markov Chain Monte Carlo methods for
this.

\begin{figure}[ht]
  \centering
  \begin{subfigure}[b]{0.48\textwidth}
    \includegraphics[width=\textwidth]{./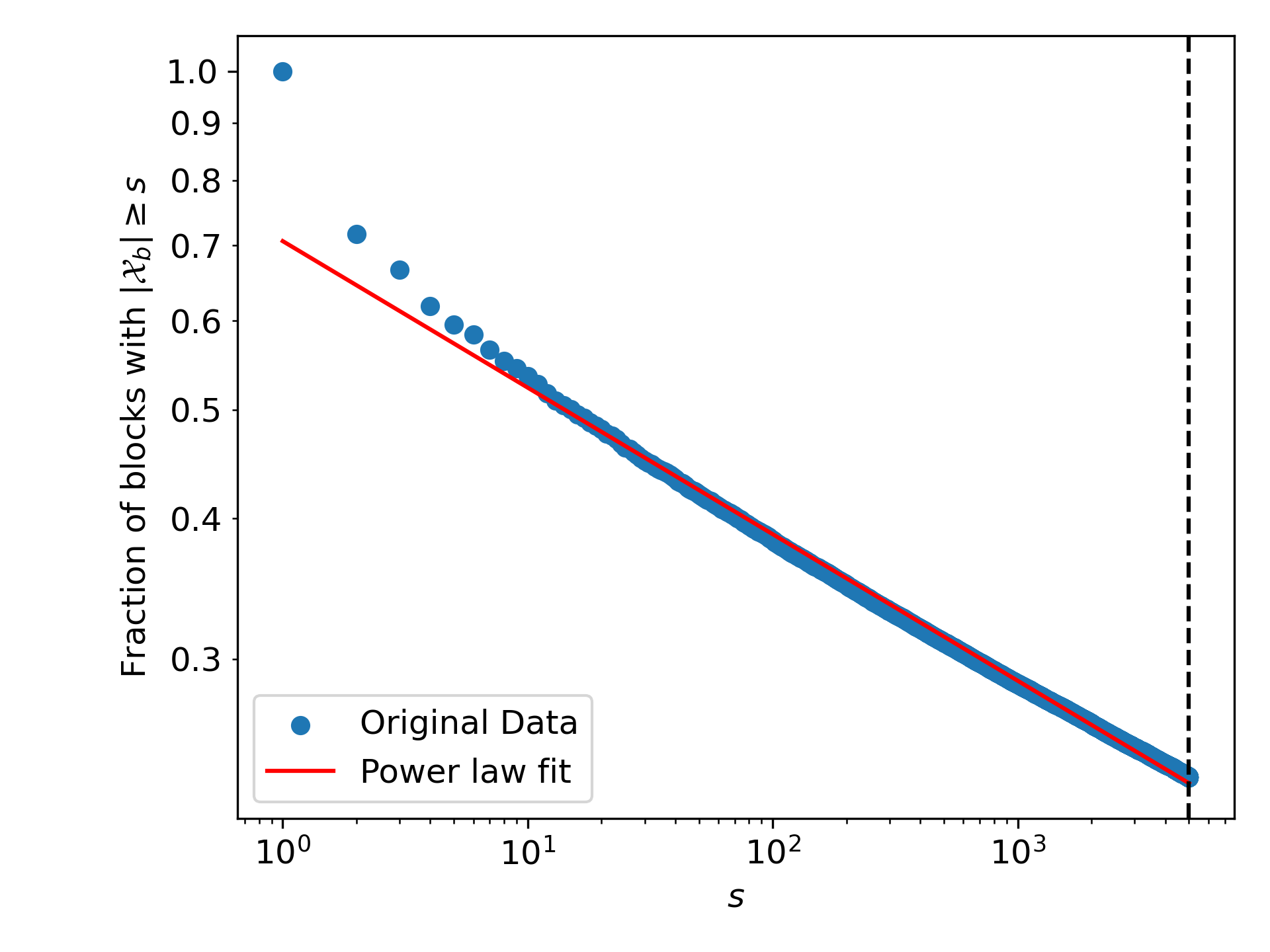}
    \caption{AL}
    \label{fig:heavy-tail-AL}
  \end{subfigure}
  \hfill
  \begin{subfigure}[b]{0.48\textwidth}
    \includegraphics[width=\textwidth]{./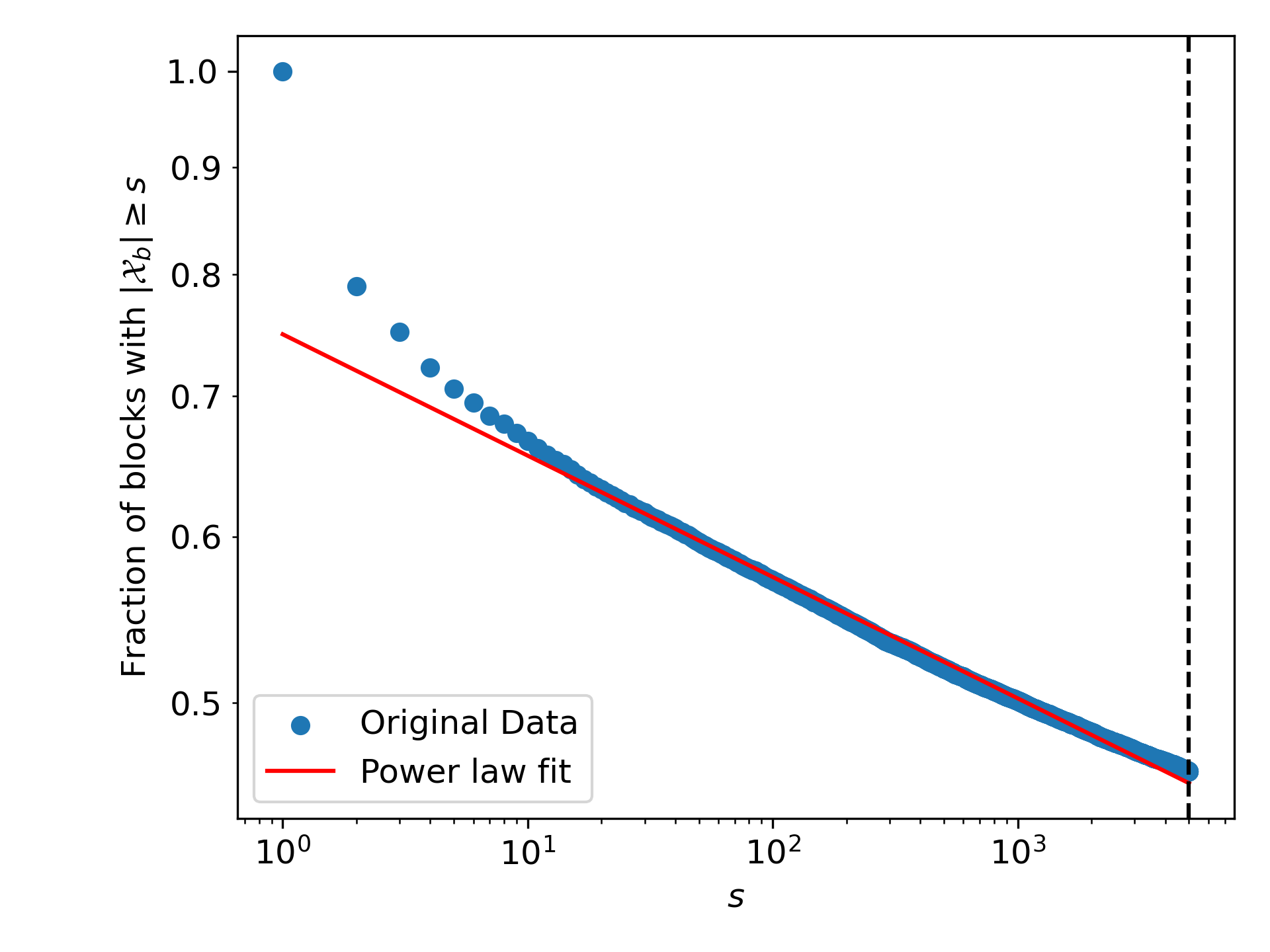}
    \caption{NV}
    \label{fig:heavy-tail-NV}
  \end{subfigure}
  \caption{$|\mcX_b|$, the number of solutions in each block $b$, appears to be
  heavy-tailed.}
  \label{fig:heavy-tail}
\end{figure}

\section{Markov Chain Monte Carlo Methods}
\label{sec:mcmc}

Markov Chain Monte Carlo methods can allow us to efficiently sample from
exponentially large state spaces and have been used in prior work on knapsack
sampling~\citep{dyer1993mildly,morris2004random}. At a high level, MCMC works by
defining a Markov chain over the solution space, performing a random walk on
this Markov chain, and yielding a solution after a fixed number of random walk
steps. Here, we develop MCMC techniques to solve \sampprobname. We present two
approaches: the ``simple'' chain, a modification of the algorithm of
\citet{dyer1993mildly}, and the ``reduced'' chain, which can be interpreted as a
truncation of the simple chain. These come with trade-offs, which we evaluate
both theoretically and empirically. At a high level, the simple chain is
guaranteed to converge to the desired stationary distribution but may mix
slowly. In contrast, the reduced chain may not converge to the desired
distribution, but empirically performs better in our setting. Neither comes with
strong theoretical guarantees; this is to be expected, since \sampprobname\ is
provably hard in the general case.

\paragraph*{Preliminaries.}

As is standard for MCMC, we will evaluate our algorithms in terms of the
computation required to produce an approximately random sample from $\pi$ on
$\mcX$. For distributions $\sigma, \sigma'$ over $\mcX$, define $\dTV$ to be the
total variation distance:
\begin{align*}
  \numberthis \label{eq:tvd}
  \dTV(\sigma, \sigma') \triangleq \frac{1}{2} \|\sigma - \sigma'\|_1 =
  \max_{\mc S \subset \mcX} |\sigma(\mc S) - \sigma'(\mc S)|.
\end{align*}
When using MCMC to sample from a state space, we typically begin with a starting
state chosen from some initial distribution $\sigma_0$, randomly transition for
$t$ steps, and return the final state. Let $\bX$ denote a random sample
generated this way, and
let $\sigma_{\bX}$ be
its distribution. Our goal is to produce an $\varepsilon$-approximate sample
from a target distribution $\sigma$:\footnote{Throughout this paper, we will
choose the target distribution to be $\pi$ as defined in~\Cref{eq:piD-def}.}
\begin{Defi}[$\varepsilon$-approximate sample]
  A random variable $\bX$ with distribution $\sigma_{\bX}$ is an
  $\varepsilon$-approximate sample from a distribution $\sigma$ if $\dTV(\sigma,
  \sigma_{\bX}) \le \varepsilon$.
\end{Defi}
This is closely related to the \textit{mixing time} of a Markov
chain with transition matrix $P$ with stationary distribution $\sigma$:
\begin{align*}
  \mixingtime(\varepsilon; P) \triangleq \min \left\{ t \in \Zplus :
  \max_{\sigma_0} \dTV(\sigma_0 P^t, \sigma) \le \varepsilon \right\}.
\end{align*}
However, because the algorithms we present each have a known initial state
$\sigma_0$, we are not interested in the worst-case over all $\sigma_0$. Instead, we
use a variant of the mixing time given a known starting distribution $\sigma_0$
with probability 1 on initial state $\bx_0$ and 0 elsewhere. Define
\begin{align*}
  \numberthis \label{eq:mixing-def}
  \mixingtimestate \triangleq \min \left\{ t \in \Zplus : \dTV(\sigma_0 P^t,
    \sigma)
  \le \varepsilon \right\}
\end{align*}
to be the number of iterations required to generate an $\varepsilon$-approximate
sample from the stationary distribution $\sigma$.
Standard spectral techniques yield bounds for $\mixingtimestate$. Let $P$ be the
transition matrix of an irreducible Markov chain. Assume that the chain is
``lazy,'' i.e., $P(\bx, \bx) \ge 1/2$ for all $\bx$.\footnote{This guarantees
that all eigenvalues are nonnegative.} Let $\lambda_2(P)$ be the second-largest
eigenvalue of $P$. Then, the \textit{relaxation time} of $P$ is defined as
$\relaxation \triangleq \frac{1}{1-\lambda_2(P)}$. Classical results tell us
that $\relaxation$ provides a tight characterization of $\mixingtimestate$:
\begin{theorem}[E.g., {\citet[Prop. 2.3]{guruswami2016rapidly}}; see also
  {\citet[Thm. 12.5]{levin2017markov}}]
  For an irreducible, aperiodic Markov chain,
  \begin{align*}
    \numberthis \label{eq:mt-bounds}
    \p{\relaxation - 1} \log \p{\frac{1}{2\varepsilon}} \le
    \mixingtimestate \le \relaxation \log
    \p{\frac{1}{\varepsilon \sigma(\bx_0)}}.
  \end{align*}
\end{theorem}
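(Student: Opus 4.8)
The plan is to run the standard spectral argument for reversible chains, which applies here because the lazy, Metropolis-type chains we construct are reversible with respect to $\sigma$. Work in $L^2(\sigma)$ and diagonalize $P$: its eigenvalues are real, $1 = \lambda_1 > \lambda_2 \ge \cdots \ge \lambda_{|\mcX|} \ge 0$, where the nonnegativity of the bottom eigenvalue is exactly what laziness buys, so that the spectral gap is controlled entirely by $\lambda_2$ and $\relaxation = 1/(1-\lambda_2)$. Fix a $\sigma$-orthonormal basis of eigenfunctions $f_1 \equiv 1, f_2, \dots, f_{|\mcX|}$.

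For the upper bound (the right-hand inequality) I expand $\by \mapsto P^t(\bx_0,\by)/\sigma(\by)$ in this basis and use orthonormality together with Parseval:
\[
  \sum_{\by} \sigma(\by)\p{\frac{P^t(\bx_0,\by)}{\sigma(\by)} - 1}^2 = \sum_{j \ge 2} \lambda_j^{2t}\, f_j(\bx_0)^2 \le \lambda_2^{2t} \sum_{j\ge 2} f_j(\bx_0)^2 = \lambda_2^{2t}\p{\frac{1}{\sigma(\bx_0)} - 1}.
\]
Cauchy--Schwarz turns the left-hand side into an upper bound on $4\,\dTV(P^t(\bx_0,\cdot),\sigma)^2$, so $2\,\dTV(P^t(\bx_0,\cdot),\sigma) \le \lambda_2^t/\sqrt{\sigma(\bx_0)}$. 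It then suffices to take $t$ with $\lambda_2^t \le 2\varepsilon\sqrt{\sigma(\bx_0)}$; since $\lambda_2^t \le e^{-t(1-\lambda_2)} = e^{-t/\relaxation}$ and $\varepsilon\sigma(\bx_0) \le 2\varepsilon\sqrt{\sigma(\bx_0)}$, the value $t = \relaxation\log\p{1/(\varepsilon\sigma(\bx_0))}$ works.

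For the lower bound (the left-hand inequality) I use the same eigenvalue estimate, but evaluated on the single slowest mode rather than summed over all of them. Take $f_2$ (so $\sigma(f_2) = 0$ by orthogonality to the constants) and a state $\bx^\star$ attaining $\|f_2\|_\infty$. From $P^t f_2 = \lambda_2^t f_2$ and $\sigma(f_2)=0$,
\[
  \lambda_2^t f_2(\bx^\star) = (P^t f_2)(\bx^\star) = \sum_{\by}\p{P^t(\bx^\star,\by) - \sigma(\by)}\, f_2(\by),
\]
and bounding the right-hand side by $2\|f_2\|_\infty\,\dTV(P^t(\bx^\star,\cdot),\sigma)$ gives $\dTV(P^t(\bx^\star,\cdot),\sigma) \ge \tfrac12\lambda_2^t$. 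Hence reaching distance $\varepsilon$ from $\bx^\star$ forces $\lambda_2^t \le 2\varepsilon$, i.e.\ $t \ge \log\p{1/(2\varepsilon)}/\log(1/\lambda_2) \ge (\relaxation-1)\log\p{1/(2\varepsilon)}$, using $\log(1/\lambda_2) \le (1-\lambda_2)/\lambda_2$. As usual this is really a statement about the worst-case starting state, so in $\mixingtimestate$ one reads $\bx_0$ as this extremal $\bx^\star$ (or as a maximum over starting states).

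The only arithmetic inputs are the elementary bounds $1-x \le \log(1/x) \le (1-x)/x$ on $(0,1]$ and the Parseval identity $\sum_{j\ge2} f_j(\bx_0)^2 = 1/\sigma(\bx_0) - 1$, both routine. The one genuine care point — and the closest thing to an obstacle — is the spectral setup itself: one must verify reversibility so that $P$ has real eigenvalues and a $\sigma$-orthonormal eigenbasis, keep the $\sigma$-weighted inner product straight throughout, and, for the lower bound, remember that no bound of this shape can hold for literally every $\bx_0$ (a cleverly chosen fast-mixing start defeats it), which is why it is phrased via an extremal state. Beyond that, the two directions are the two halves of a single eigenvalue computation: aggregated over all nontrivial eigenfunctions for the upper bound, and localized at the slowest one for the lower bound.
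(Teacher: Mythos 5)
The paper does not supply a proof of this statement; it is a quoted result, cited to \citet[Prop.~2.3]{guruswami2016rapidly} and \citet[Thm.~12.5]{levin2017markov}. Your argument is precisely the standard spectral argument that those references give, and it is correct: the chain is reversible and lazy, so $P$ has a $\sigma$-orthonormal eigenbasis with spectrum in $[0,1]$; the upper bound follows from the $L^2(\sigma)$ contraction together with Parseval (giving $\sum_{j\ge 2} f_j(\bx_0)^2 = 1/\sigma(\bx_0)-1$) and Cauchy--Schwarz; the lower bound follows by pairing $f_2$ against $P^t(\bx^\star,\cdot)-\sigma$ at its extremal state and using $\log(1/\lambda_2)\le(1-\lambda_2)/\lambda_2$. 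Since there is no in-paper proof to compare against, there is nothing to flag on the comparison front.

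The one point worth spelling out — which you already identified — is that the left-hand inequality as written in \cref{eq:mt-bounds} does not hold for an arbitrary fixed $\bx_0$: it is really a statement about the worst-case starting state, and both cited sources state it that way (the lower bound is on $\max_{\bx_0}\mixingtimestate$). The paper's notation, which fixes a single $\bx_0$ on both sides, is a mild abuse; the paper tacitly acknowledges this in a footnote in \Cref{sec:simple} (``the starting state will not matter for the lower bounds we show''). Your phrasing, reading $\bx_0$ in the lower bound as the extremal $\bx^\star$, is the accurate reading of what the citation actually delivers.
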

With these definitions, we are ready to describe and analyze our two MCMC-based
algorithms.

\subsection{The simple chain}
\label{sec:simple}

\subsubsection{Defining the simple chain}

We begin by adapting the algorithm of \citet{dyer1993mildly} which samples
(approximately) uniformly from the set of feasible knapsack solutions $\mc Y
\triangleq \{\bx : \bV \bx \preceq \bc\}$. Our adaptation has two key
differences: we are interested in exact solutions (i.e., $\mcX = \{\bx : \bV \bx
= \bc\}$), and we want to sample from a particular distribution $\pi$ as defined
in \Cref{eq:piD-def}. At a high level, we will design a Markov chain $\Mgamma$
parameterized by $\gamma \in \R$ with stationary distribution $\tpg$:
\begin{align*}
  \numberthis \label{eq:tpg-def}
  \tpg(\bx) \propto f(\bx) \exp(-\gamma \|\bV \bx - \bc \|_1)
\end{align*}
over $\mc Y$. Observe that the conditional distribution of $\tpg$ over $\mcX$ is
$\pi$, i.e., $(\tpg \given \bx \in \mcX) = \pi$. As a result, if we can sample
efficiently from $\tpg$, we can use rejection sampling to generate samples from
$\pi$ over $\mcX$. In other words, our plan will be to repeatedly generate
samples from $\mc Y$ according to $\tpg$ and accept the first sample that
happens to lie in $\mcX \subset \mc Y$. Our choice of $\tpg$ to penalize
lower-quality solutions is a standard technique (see, e.g.,
\citet{porod2021dynamics}), and $\gamma$ is often referred to as an ``inverse
temperature'' parameter. We next specify $\Mgamma$ with the desired stationary
distribution $\tpg$.

We begin with a few definitions. For $\bx, \bx' \in \mc Y$, let $h(\bx, \bx')$
denote the Hamming distance, or the number of entries in which $\bx$ and $\bx'$
disagree (i.e., $h(\bx, \bx') \triangleq \|\bx - \bx'\|_0$). For $\bx, \bx' \in
\mc Y$ such that $h(\bx, \bx') = 1$, let $\delta(\bx, \bx')$ be the index $i$ on
which they disagree, so $\bx[i] \ne \bx'[i]$. Finally, let $\Delta(\bx, i)
\triangleq \{\bx_{i \gets g} : \bV \bx_{i \gets g} \preceq \bc, ~ g \in
\Zplus\}$. (Recall that $\bx_{i \gets g}$ denotes replacing the $i$th entry of
$\bx$ with the value $g$.) Intuitively, $\Delta(\bx, i)$ yields the set of
feasible knapsack solutions obtained by changing the $i$th entry of $\bx$ to
some nonnegative integer $g$. ($g$ need not differ from the existing $i$th entry
of $\bx$.) Using a variant of Gibbs sampling, we define
our Markov chain $\Mgamma$ over the state space $\mc Y$ to have transition
probabilities
\begin{align*}
  \Pgamma(\bx, \bx') \triangleq
  \begin{cases}
    \frac{\tpg(\bx')}{2n\sum_{\bx'' \in \Delta(\bx, i)} \tpg(\bx'')}
    & h(\bx, \bx') = 1 \wedge \delta(\bx, \bx') = i \\
    0 & h(\bx, \bx') > 1 \\
    1 - \sum_{\bx'' \ne \bx} \Pgamma(\bx, \bx'') & \bx = \bx'
  \end{cases}.
\end{align*}
In what follows, we describe useful properties of $\Pgamma$ and provide
additional intuition behind it.

\subsubsection{Properties of the simple chain}

In order to show that running MCMC on $\Mgamma$ yields an
$\varepsilon$-approximate sample from $\tpg$, we will show that:
\begin{itemize}
  \item $\Mgamma$ is lazy and irreducible.
  \item The stationary distribution of $\Mgamma$ is $\tpg$.
  \item We can implement the transitions $\Pgamma$ with an efficient algorithm.
\end{itemize}
First, observe that by construction, $\Mgamma$ is ``lazy,'' meaning
$\Pgamma(\bx, \bx) \ge 1/2$. This is because
\begin{align*}
  \Pgamma(\bx, \bx) &= 1 - \sum_{\bx' \ne \bx} \Pgamma(\bx, \bx') \\
  &= 1  -\sum_{i=1}^n \sum_{\bx' \in \Delta(\bx, i) \backslash \{\bx\}}
  \frac{\tpg(\bx')}{2n\sum_{\bx'' \in \Delta(\bx, i)} \tpg(\bx'')} \\
  &\ge 1 - \sum_{i=1}^n \sum_{\bx' \in \Delta(\bx, i)}
  \frac{\tpg(\bx')}{2n\sum_{\bx'' \in \Delta(\bx, i)} \tpg(\bx'')} \\
  &= \frac{1}{2}.
\end{align*}
Moreover, any state is reachable from any other state since all states are
connected to the empty solution $\bx = \mathbf{0}$. The simple chain is thus
both lazy and irreducible, which guarantees that $\Mgamma$ is aperiodic and
$\Pgamma$ has nonnegative eigenvalues.

Next, we will show that $\tpg$ is the stationary distribution of $\Mgamma$. For
an aperiodic, irreducible Markov chain, a distribution $\sigma$ that satisfies
the so-called ``detailed balance equations'' given by \cref{eq:detailed-balance}
is its unique stationary distribution (see, e.g., \citet[Cor. 1.17 and Prop.
1.20]{levin2017markov}): For all $\bx, \bx' \in \mcX$,
\begin{align*}
  \numberthis \label{eq:detailed-balance}
  \sigma(\bx) P(\bx, \bx') = \sigma(\bx') P(\bx', \bx).
\end{align*}
In our case, this is clearly true for $\bx = \bx'$ and when $h(\bx, \bx') > 1$
(since $\Pgamma(\bx, \bx') = \Pgamma(\bx', \bx) = 0$ in this case). For $h(\bx,
\bx') = 1$, let $i = \delta(\bx, \bx')$. Note that $\Delta(\bx, i) =
\Delta(\bx', i)$ by construction. Therefore,
\begin{align*}
  \tpg(\bx) \Pgamma(\bx, \bx')
  &= \frac{\tpg(\bx) \tpg(\bx')}{2n\sum_{\bx'' \in \Delta(\bx, i)}
  \tpg(\bx'')}
  = \frac{\tpg(\bx) \tpg(\bx')}{2n\sum_{\bx'' \in \Delta(\bx', i)} \tpg(\bx'')}
  = \tpg(\bx') \Pgamma(\bx', \bx).
\end{align*}
Thus, $\tpg$ is the stationary distribution of $\Mgamma$.

Finally, we must show that given $\bx$, we can efficiently sample from
$\Pgamma(\bx, \cdot)$. The following algorithm does so: with probability $1/2$,
remain at $\bx$. With the remaining probability $1/2$, choose $i \in [n]$
uniformly at random and enumerate the set $\Delta(\bx, i)$. (In our instances,
entries of $\bc$ are on the order of hundreds at most, meaning $|\Delta(\bx,
i)|$ is relatively small.) Then, sample $\bx'$ proportional to $\tpg(\cdot)$
from $\Delta(\bx, i)$ and transition to $\bx'$.\footnote{For general $\sigma$,
this entire process may require time proportional to $|\Delta(\bx, i)|$.
However, our implementation takes advantage of the structure of $\tpg(\cdot)$ to
do this in time proportional to $\log(|\Delta(\bx, i)|)$.} We write this
formally as \Cref{alg:gibbs} in \Cref{app:simple-alg}.

\subsubsection{The number of samples needed for rejection sampling}

To generate a sample from $\mcX$ (not just from $\mc Y \supset \mcX$), given
some hyperparameter $t$, we run $\Mgamma$ for $t$ iterations beginning with the
empty solution as the start state (i.e., $\bx_0 = \mathbf{0}$).\footnote{When we
  evaluate the simple chain, the starting state will not matter for the lower
bounds we show.} This yields a sample $\bx \sim \pi_{\gamma,t}$, where we define
$\pi_{\gamma,t} \triangleq \pi_0 \Pgamma^t$ to be the distribution of the random
walk after $t$ steps. If $\bx \in \mcX$, then we return $\bx$; if not, we reject
the sample and repeat the process until we find some $\bx \in \mcX$. See
\Cref{alg:gibbs} for details.

Define $\solutiondensity(t) \triangleq \pi_{\gamma,t}(\mcX) = \Pr_{\bX \sim
\pi_{\gamma,t}}[\bX \in \mcX]$ to be the probability that a sample from
$\pi_{\gamma,t}$ is in $\mcX$. Because the number of samples needed for
rejection sampling is geometrically distributed with parameter
$\solutiondensity(t)$, the total expected number of Markov chain iterations to
produce a sample from $\mcX$ is $t/\solutiondensity(t)$. To fully specify our
algorithm, we must choose $t$ to be sufficiently large to produce an
$\varepsilon$-approximate sample from $\mcX$.

We might think that it suffices to choose $t \ge \mixingtime(\varepsilon;
\Pgamma, \bx_0)$, since this implies that $\dTV(\pi_{\gamma,t}, \tpg) \le
\varepsilon$. However, this is insufficient: We want an
$\varepsilon$-approximate sample from $\mcX$, not from $\mc Y$. Roughly
speaking, a sample with approximation error $\varepsilon$ for $\tpg$ may have
approximation error on the order $\varepsilon / \tpg(\mcX)$ on $\mcX$. We
formalize this in \Cref{lem:eps-sol-density}, which we prove in
\Cref{app:simple-theory}.
\begin{restatable}{lemma}{epssoldensity}
  \label{lem:eps-sol-density}
  Let $\sigma$ and $\sigma'$ be distributions defined on $\mc Y$. Let
  $\sigma_{\mcX}$ and $\sigma_{\mcX}'$ be their respective conditional
  distributions on $\mcX \subset \mc Y$, i.e., for $\bx \in \mcX$,
  $\sigma_{\mcX}(\bx) = \sigma(\bx) / \sigma(\mcX)$. If $\dTV(\sigma, \sigma')
  \le \varepsilon$, then $\dTV(\sigma_{\mcX}, \sigma_{\mcX}') \le
  3\varepsilon/(2\sigma(\mcX))$. For any $\varepsilon$, there exist instances
  for which this is tight to within a constant factor.
\end{restatable}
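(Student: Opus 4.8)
The plan is to prove the upper bound by a direct triangle-inequality decomposition of the difference of the two conditional densities, and to establish tightness with a small explicit example.

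For the upper bound, write $a = \sigma(\mcX)$ and $a' = \sigma'(\mcX)$. First I would record two elementary consequences of $\dTV(\sigma,\sigma') \le \varepsilon$: (i) $\sum_{\bx \in \mcX} |\sigma(\bx) - \sigma'(\bx)| \le \|\sigma - \sigma'\|_1 = 2\dTV(\sigma,\sigma') \le 2\varepsilon$; and (ii) $|a - a'| = |\sigma(\mcX) - \sigma'(\mcX)| \le \dTV(\sigma,\sigma') \le \varepsilon$, using the event-based characterization of $\dTV$ in \Cref{eq:tvd}. Then for $\bx \in \mcX$ I would split
\[
  \sigma_{\mcX}(\bx) - \sigma_{\mcX}'(\bx) = \frac{\sigma(\bx)}{a} - \frac{\sigma'(\bx)}{a'} = \frac{\sigma(\bx) - \sigma'(\bx)}{a} + \sigma'(\bx)\left(\frac{1}{a} - \frac{1}{a'}\right),
\]
sum absolute values over $\bx \in \mcX$, and bound the two groups separately: the first is at most $2\varepsilon/a$ by (i), and the second equals $\left|\tfrac1a - \tfrac1{a'}\right| \sum_{\bx \in \mcX}\sigma'(\bx) = \tfrac{|a-a'|}{a a'}\cdot a' = \tfrac{|a-a'|}{a} \le \varepsilon/a$ by (ii). Adding and dividing by $2$ gives $\dTV(\sigma_{\mcX}, \sigma_{\mcX}') \le \tfrac{3\varepsilon}{2a} = \tfrac{3\varepsilon}{2\sigma(\mcX)}$.

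For tightness, I would take $\mc Y = \{u,v,w\}$ with $\mcX = \{u,v\}$, fix a target mass $a = \sigma(\mcX) \in (\varepsilon, 1)$, and set $\sigma(u) = a/2 + \varepsilon/2$, $\sigma(v) = a/2 - \varepsilon/2$, $\sigma(w) = 1-a$, with $\sigma'$ obtained by swapping the masses on $u$ and $v$. Then $\dTV(\sigma,\sigma') = \varepsilon$, while $\sigma_{\mcX}$ places mass $\tfrac12 \pm \tfrac{\varepsilon}{2a}$ on $u,v$ and $\sigma_{\mcX}'$ the reverse, so $\dTV(\sigma_{\mcX},\sigma_{\mcX}') = \varepsilon/a = \varepsilon/\sigma(\mcX)$, which matches the upper bound up to the constant factor $2/3$. (The restriction $\varepsilon < a$ is exactly the regime in which the bound is nontrivial.)

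There is no deep difficulty here; the only point requiring care is keeping the ``numerator perturbation'' and the ``normalization perturbation'' separate in the decomposition so that each is controlled by its own clean $\varepsilon$-bound — naively bounding $|a-a'|$ by $\sum_{\bx\in\mcX}|\sigma(\bx)-\sigma'(\bx)|$ would only yield the weaker constant $2$. It is also worth noting (to explain why the lemma claims tightness only up to a constant) that the two error sources cannot both be simultaneously extremal: all of the $\dTV$ mass lying inside $\mcX$ forces $a = a'$, so the true optimal constant lies strictly between $1$ and $3/2$.
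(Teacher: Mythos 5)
Your proof is correct and takes essentially the same approach as the paper: the same add-and-subtract of the intermediate term $\sigma'(\bx)/\sigma(\mcX)$, yielding the $2\varepsilon/\sigma(\mcX)$ plus $\varepsilon/\sigma(\mcX)$ split, and a tightness example achieving $\varepsilon/\sigma(\mcX)$ (the paper perturbs one point up and one down with everything else fixed, which gives the same $\sigma(\mcX)=\sigma'(\mcX)$ structure as your symmetric swap). Your closing remark on why the two error sources cannot both be extremal is a nice addition that the paper does not include.
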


Let $\solutiondensitystar \triangleq \lim_{t \to \infty} \solutiondensity(t) =
\tpg(\mcX)$. By \Cref{lem:eps-sol-density}, to generate an
$\varepsilon$-approximate sample from $\mcX$, it suffices to choose $t \ge
\mixinggammalb \triangleq \mixingtime(2\solutiondensitystar \varepsilon/3;
\Pgamma, \bx_0)$, since
\begin{align*}
  \dTV(\pi_{\gamma,t} \given \bx \in \mcX, \tpg \given \bx \in \mcX)
  \le \frac{3 \dTV(\pi_{\gamma,t}, \tpg)}{2 \tpg(\mcX)}
  \le \frac{3 (2\solutiondensitystar \varepsilon/3)}{2 \solutiondensitystar}
  = \varepsilon.
\end{align*}
In other words, a $2\solutiondensitystar \varepsilon/3$-approximate sample from
$\tpg$ over $\mc Y$ yields an $\varepsilon$-approximate sample on from $\pi$
over $\mcX$. This is tight to within a constant factor. Of course, we know
neither $\solutiondensitystar$ nor $\mixingtime(\cdot; \Pgamma, \bx_0)$
\textit{a priori}; we experimentally determine them for a subset of our
instances in \Cref{sec:simple-empirical}. Choosing $t$ to be $\mixinggammalb$,
the expected number of MCMC iterations needed to produce an
$\varepsilon$-approximate sample from $\mcX$ is
\begin{align*}
  \numberthis \label{eq:num-samp}
  \numsampgamma(\varepsilon) \triangleq
  \frac{\mixinggammalb}{\solutiondensity(\mixinggammalb)},
\end{align*}
where the numerator is the number of MCMC iterations needed per sample from $\mc
Y$ and the denominator is the probability that we accept a sample (i.e., the
probability that it lies in $\mcX$).

\subsubsection{Empirical results for the simple chain}
\label{sec:simple-empirical}

To evaluate the performance of \Cref{alg:gibbs}, our goal will be to determine
$\numsampgamma(\varepsilon)$ for our instances. To do so, we will bound both
$\mixinggammalb$ and $\solutiondensity(\mixinggammalb)$. Observe that because
$\dTV(\pi_{\gamma,\mixinggammalb}, \tpg) \le 2\solutiondensitystar \varepsilon/3$,
by definition of total variation distance \cref{eq:tvd},
\begin{align*}
  \numberthis \label{eq:sol-density-bounds}
  \p{1 - \frac{2\varepsilon}{3}} \solutiondensitystar \le
  \solutiondensity(\mixinggammalb) \le \p{1 + \frac{2
  \varepsilon}{3}} \solutiondensitystar.
\end{align*}
Let $\relaxationgamma \triangleq \frac{1}{1 - \lambda_2(\Pgamma)}$ be the
relaxation time of $\Mgamma$. Combining
\cref{eq:mt-bounds,eq:sol-density-bounds} yields
\begin{align*}
  \numberthis \label{eq:num-samp-lb}
  \numsamplbgamma(\varepsilon) \triangleq \frac{(\relaxationgamma-1) \log
  \p{\frac{3}{4\varepsilon \solutiondensitystar}}}{\p{1 +
  \frac{2\varepsilon}{3}}\solutiondensitystar}
  \le \numsampgamma(\varepsilon)
  \le \frac{\relaxationgamma \log \p{\frac{3}{2 \varepsilon \solutiondensitystar
  \tpg(\bx_0)} }}{\p{1 - \frac{2\varepsilon}{3}}\solutiondensitystar}
  \triangleq \numsampubgammaeps.
\end{align*}
We will use these tight lower and upper bounds for $\numsampgamma(\varepsilon)$
to characterize the performance of the simple chain. We choose $\varepsilon =
1/(2e)$ by convention and write $\numsampgamma = \numsampgamma(1/(2e))$.

\begin{figure}[ht]
  \centering
  \begin{subfigure}[b]{0.48\textwidth}
  \includegraphics[width=\textwidth]{./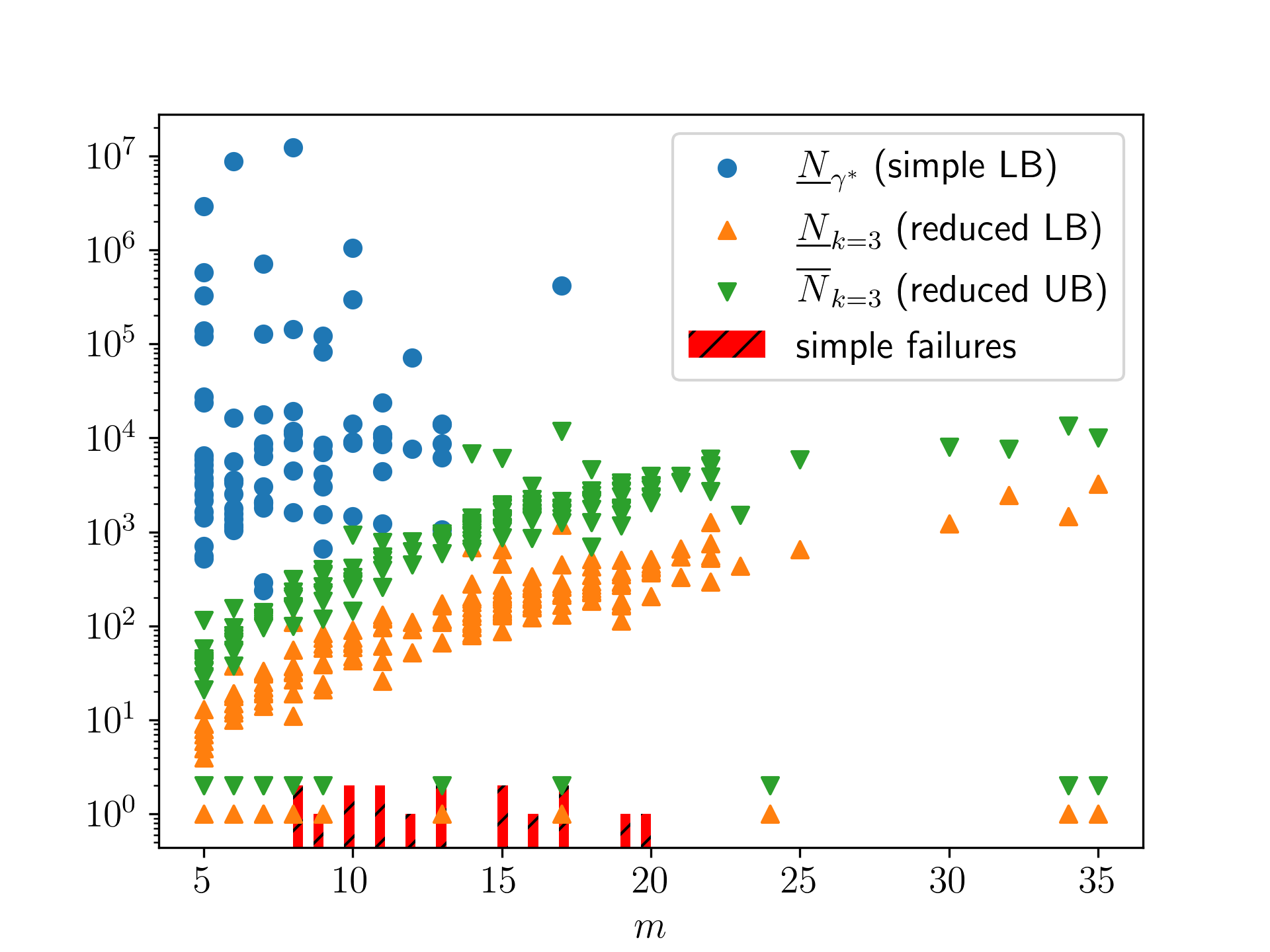}
    \caption{AL}
    \label{fig:all-mt-AL}
  \end{subfigure}
  \hfill
  \begin{subfigure}[b]{0.48\textwidth}
  \includegraphics[width=\textwidth]{./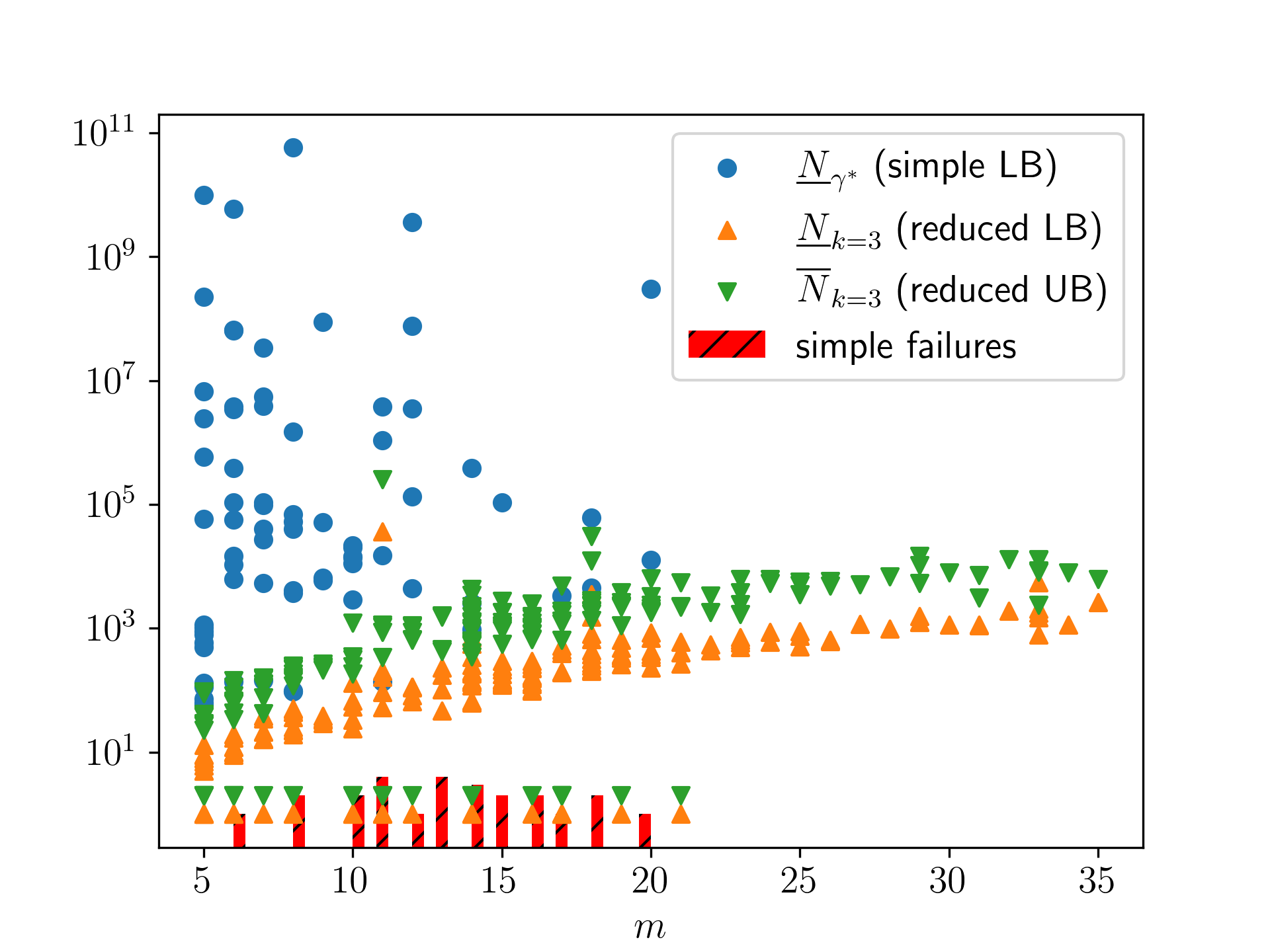}
    \caption{NV}
    \label{fig:all-mt-NV}
  \end{subfigure}
  \caption{There are blocks for which the number of iterations required to
    generate an $\varepsilon$-approximate sample from $\mcX$ using the simple
    chain is large. The reduced chain (described in \Cref{sec:reduced}) requires
  orders of magnitude fewer iterations in the worst case.}
  \label{fig:all-mt}
\end{figure}

To compute $\numsamplbgamma$ for a given Census block, we will need to compute
$\lambda_1(\Pgamma)$ and $\solutiondensitystar$. Unfortunately, this requires
$\Omega(|\mc Y|)$ time and space just to write $\Pgamma$. This can be
prohibitive for our instances since $\mc Y$ can be exponentially (in $n$) large. Our
approach is as follows: we choose a subset of ``small'' instances and compute
the full transition matrix $\Pgamma$ for a random sample of those instances. We
will show that in this sample, there exist instances for which $\numsamplbgamma$
is prohibitively large, making \Cref{alg:gibbs} impractical in our setting.

Recall that $m$ is the number of households in a Census block. We sample 100
blocks where $5 \le m \le 20$ and $|\mcX| < 5000$ in two states: Alabama and
Nevada (chosen arbitrarily). For each block, we seek to compute $\Pgamma$ for
$\gamma \in \Gamma \triangleq \{0.0, 0.2, 0.4, 0.6, 0.8, 1.0, 1.2\}$. For a
significant number of blocks (indicated on \Cref{fig:all-mt} in red bars; 25 in
NV and 17 in AL), we fail to compute $\Pgamma$ within a time limit of 8 hours.
In \Cref{app:additional-empirical}, we show that the globally optimal choice for
$\gamma$ on our sample is $\approx 0.8$.\footnote{Due to numerical instability
in computing eigenvalues of $\Pgamma$, we sometimes underestimate
$\numsamplbgamma$ for larger values of $\gamma$.}

In \Cref{fig:all-mt}, we plot $\numsamplbgammastar$ for the blocks in our
sample, where for each block, $\numsamplbgammastar \triangleq \min_{\gamma \in
\Gamma} \numsamplbgamma$ using blue \textcolor{NavyBlue}{$\CIRCLE$} markers.
(This is simply taking the ``best'' choice of $\gamma$ for each block.) Observe
that $\numsamplbgammastar$ can be quite large, even for small instances. There
are blocks with fewer than 10 households for which $\numsamplbgammastar$ is on
the order of $10^{11}$. In contrast, the ``reduced'' approach we will develop in
\Cref{sec:reduced} appears to perform much better: the analogous lower and upper
bounds for the reduced chain are small for all of our instances
(\Cref{fig:all-mt}, orange \textcolor{orange}{$\blacktriangle$} and green
\textcolor{ForestGreen}{\rotatebox[origin=c]{180}{$\blacktriangle$}} markers
respectively).\footnote{While we compare the number of MCMC iterations as
  opposed to computation time here, our experiments show that computation time
  per iteration is similar for the two Markov chains we consider. See
\Cref{app:additional-empirical} for details.}

\subsubsection{Theoretical results for the simple chain}

Before we describe this improved approach, we briefly provide a theoretical
characterization of how $\numsamplbgamma$ varies with $\gamma$ to gain some
intuition as to why the simple chain performs poorly and how $\numsampgamma$
depends on $\gamma$. Intuitively, we will see that varying $\gamma$ creates a
trade-off between $\solutiondensitystar$ and $\relaxationgamma$. (Recall that
$\relaxationgamma$ is the relaxation time of $\Mgamma$.) Since $\numsampgamma
\approx \relaxationgamma / \solutiondensitystar$, these opposing forces empirically
make the simple chain impractical in our setting.

We begin by characterizing how $\solutiondensitystar$ varies with $\gamma$. All
proofs are deferred to \Cref{app:simple-theory}.
\begin{restatable}{lemma}{soldensity}
  \label{lem:sol-density-increasing}
  $\solutiondensitystar$ is monotonically increasing in $\gamma$, and
  \begin{align*}
    \lim_{\gamma \to -\infty} \solutiondensitystar = 0
    &&\text{and}&&
    \lim_{\gamma \to \infty} \solutiondensitystar = 1.
  \end{align*}
\end{restatable}

As we might expect, large values of $\gamma$ make it more likely that we sample
an exact solution $\bx \in \mcX$, since larger values of $\gamma$ penalize
inexact solutions more.
Unfortunately, this comes at a cost, which
we show in \Cref{lem:mixing-time}: in the limit, as $\gamma$ increases,
$\relaxationgamma$ increases exponentially.\footnote{A stronger version of this
  result would claim that, analogously to \Cref{lem:sol-density-increasing},
  mixing time increases monotonically with $\gamma$. While this may be true,
proving monotonicity over the temperature parameter is notoriously difficult
(see, e.g.,~\citet{nacu2003glauber,kargin2011relaxation} and \citet[Ch. 26 Open
Question 2]{levin2017markov}).}
\begin{restatable}{lemma}{expmixingtime}
  \label{lem:mixing-time}
  For the stationary distribution $\tpg$ as defined in \Cref{eq:tpg-def},
  $\relaxationgamma = \Omega(\exp(\gamma \min_i \|\bv_i\|_1))$.
\end{restatable}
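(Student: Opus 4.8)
The plan is to lower-bound $\relaxationgamma = 1/(1-\lambda_2(\Pgamma))$ by showing the spectral gap is exponentially small, via the variational (Dirichlet-form) characterization of the gap of a reversible chain. Since $\Mgamma$ is reversible with respect to $\tpg$ (detailed balance was verified above), for every $f:\mc Y\to\R$ with $\mathrm{Var}_{\tpg}(f)>0$ we have $1-\lambda_2(\Pgamma)\le\mathcal E_\gamma(f,f)/\mathrm{Var}_{\tpg}(f)$, where $\mathcal E_\gamma(f,f)=\tfrac12\sum_{\bx,\bx'}\tpg(\bx)\Pgamma(\bx,\bx')(f(\bx)-f(\bx'))^2$ is the Dirichlet form (see, e.g., \citet[Ch.~13]{levin2017markov}). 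So it suffices to produce a test function $f$ with $\mathcal E_\gamma(f,f)/\mathrm{Var}_{\tpg}(f)=O(\exp(-\gamma\min_i\|\bv_i\|_1))$. I will assume $\mcX$ contains at least two distinct solutions; when $|\mcX|\le 1$ the chain need not be slow, and the statement is best read as applying to the instances of interest, where $|\mcX|$ is typically large.

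The structural fact that drives everything is that no two exact solutions are neighbors in $\Mgamma$: if $\bx,\bx'\in\mcX$ differ only in coordinate $i$, then $(\bx[i]-\bx'[i])\bv_i=\bV\bx-\bV\bx'=\mathbf 0$, forcing $\bx=\bx'$ since $\bv_i\ne\mathbf 0$. Hence any neighbor $\bx'$ of an exact solution $\bx\in\mcX$ with $\bx'\ne\bx$ is not exact; changing coordinate $i$ of $\bx$ by $k\ge 1$ changes $\bV\bx$ by $\pm k\bv_i$, so $\|\bV\bx'-\bc\|_1=k\|\bv_i\|_1\ge\min_i\|\bv_i\|_1$, and therefore $\tpg(\bx')\le f(\bx')\exp(-\gamma\min_i\|\bv_i\|_1)/Z_\gamma$, where $Z_\gamma$ is the normalizing constant of $\tpg$.

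Now fix a partition $\mcX=A\sqcup B$ into nonempty pieces and take $f=\mathbbm{1}_A$. Then $\mathcal E_\gamma(\mathbbm{1}_A,\mathbbm{1}_A)=Q_\gamma(A,\mc Y\setminus A)\triangleq\sum_{\bx\in A,\,\bx'\notin A}\tpg(\bx)\Pgamma(\bx,\bx')$. For each such ``cut'' edge, $\bx\in A\subseteq\mcX$ and $\bx'\ne\bx$, so $\bx'$ obeys the penalty bound above; moreover $\tpg(\bx)\Pgamma(\bx,\bx')\le\tpg(\bx')/(2n)$ because $\bx$ itself lies in $\Delta(\bx,i)$, making the denominator in the definition of $\Pgamma(\bx,\bx')$ at least $\tpg(\bx)$. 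Summing, and using that for a fixed $\bx'$ and a fixed coordinate $i$ there is at most one $\bx\in\mcX$ agreeing with $\bx'$ off coordinate $i$ (the equation $\bV\bx=\bc$ pins down $\bx[i]$), we get $\mathcal E_\gamma(\mathbbm{1}_A,\mathbbm{1}_A)\le\frac{\exp(-\gamma\min_i\|\bv_i\|_1)}{2Z_\gamma}\sum_{\bx'\in\mc Y}f(\bx')$. For the denominator, $\mathrm{Var}_{\tpg}(\mathbbm{1}_A)=\tpg(A)(1-\tpg(A))$; since every $\bx\in\mcX$ has zero penalty, $\sum_{\bx\in\mcX}f(\bx)\le Z_\gamma\le\sum_{\bx\in\mc Y}f(\bx)$ for $\gamma\ge 0$, so (writing $F\triangleq\sum_{\bx\in\mc Y}f(\bx)$) we get $\tpg(A)=\frac{\sum_{\bx\in A}f(\bx)}{Z_\gamma}\ge\frac{\sum_{\bx\in A}f(\bx)}{F}$ and $1-\tpg(A)\ge\frac{\sum_{\bx\in B}f(\bx)}{\sum_{\bx\in\mcX}f(\bx)}$, both bounded below by positive constants independent of $\gamma$. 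Dividing gives $1-\lambda_2(\Pgamma)\le C\exp(-\gamma\min_i\|\bv_i\|_1)$ for an instance-dependent constant $C$, i.e.\ $\relaxationgamma=\Omega(\exp(\gamma\min_i\|\bv_i\|_1))$.

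The crux — and the reason one cannot just take $f=\mathbbm{1}_{\mcX}$ — is keeping $\mathrm{Var}_{\tpg}(f)$ bounded away from $0$ uniformly as $\gamma\to\infty$: by \Cref{lem:sol-density-increasing}, $\tpg(\mcX)\to1$, so $\mathrm{Var}_{\tpg}(\mathbbm{1}_{\mcX})\to 0$, potentially as fast as the numerator degrades. Splitting $\mcX$ into two nonempty pieces resolves this, since each piece keeps $\Theta(1)$ stationary mass in the low-temperature limit, and the sandwich $\sum_{\bx\in\mcX}f(\bx)\le Z_\gamma\le F$ makes the relevant constants explicit and $\gamma$-free. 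The remaining ingredients — the non-adjacency of exact solutions, the per-edge estimate on $\tpg(\bx)\Pgamma(\bx,\bx')$, and the ``at most one preimage per coordinate'' counting — are routine. (Equivalently, this is a conductance bound: the set $A$ has bottleneck ratio $Q_\gamma(A,\mc Y\setminus A)/\tpg(A)=O(\exp(-\gamma\min_i\|\bv_i\|_1))$, and Cheeger's inequality gives $1-\lambda_2\le 2$ times this.)
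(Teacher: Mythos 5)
Your proposal is in the same conductance/Dirichlet-form family as the paper's, and the per-edge estimate $\tpg(\bx)\Pgamma(\bx,\bx')\le\tpg(\bx')/(2n)$ together with the ``at most one $\bx\in\mcX$ per coordinate $i$'' counting is a cruder but perfectly adequate replacement for the paper's more delicate computation showing $\tpg(\bx)/\tpg(\Delta(\bx,i))\ge 1-R\exp(-\gamma\ell)$. The genuine point of divergence is the choice of cut set, and it matters. The paper plugs $\mc S=\mcX$ into the bottleneck ratio $Q_\gamma(\mc S,\overline{\mc S})/\tpg(\mc S)$, but the conductance minimization it invokes ranges only over $\mc S$ with $\tpg(\mc S)\le 1/2$, and by \Cref{lem:sol-density-increasing} we have $\tpg(\mcX)\to 1$ as $\gamma\to\infty$ --- exactly the regime the lemma addresses. (Equivalently, $\mathrm{Var}_{\tpg}(\mathbbm{1}_{\mcX})=\tpg(\mcX)\tpg(\overline{\mcX})$ collapses as fast as the Dirichlet form, so the ratio need not vanish.) Your move of splitting $\mcX=A\sqcup B$ into two nonempty pieces, so that both $\tpg(A)$ and $1-\tpg(A)$ remain $\Theta(1)$ uniformly in $\gamma\ge 0$ via the sandwich $\sum_{\bx\in\mcX}f(\bx)\le Z_\gamma\le \sum_{\bx\in\mc Y}f(\bx)$, repairs this cleanly, at the modest cost of the (reasonable and explicitly flagged) assumption $|\mcX|\ge 2$. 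So this is not just a stylistic variant: your cut is what makes the argument go through at large $\gamma$.

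One small slip to fix: the stated lower bound $1-\tpg(A)\ge \sum_{\bx\in B}f(\bx)\big/\sum_{\bx\in\mcX}f(\bx)$ has the inequality the wrong way, since $Z_\gamma\ge\sum_{\bx\in\mcX}f(\bx)$ means dividing by $\sum_{\bx\in\mcX}f(\bx)$ gives an \emph{upper} bound on $\tpg(B)$. The correct chain is $1-\tpg(A)\ge\tpg(B)=\sum_{\bx\in B}f(\bx)/Z_\gamma\ge \sum_{\bx\in B}f(\bx)/F$ with $F=\sum_{\bx\in\mc Y}f(\bx)$, which is still a $\gamma$-free positive constant, so your conclusion is unaffected.
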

We prove \Cref{lem:mixing-time} using a conductance argument and Cheeger's
inequality~\citep{lawler1988bounds,jerrum1988conductance}. Taken together,
\Cref{lem:sol-density-increasing,lem:mixing-time} describe the trade-off in our
choice of $\gamma$: for small values of $\gamma$, samples from $\tpg$ rarely
fall in $\mcX$, making rejection sampling inefficient. For large values of
$\gamma$, the mixing time of the simple chain increases exponentially, requiring
many iterations to generate each sample. These results tell us that the optimal
choice of $\gamma$ is finite.

\begin{restatable}{lemma}{finitegamma}
  \label{lem:finite-gamma}
  For every instance, there is some finite $\gamma$ that minimizes $\numsampgamma$.
\end{restatable}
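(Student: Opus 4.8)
The plan is to show that $\numsampgamma$ cannot be minimized in either limit $\gamma \to \pm\infty$, and then to invoke continuity plus a compactness argument to conclude that the infimum is attained at some finite $\gamma$.

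First I would pin down the behavior as $\gamma \to \infty$. By \Cref{lem:mixing-time}, $\relaxationgamma = \Omega(\exp(\gamma \min_i \|\bv_i\|_1))$, and since $\min_i \|\bv_i\|_1 \ge 1$ (the vectors $\bv_i$ are nonzero nonnegative integer vectors), $\relaxationgamma \to \infty$. Combining with the lower bound in \Cref{eq:mt-bounds}, $\mixinggammalb \ge (\relaxationgamma - 1)\log(1/(2\cdot 2\solutiondensitystar\varepsilon/3))$ also diverges (note $\solutiondensitystar \le 1$, so the log factor is bounded below by a positive constant once $\gamma$ is large enough that $\solutiondensitystar$ is bounded away from $0$, which holds by \Cref{lem:sol-density-increasing}). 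Since $\solutiondensity(\mixinggammalb) \le 1$ always, we get $\numsampgamma = \mixinggammalb / \solutiondensity(\mixinggammalb) \ge \mixinggammalb \to \infty$. So large $\gamma$ is bad.

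Next I would handle $\gamma \to -\infty$. Here \Cref{lem:sol-density-increasing} gives $\solutiondensitystar \to 0$. I want to conclude $\numsampgamma \to \infty$, and the cleanest route is the lower bound $\numsamplbgamma(\varepsilon)$ from \Cref{eq:num-samp-lb}: it contains the factor $\frac{1}{(1+2\varepsilon/3)\solutiondensitystar}$, which blows up as $\solutiondensitystar \to 0$, while $(\relaxationgamma - 1)\log\p{\frac{3}{4\varepsilon\solutiondensitystar}} \ge 0$ (and in fact the log factor only grows). Since $\relaxationgamma \ge 1$ always, we need $\relaxationgamma - 1$ not to vanish faster than $\solutiondensitystar$; but actually for $\gamma \le \gamma_0$ fixed, $\relaxationgamma$ is bounded below by some constant strictly larger than $1$ (the chain at any finite $\gamma$ is not the identity, so $\lambda_2 < 1$ strictly, and by continuity of eigenvalues $\relaxationgamma$ stays bounded away from $1$ on compact $\gamma$-intervals) — alternatively, one can just use $\numsampgamma \ge \mixinggammalb / \solutiondensity(\mixinggammalb)$ directly together with $\solutiondensity(\mixinggammalb) \le (1+2\varepsilon/3)\solutiondensitystar \to 0$ and $\mixinggammalb \ge 1$. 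Either way, $\numsampgamma \to \infty$ as $\gamma \to -\infty$.

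Finally, I would argue $\numsampgamma$ is continuous (indeed finite) in $\gamma$ on all of $\R$: the entries of $\Pgamma$ are continuous functions of $\gamma$ (they are ratios of sums of $\exp$'s), so $\lambda_2(\Pgamma)$ is continuous, hence $\relaxationgamma$ is continuous and finite; $\solutiondensitystar = \tpg(\mcX)$ is continuous and strictly positive for every finite $\gamma$ (since $\mcX \subseteq \mc Y$ is nonempty whenever we have an instance at all — recall we only run the chain on feasible instances); and $\solutiondensity(\mixinggammalb)$ likewise stays strictly positive. Thus $\numsampgamma < \infty$ for each finite $\gamma$. Having shown $\numsampgamma \to \infty$ at both ends, pick any $\gamma_0$ and let $C = \numsampgamma[\gamma_0]$ — wait, I should write this without subscript abuse: fix any finite reference value $\gamma'$ with value $v' \triangleq \numsampgamma[\gamma']$. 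Then there is a bounded closed interval $[a,b]$ outside of which $\numsampgamma > v'$; the continuous function $\numsampgamma$ attains its minimum over $[a,b]$ by compactness, and that minimizer is a global minimizer. The main obstacle is the bookkeeping in the $\gamma \to -\infty$ case: one must ensure the $\relaxationgamma - 1$ factor (or equivalently the use of $\mixinggammalb \ge 1$) does not conspire with $\solutiondensitystar \to 0$ to keep $\numsampgamma$ bounded; routing through $\numsamplbgamma$ from \Cref{eq:num-samp-lb} and the strict positivity of the relaxation-time gap on compact intervals handles this cleanly.
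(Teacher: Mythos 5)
Your proof follows the same overall route as the paper's: show $\numsampgamma \to \infty$ as $\gamma \to +\infty$ via \Cref{lem:mixing-time}, show $\numsampgamma \to \infty$ as $\gamma \to -\infty$ via $\solutiondensitystar \to 0$ from \Cref{lem:sol-density-increasing}, and conclude that a finite minimizer exists. The paper is even more terse — it just plugs into \Cref{eq:num-samp-lb}, asserts both limits are infinite, and says ``since $\numsampgamma$ is finite for finite $\gamma$, it must be minimized by some finite $\gamma$.''

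Two points worth flagging. First, your attempt to shore up the $\gamma \to -\infty$ limit by bounding $\relaxationgamma$ away from $1$ ``on compact $\gamma$-intervals'' does not directly apply because $(-\infty, \gamma_0]$ is not compact; the alternative you offer — $\numsampgamma \ge \mixinggammalb / \solutiondensity(\mixinggammalb) \ge 1/\solutiondensity(\mixinggammalb)$ with $\solutiondensity(\mixinggammalb) \le (1+2\varepsilon/3)\solutiondensitystar \to 0$ and $\mixinggammalb \ge 1$ — is the cleaner route, and $\mixinggammalb \ge 1$ does hold whenever $\mathbf{0} \notin \mcX$ because $\dTV(\delta_{\mathbf 0}, \tpg) = 1 - \tpg(\mathbf 0) \ge \tpg(\mcX) = \solutiondensitystar > 2\solutiondensitystar\varepsilon/3$. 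Second, and more seriously, your claim that $\numsampgamma$ is continuous in $\gamma$ is false: $\numsampgamma = \mixinggammalb / \solutiondensity(\mixinggammalb)$ where $\mixinggammalb = \mixingtime(2\solutiondensitystar\varepsilon/3; \Pgamma, \bx_0)$ is integer-valued, so it jumps as $\gamma$ varies. Continuity of $\relaxationgamma$ and $\solutiondensitystar$ does not transfer to $\numsampgamma$ through the integer-valued mixing time. So the compactness argument you use to close the deal does not go through as written. To be fair, the paper's own closing sentence silently assumes exactly this kind of regularity (a finite function that diverges at $\pm\infty$ need not attain its infimum); you are to be credited for noticing that something is needed there, but the specific continuity claim you supply is the wrong tool, and the gap remains in both proofs.
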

\Cref{fig:num-samp-gamma} visualizes this trade-off. Each line represents a
different Census block. Red triangle \textcolor{red}{$\blacktriangle$} markers
show low-precision estimates for the spectral gap $1-\lambda_2$, which will
typically mean we underestimate $\numsamplbgamma$. Observe that each line
appears to be quasiconvex, with a minimum in our range of choices for $\gamma$.

\begin{figure}[ht]
  \centering
  \begin{subfigure}[b]{0.48\textwidth}
    \includegraphics[width=\textwidth]{./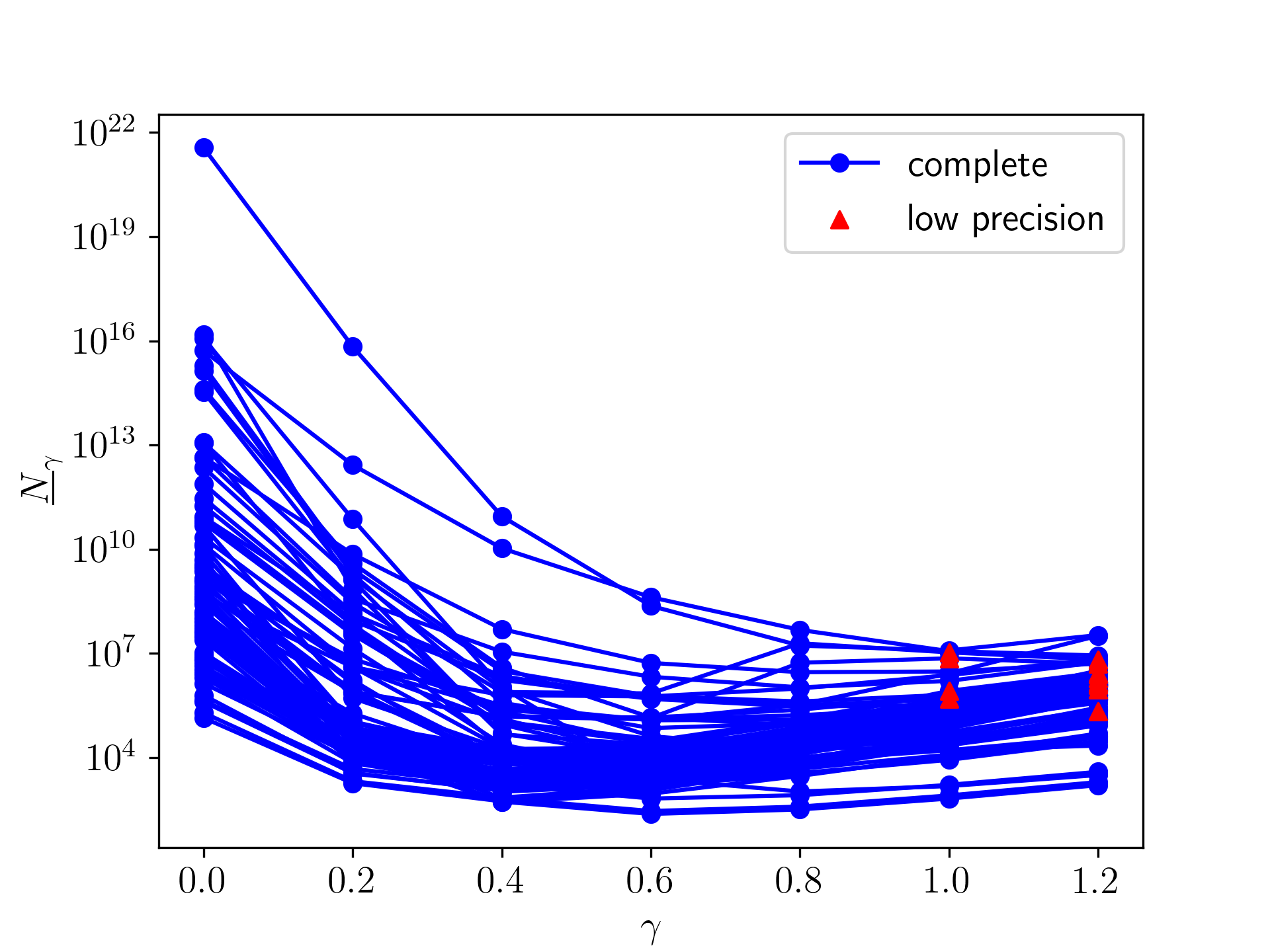}
    \caption{AL}
    \label{fig:num-samp-gamma-AL}
  \end{subfigure}
  \hfill
  \begin{subfigure}[b]{0.48\textwidth}
    \includegraphics[width=\textwidth]{./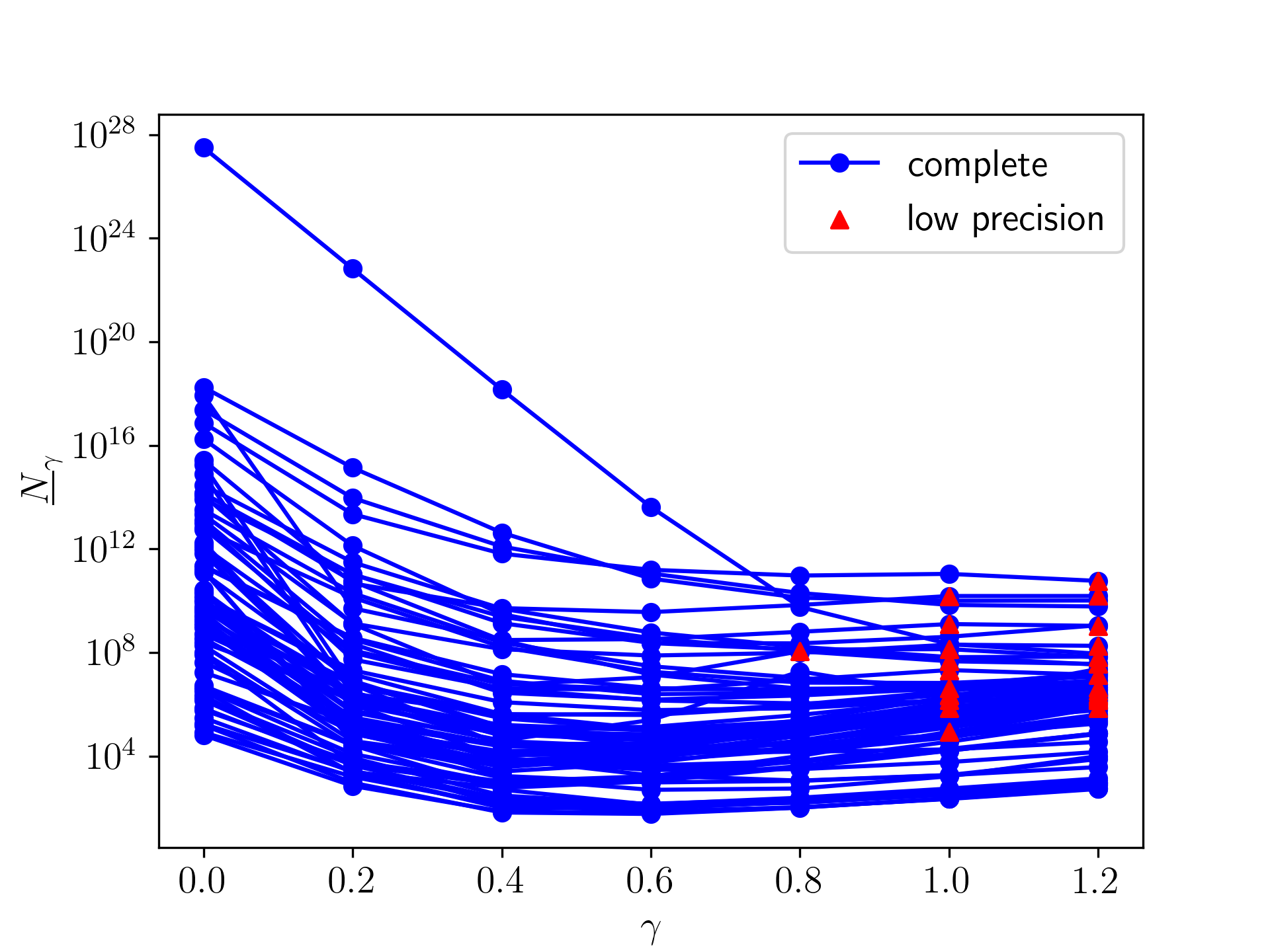}
    \caption{NV}
    \label{fig:num-samp-gamma-NV}
  \end{subfigure}
  \caption{$\numsamplbgamma$ as a function of $\gamma$.}
  \label{fig:num-samp-gamma}
\end{figure}

\subsection{The reduced chain}
\label{sec:reduced}

Part of the reason why the simple chain requires many iterations in expectation
to produce a valid sample from $\mcX$ is its potentially large state space:
because $\mc Y$ can be much larger than $\mcX$, a random walk according to
$\Pgamma$ may yield samples from $\mc Y \backslash \mcX$ with very high
probability. This motivates our approach: the ``reduced'' chain. We design a
Markov chain $\Mreduced$ with state space $\mcX$ instead of $\mc Y$.

\subsubsection{Defining the reduced chain}

The reduced chain is parameterized by an integer $k \ge 2$. Intuitively, given a
solution $\bx \in \mcX$, we randomly remove $k$ elements from $\bx$ and replace
them with another multiset of $k$ elements (found via ILP) such that the
resulting sum still exactly equals the constraint $\bc$. For small $k$, this can
be done fairly quickly. In our experiments, we use $k \in \{2, 3, 4\}$.

\def\pairs{\mc A}

We again use a variant of Gibbs sampling to induce the desired stationary
distribution. We define the following transition matrix $\Preduced$ for our
Markov chain $\Mreduced$. Let $\pairs(\bx, \bx')$ be the set of distinct pairs of
multisets $(\bz, \bz')$, each of size $k$, such that we can transform $\bx$ into
$\bx'$ by removing elements from $\bz$ and replacing them with elements from
$\bz'$. Formally, this is
\begin{align*}
  \pairs(\bx, \bx') \triangleq \{(\bz, \bz') \in \Zplus^n \times \Zplus^n: (\bV
    \bz = \bV \bz') \wedge (\bx - \bz = \bx' - \bz' \succeq \mathbf{0}) \wedge
  (\|\bz\|_1 = \|\bz'\|_1 = k)\}.
\end{align*}
Let $\mc Z(\bz) \triangleq \{\bz' \in \Zplus^n : (\bV \bz' = \bV \bz) \wedge
(\bz' \succeq \mathbf{0}) \wedge \|\bz'\|_1 = k\}$.\footnote{Equivalently, we
could write $\mc Z(\bz) = \{\bz' \in \Zplus^n : \exists \bx' \in \mcX \text{
such that }(\bz, \bz') \in \mc A(\bx, \bx')\}$.} Then, for $\bx, \bx' \in
\mcX$,\footnote{Because the elements in $\mc V$ are unique, it is impossible to
have $h(\bx, \bx') = 1$ for $\bx, \bx' \in \mcX$.}
\begin{align*}
  \Preduced(\bx, \bx') \triangleq
  \begin{cases}
    \frac{1}{2 \binom{m}{k}} \sum_{(\bz, \bz') \in \pairs(\bx, \bx')}
    \frac{\pi(\bx')}{\sum_{\bz'' \in \mc Z(\bz)} \pi(\bx - \bz + \bz'')} & 2 \le
    h(\bx, \bx') \le 2k \\
    0 & h(\bx, \bx') > 2k \\
    1 - \sum_{\bx'' \ne \bx} \Preduced(\bx, \bx'') & \bx = \bx'
  \end{cases}.
\end{align*}
Because $\mc Z(\bz) = \mc Z(\bz')$ for $(\bz, \bz') \in \pairs(\bx, \bx')$,
$\pi$ satisfies the detailed balance conditions $\pi(\bx) \Preduced(\bx, \bx') =
\pi(\bx') \Preduced(\bx', \bx)$, making it a stationary distribution of
$\Mreduced$. Unfortunately, $\pi$ is not necessarily the \textit{unique}
stationary distribution of $\Mreduced$, which we will discuss below. Note that
$\Preduced$ is lazy by construction. In \Cref{alg:reduced} in
\Cref{app:reduced-alg}, we show that we can sample efficiently from
$\Preduced(\bx, \cdot)$ as long as we can efficiently compute $\mc Z(\bz)$ (via
ILP), which empirically we can in our setting. The complexity of this
computation depends on the problem dimension $d$ and the parameter choice $k$,
\textit{not} on the total counts $\bc$.

\paragraph*{Choosing an initial state $\bx_0$.}

To fully specify the algorithm, we must choose an initial state $\bx_0$. In
general, we could find an arbitrary $\bx_0 \in \mcX$ via ILP; however, in order
to produce tighter bounds on performance, we aim to begin with a
``better-than-average'' $\bx_0$ (i.e., such that $\pi(\bx_0) \ge 1/|\mcX|$).
To see why, observe that the upper bounds in \Cref{eq:mt-bounds} depend on
$\log(1/\pi(\bx_0))$, so larger values of $\pi(\bx_0)$ will reduce the number of
MCMC iterations we need to produce a sample.
We defer these details to \Cref{app:reduced-alg}. Note that finding any $\bx_0
\in \mcX$ requires that we can solve the decision problem \probname\ via ILP
despite its NP-hardness; empirically, we can for all of our instances.

\paragraph*{The reduced chain is not necessarily irreducible.}

In an irreducible Markov chain, any state is reachable from any other state,
i.e., for any $\bx, \bx' \in \mcX$, there exists $t$ such that $\Preduced^t(\bx,
\bx') > 0$. This is not necessarily the case for $\Preduced$: it is possible to
have disconnected components in the state space graph (see \Cref{fig:components}
for an example). When this happens, $\pi$ is not the only stationary
distribution of $\Mreduced$, and our algorithm will not sample from $\pi$ as
desired. We evaluate whether this is the case for our instances in what follows.

\subsubsection{Empirical results for the reduced chain}
\label{sec:reduced-empirical}

We are interested in $\numsampk(\varepsilon) \triangleq \mixingtime(\varepsilon;
\Preduced, \bx_0)$, defined to be the number of Markov chain iterations required
to generate an $\varepsilon$-approximate sample from $\mcX$. If $\Mreduced$ is
not irreducible, $\numsampk(\varepsilon)$ is undefined. As in
\Cref{sec:simple-empirical}, we derive bounds for $\numsampk(\varepsilon)$ and
evaluate these bounds for a random sample of instances.
For irreducible $\Mreduced$, let $\relaxationk \triangleq \frac{1}{1 -
\lambda_2(\Preduced)}$. Using \cref{eq:mt-bounds},
\begin{align*}
  \numberthis \label{eq:numsampk-bounds}
  \numsamplbk(\varepsilon)
  \triangleq (\relaxationk-1) \log
  \p{\frac{1}{2\varepsilon}}
  \le \numsampk(\varepsilon)
  \le \relaxationk \log \p{\frac{1}{\varepsilon \pi(\bx_0)}}
  \triangleq
  \numsampubk(\varepsilon).
\end{align*}
We again choose $\varepsilon = 1/(2e)$. Because our choice of $\bx_0$ satisfies
$\pi(\bx_0) \ge 1/|\mcX|$, the lower and upper bounds are within an $O(\log
|\mcX|)$ factor of one another. In \Cref{lem:state-space-ub} in
\Cref{sec:reduced-theory}, we show that $\log |\mcX| = O(m \log n)$, meaning
$\numsampk = O(\relaxationk \cdot m \log n)$ (see \Cref{cor:reduced-mt} below).

We re-use the same sample of 100 blocks per state that we used to evaluate the
simple chain in \Cref{sec:simple-empirical}. In addition, because $|\mcX|$ can
be much smaller than $|\mc Y|$, we are able to compute $\lambda_2(\Preduced)$
for larger instances. We therefore sample another 100 blocks for each of AL and
NV with $14 \le m \le 35$ and $|\mcX| < 5000$.\footnote{In Alabama and Nevada,
  72\% and 45\% of blocks respectively have $m \le 35$, $|\mcX| < 5000$, and
  sufficiently complete data for our use (see \Cref{app:encoding} for details
about excluded blocks).} For each block, we analyze $k \in \{2, 3,
4\}$.\footnote{For some blocks, our computation times out (beyond a limit of 8
  hours) for $k=4$. Because our results suggest that $k=3$ suffices, these
failures have little effect on our conclusions.}

\begin{figure}[ht]
  \centering
  \begin{subfigure}[b]{0.48\textwidth}
  \includegraphics[width=\textwidth]{./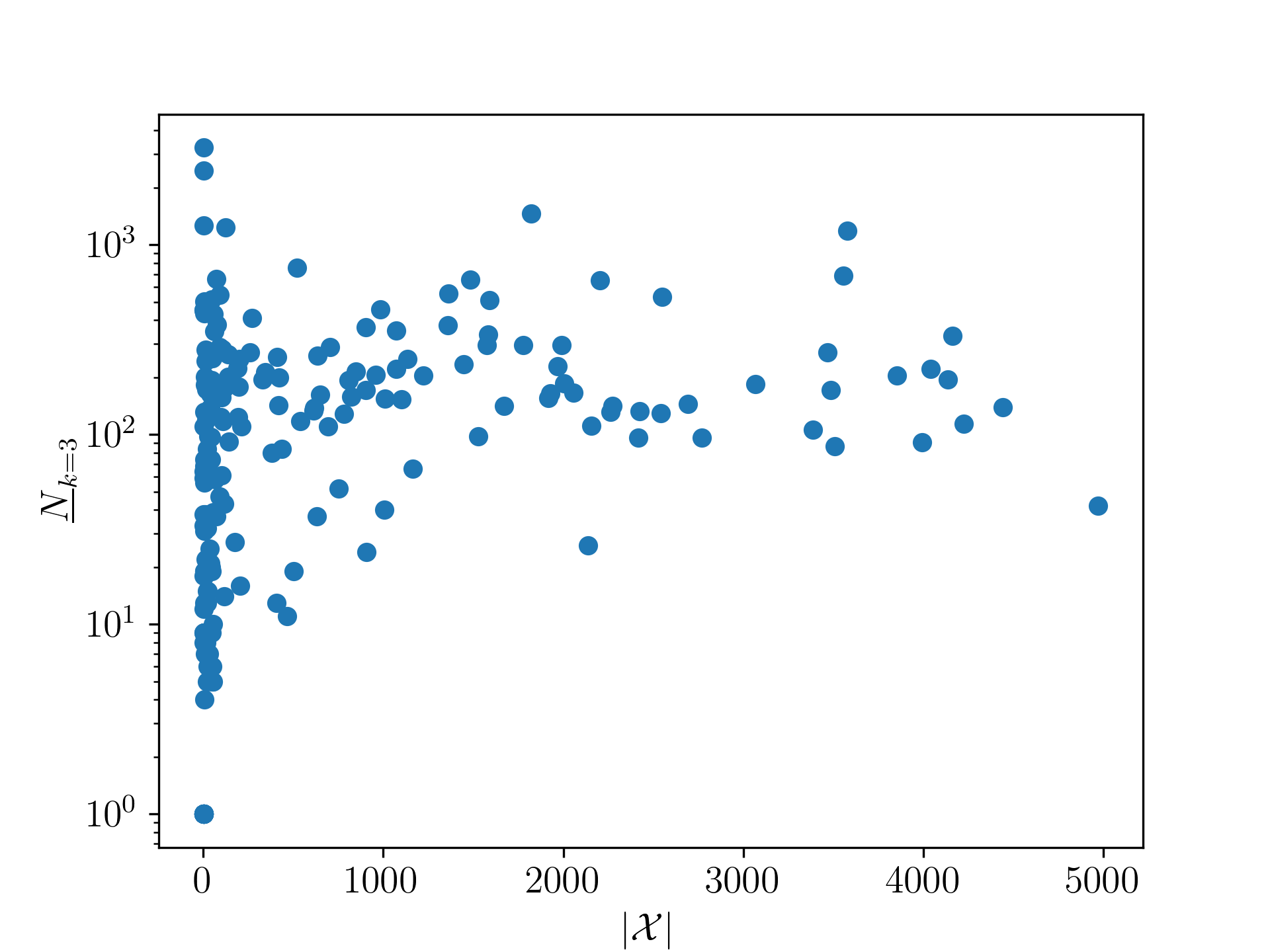}
    \caption{AL}
    \label{fig:reduced_num_solutions_mt-AL}
  \end{subfigure}
  \hfill
  \begin{subfigure}[b]{0.48\textwidth}
  \includegraphics[width=\textwidth]{./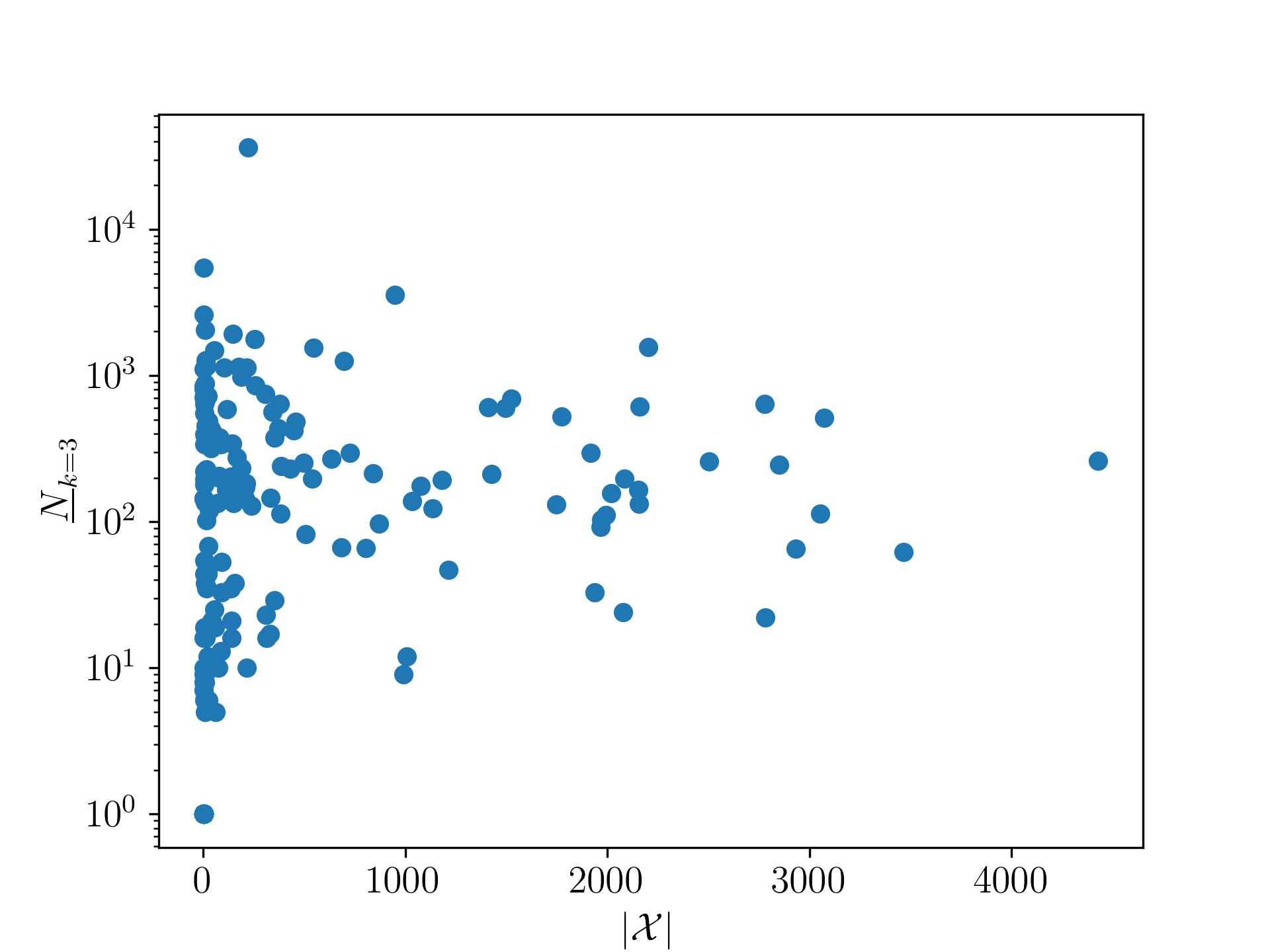}
    \caption{NV}
    \label{fig:reduced_num_solutions_mt-NV}
  \end{subfigure}
  \caption{There is no clear relationship between $|\mcX|$ and the central
  tendency of $\numsamplbk$ for $k=3$.}
  \label{fig:reduced_num_solutions_mt}
\end{figure}

\Cref{fig:all-mt} shows $\numsamplbk$ and $\numsampubk$ as a function of $m$ for
$k=3$. The reduced chain requires orders of magnitude fewer iterations than the
simple chain to generate an $\varepsilon$-approximate sample from $\mcX$.
Interestingly, $\numsampk$ appears to be much more consistent as a function of
$m$ than $\numsamplbgamma$ is. Our sample includes two cutoffs (on $m \le 35$
and $|\mcX| < 5000$), and we might worry that $\numsampk$ will grow dramatically
beyond these cutoffs. While we cannot extrapolate beyond them, \Cref{fig:all-mt}
provides evidence that $\numsampk$ is not growing too quickly near the $m=35$
cutoff. \Cref{fig:reduced_num_solutions_mt} shows that there is no clear
relationship between the central tendency of $\numsampk$ and $|\mcX|$, and some
of the largest values of $\numsampk$ come from blocks with small $|\mcX|$.
Again, while we cannot extrapolate beyond the cutoffs in our sample,
\Cref{fig:all-mt,fig:reduced_num_solutions_mt} do not suggest that $\numsampk$
will be prohibitively large outside of these cutoffs.

\begin{figure}[ht]
  \centering
  \begin{subfigure}[b]{0.48\textwidth}
    \includegraphics[width=\textwidth]{./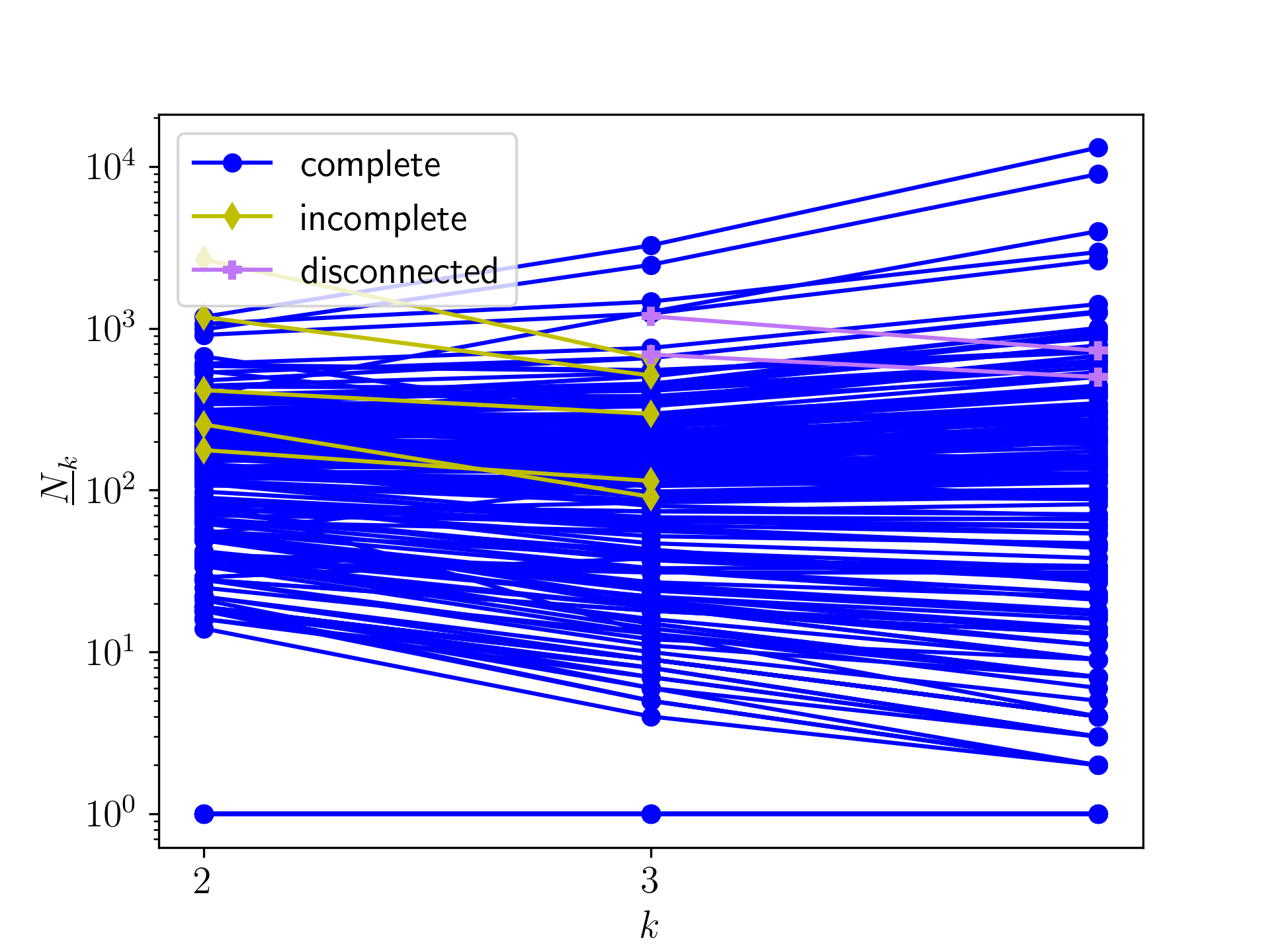}
    \caption{AL}
    \label{fig:reduced-k-mt-AL}
  \end{subfigure}
  \hfill
  \begin{subfigure}[b]{0.48\textwidth}
    \includegraphics[width=\textwidth]{./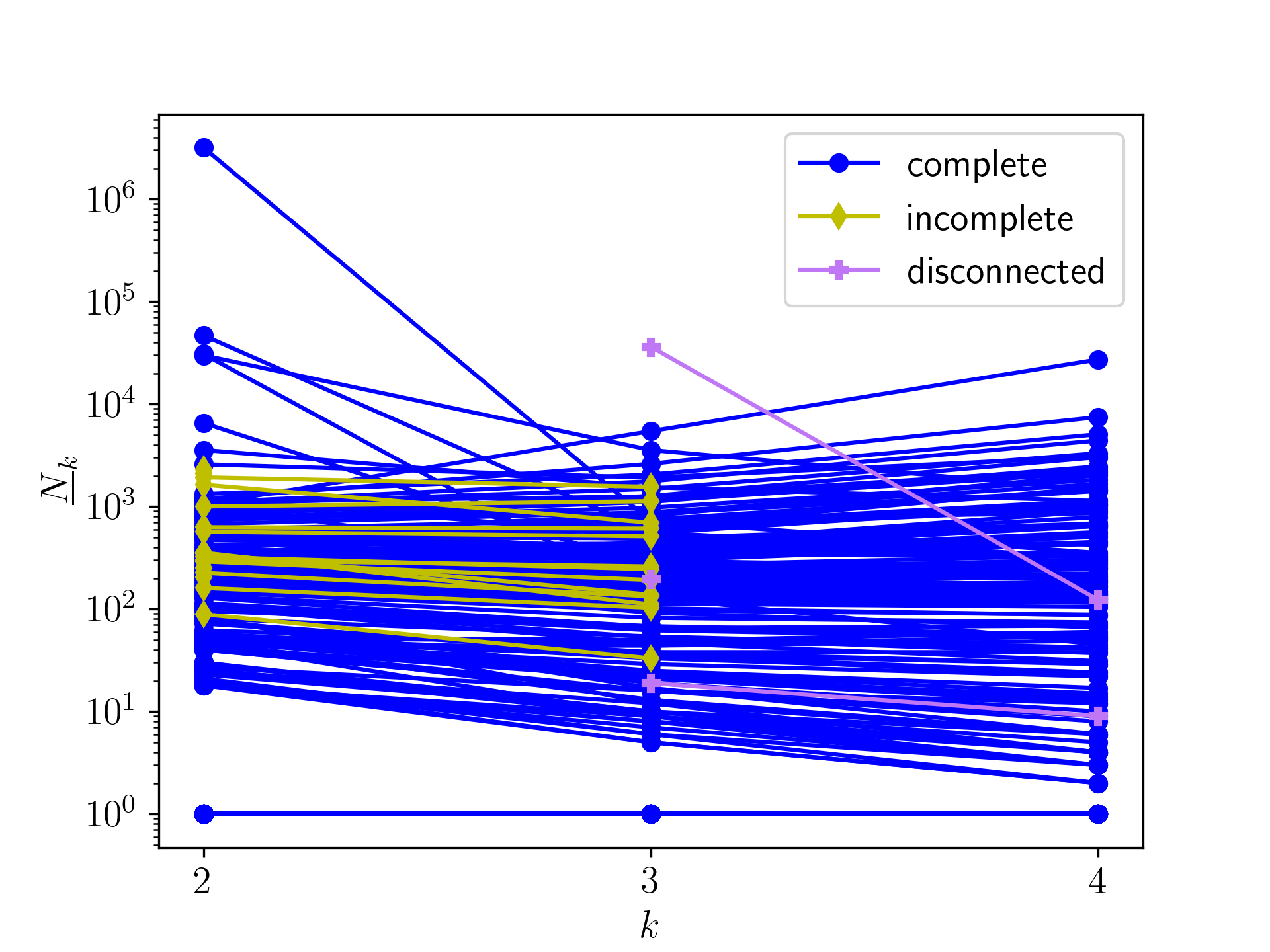}
    \caption{NV}
    \label{fig:reduced-k-mt-NV}
  \end{subfigure}
  \caption{$k=3$ appears to be optimal because (1) there are blocks for which
    $M_{k=2}$ is disconnected, and (2) there are blocks for which $M_{k=2}$ is
  connected but has high mixing time.}
  \label{fig:reduced-k-mt}
\end{figure}

We find that for $k=2$, $\Mreduced$ is not irreducible (i.e., the state
space graph is disconnected) for 2 out of 200 blocks in AL and 3 out of 200
blocks in NV. On the other hand, all blocks in our sample yield irreducible
Markov chains for $k \ge 3$. (By construction, if the reduced chain is
irreducible for some $k$, it is irreducible for any $k' > k$.) Based on these
results, $k=3$ appears to be a good choice in our setting.
\Cref{fig:reduced-k-mt} provides a more detailed comparison across $k \in \{2,
3, 4\}$. Purple plus \textcolor{DarkOrchid}{$\Plus$} markers denote blocks where
the reduced chain is not irreducible for $k=2$. Gold diamond
\textcolor{Goldenrod}{$\blacklozenge$} markers denote blocks for which
computation times out (beyond an 8-hour limit) for $k=4$.

Finally, we comment on the computational cost of running the reduced chain in
practice. Our results suggest that roughly $10^5$ MCMC iterations suffices to
generate an $\varepsilon$-approximate sample. In
\Cref{app:additional-empirical}, we provide experimental results showing that
running this many iterations on an M1 MacBook Pro takes roughly 10 seconds per
Census block. Because we process each Census block independently, this can be
run efficiently in parallel on a standard computing cluster, requiring tens of
days of computing for the states in question (see
\Cref{app:additional-empirical} for more details).

\subsubsection{Theoretical results for the reduced chain}
\label{sec:reduced-theory}

Here, we provide theoretical results for the reduced chain. In
\Cref{app:bad-examples}, we construct families of pathological instances
where (1) the reduced chain is disconnected for any $k < |\mcX|$
(\Cref{ex:disconnected}), and (2) the reduced chain is connected for $k=2$ but
has exponentially high mixing time (\Cref{ex:exp-mixing}). Despite this,
$\Mreduced$ appears to perform quite well for our instances, even though we
cannot obtain general subexponential bounds on $\numsampk$ unless P = NP. Here,
we attempt to provide intuition for why $\Mreduced$ performs well. First, we
bound $\log |\mcX|$. All proofs are deferred to
\Cref{app:proofs-reduced}.
\begin{restatable}{lemma}{statespaceub}
  \label{lem:state-space-ub}
  $\log |\mcX| < \log |\mc Y| \le m (1 + \log(n+1))$.
\end{restatable}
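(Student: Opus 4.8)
The plan is to prove the substantive inequality $\log|\mc Y| \le m(1+\log(n+1))$ by an elementary counting bound, and then dispatch $\log|\mcX| < \log|\mc Y|$ with a one-line observation. The single structural ingredient is the feature flagged in \Cref{sec:problem-formulation}: one coordinate of $\bc$ records the total number of households $m$, and because each distinct household is exactly one household, every column $\bv_i$ of $\bV$ has a $1$ in that coordinate. Hence for \emph{every} $\bx \in \mc Y$ --- not merely for $\bx \in \mcX$ --- the inequality $\bV\bx \preceq \bc$ forces $\|\bx\|_1 = \mathbf{1}\tran\bx$, which is precisely the value of that coordinate of $\bV\bx$, to satisfy $\|\bx\|_1 \le m$. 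Thus $\mc Y \subseteq \{\bx \in \Zplus^n : \|\bx\|_1 \le m\}$; this containment is what makes $\mc Y$ finite and is the crux of the argument.

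Next I would count $\{\bx \in \Zplus^n : \|\bx\|_1 \le m\}$. Reading such an $\bx$ as a multiset of at most $m$ elements of $[n]$, pad it with $m - \|\bx\|_1$ copies of a fresh ``blank'' symbol to obtain a multiset of size exactly $m$ over the $(n+1)$-symbol alphabet $\{\mathrm{blank}, 1, \dots, n\}$; this map is injective, so $|\mc Y| \le \binom{n+m}{m}$. Then $\binom{n+m}{m} = \prod_{j=1}^{m} \frac{n+j}{j} \le \prod_{j=1}^{m}(n+1) = (n+1)^m$, using $n/j \le n$ for $j \ge 1$, which gives $\log|\mc Y| \le m\log(n+1) \le m(1+\log(n+1))$. (Equivalently, one may injectively send each $\bx \in \mc Y$ to the sorted length-$m$ sequence over $\{0,1,\dots,n\}$ read off the padded multiset, of which there are at most $(n+1)^m$; either way the stated ``$1+$'' carries slack.)

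For $\log|\mcX| < \log|\mc Y|$: by definition $\mcX \subseteq \mc Y$, while $\mathbf{0} \in \mc Y$ and $\mathbf{0} \notin \mcX$ whenever $m \ge 1$ (since then $\bc \ne \mathbf{0}$), so $\mcX \subsetneq \mc Y$ and hence $|\mcX| < |\mc Y|$. (If $m = 0$ the instance is degenerate --- $\mcX = \mc Y = \{\mathbf{0}\}$ --- and I would simply assume $m \ge 1$, as the surrounding discussion does implicitly.)

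There is essentially no obstacle here: the lemma is a counting estimate with room to spare. The only step worth stating carefully is the first one, namely that the household-count row of the system $\bV\bx = \bc$ bounds $\|\bx\|_1$ by $m$ on \emph{all} of $\mc Y$; without it $\mc Y$ need not even be finite, so this is where the argument actually rests.
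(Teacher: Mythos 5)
Your proof is correct and follows essentially the same route as the paper's: you isolate the household-count coordinate to get $\|\bx\|_1 \le m$ on all of $\mc Y$, bound $|\mc Y| \le \binom{n+m}{m}$ by a stars-and-bars count, and then bound the binomial coefficient. The only difference is the last step, where you use the termwise bound $\binom{n+m}{m} = \prod_{j=1}^m \frac{n+j}{j} \le (n+1)^m$ rather than the paper's $\binom{n+m}{m} \le (e(n+m)/m)^m$; yours is actually a hair sharper (it gives $\log|\mc Y| \le m\log(n+1)$, dropping the ``$1+$'') and also avoids any degenerate issues at $m=0$ that the $e(n+m)/m$ form has. You also explicitly justify the strictness of $|\mcX| < |\mc Y|$ via $\mathbf{0} \in \mc Y \setminus \mcX$, which the paper leaves implicit.
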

Recall that $m$ is the number of households in a Census block. Given that $m$ is
reasonably small in our instances (typically on the order of tens, and at most a
few thousand), \Cref{lem:state-space-ub} and~\Cref{eq:numsampk-bounds} imply
that $\numsampk$ is primarily determined by the relaxation time $\relaxationk =
1/(1 - \lambda_2(\Preduced))$.

\begin{corollary}
  \label{cor:reduced-mt}
  If $\pi(\bx_0) \ge \frac{1}{|\mcX|}$, then
  $\numsampk = \relaxationk \cdot O\p{m \log n}$.
\end{corollary}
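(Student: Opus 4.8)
The plan is to read the bound off directly from \cref{eq:numsampk-bounds} once we fix the stated convention $\varepsilon = 1/(2e)$, using the hypothesis $\pi(\bx_0) \ge 1/|\mcX|$ to control the $\log(1/\pi(\bx_0))$ term and then invoking \Cref{lem:state-space-ub} to convert $\log|\mcX|$ into $O(m\log n)$. Recall that $\numsampk$ abbreviates $\numsampk(1/(2e))$.

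First I would substitute $\varepsilon = 1/(2e)$ into the upper bound of \cref{eq:numsampk-bounds}, which gives
\[
\numsampk \;\le\; \numsampubk(1/(2e)) \;=\; \relaxationk \log\!\p{\tfrac{2e}{\pi(\bx_0)}} \;=\; \relaxationk\p{1 + \log 2 + \log\tfrac{1}{\pi(\bx_0)}}.
\]
Next I would apply the assumption $\pi(\bx_0)\ge 1/|\mcX|$, so that $\log(1/\pi(\bx_0)) \le \log|\mcX|$, followed by \Cref{lem:state-space-ub}, which gives $\log|\mcX| \le m\p{1+\log(n+1)}$. Combining these yields
\[
\numsampk \;\le\; \relaxationk\p{1 + \log 2 + m\p{1+\log(n+1)}}.
\]
Finally, since $m\ge 1$ the additive constant $1+\log 2$ is absorbed into $m\p{1+\log(n+1)}$ up to a constant factor, and $1+\log(n+1) = \Theta(\log n)$, so the parenthetical is $O(m\log n)$ and the claim follows.

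I do not expect any real obstacle here: all the substantive work is already contained in \cref{eq:numsampk-bounds} (itself an instance of \cref{eq:mt-bounds}) and in \Cref{lem:state-space-ub}, so the corollary is essentially bookkeeping. For completeness I would also record that the \emph{same} substitution into the lower bound of \cref{eq:numsampk-bounds} gives $\numsampk \ge \relaxationk - 1$, which confirms that $\relaxationk$ and $\numsampk$ differ only by the $O(m\log n)$ factor and not by anything larger — i.e., the relaxation time is indeed the dominant quantity governing the number of iterations, as asserted in the discussion preceding the corollary.
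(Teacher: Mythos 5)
Your proof is correct and follows exactly the route the paper intends: substitute $\varepsilon = 1/(2e)$ into the upper bound of \cref{eq:numsampk-bounds}, use $\pi(\bx_0)\ge 1/|\mcX|$ to replace $\log(1/\pi(\bx_0))$ with $\log|\mcX|$, and invoke \Cref{lem:state-space-ub} to get $O(m\log n)$. The paper gives no separate proof for this corollary precisely because it is this bookkeeping; your additional note about the matching lower bound $\numsampk \ge \relaxationk - 1$ is a sensible sanity check but not needed for the statement.
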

We cannot in general provide bounds on $\lambda_2(\Preduced)$, but our empirical
results suggest that it is sufficiently small (bounded away from $1$) in our
setting. We also cannot guarantee that $\Mreduced$ is irreducible; however,
we next provide a general condition under which $\Mreduced$ is irreducible. This
condition is empirically not met in our case, but as we will argue, it is
\textit{nearly} met, and might provide intuition as to why $\Mreduced$ appears
to be irreducible in practice.

\def\ajmin{a_j^{\min}}
\def\ajmax{a_j^{\max}}

For attribute $j \in [d]$, let $\ajmin$ and $\ajmax$ denote the minimal and
maximal value that any household vector $\bv_i$ takes in that position.
Formally, $\ajmin \triangleq \min_{i \in [n]} \bv_i[j]$, and $\ajmax$ is defined
analogously. Let $\mc A_i \subset \Zplus$ be the set $\{\ajmin, \ajmin + 1,
\dots, \ajmax\}$. And finally, let $\mc A \subset \Zplus^d$ be the set of
integer vectors $\mc A_1 \times \mc A_2 \times \dots \times \mc A_d$.
Informally, $\mc A$ is the set of integer vectors contained in the
hyperrectangle formed by the ranges of each attribute. By this definition $\mc V
\subseteq \mc A$, since each household type vector must lie within $\mc A$. In
what follows, we provide a condition under which $\Mreduced$ is irreducible.

\begin{restatable}{lemma}{hyperrectangle}
  \label{lem:hyperrectangle}
  If $\mc V = \mc A$, then $\Mreduced$ is irreducible for $k \ge 2$.
\end{restatable}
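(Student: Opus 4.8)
The plan is to prove the statement for $k=2$ and then reduce the general case to it. Fix $k=2$. I will show, by induction on the dimension $d$, that the transition graph of $\Mreduced$ on $\mcX$ — the directed graph with an arc $\bx\to\bx'$ whenever $\Preducedtwo(\bx,\bx')>0$ — is strongly connected, which is exactly irreducibility. I read the hypothesis "$\mc V$ is a hyperrectangle with all interior points" as $\mc V=\{\bv\in\Zplus^d:\mathbf a\preceq\bv\preceq\mathbf b\}$ for some $\mathbf a\preceq\mathbf b$, the key consequence being that altering one coordinate of a point of $\mc V$ to any value in its allowed range again yields a point of $\mc V$. The only kind of move I need is a \emph{unit transfer along coordinate $\ell$}: given a solution $\bx$ containing two elements $\bv,\bv'$ with $\bv[\ell]<b_\ell$ and $\bv'[\ell]>a_\ell$, replace $\{\bv,\bv'\}$ by $\{\bv+\be_\ell,\bv'-\be_\ell\}$. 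Taking $\bz,\bz'$ to be these two size-$2$ multisets (as indicator vectors), we have $\bV\bz=\bv+\bv'=\bV\bz'$, $\|\bz\|_1=\|\bz'\|_1=2$, $\bx-\bz=\bx'-\bz'\succeq\mathbf 0$, and $h(\bx,\bx')\le 4$, so $(\bz,\bz')\in\pairs(\bx,\bx')$ and $\Preducedtwo(\bx,\bx')>0$; note that such a move keeps every other coordinate of every element fixed.

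For the inductive step, write $\mc V=\mc V'\times\{a_d,\dots,b_d\}$ with $\mc V'\subseteq\Zplus^{d-1}$ a box, and each $\bv\in\mc V$ as a pair $(\bv',v_d)$. For a solution $\bx$, let $\phi(\bx)$ be the multiset of projections $\bv'$ of its elements; then $\phi(\bx)$ solves the $(d-1)$-dimensional instance over $\mc V'$ with target $(\bc[1],\dots,\bc[d-1])$ and the same number of elements $m$. Given two solutions $\bx,\bx''\in\mcX$, I connect them in two phases. \emph{Phase 1: make $\phi(\bx)=\phi(\bx'')$.} A unit transfer along a coordinate $\ell<d$ applied to $\phi(\bx)$ lifts to a unit transfer along coordinate $\ell$ on $\bx$ — pick, in $\bx$, two elements projecting to the two vectors being modified and perform the transfer on them, leaving their last coordinates untouched — and this lift projects back to the original $(d-1)$-dimensional move. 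By the induction hypothesis the $(d-1)$-dimensional chain is irreducible, so lifting a path from $\phi(\bx)$ to $\phi(\bx'')$ transforms $\bx$ into a solution whose multiset of $\bv'$-projections equals that of $\bx''$. \emph{Phase 2: match the last coordinates.} Now fix a bijection between the elements of $\bx$ and of $\bx''$ that matches $\bv'$-projections, so element $i$ on each side carries the same $\bv'_i$ but last coordinates $s_i$ (in $\bx$) and $t_i$ (in $\bx''$), with $\sum_i s_i=\sum_i t_i=\bc[d]$. Unit transfers along coordinate $d$ change no $\bv'_i$, and they connect $(s_i)_i$ to $(t_i)_i$: if the tuples differ there are indices $i,j$ with $s_i>t_i\ge a_d$ and $s_j<t_j\le b_d$, so transferring one unit from $i$ to $j$ is legal and strictly decreases $\sum_i|s_i-t_i|$. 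After Phase 2, $\bx=\bx''$ as multisets.

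The base case $d\le 1$ is standard: for $d=0$, $\mcX$ is a single point, and for $d=1$ a solution is a multiset of $m$ integers in $[a_1,b_1]$ with a fixed sum, so repeatedly transferring a unit from a largest to a smallest element whenever they differ by at least $2$ drives any solution to the unique "balanced" multiset (values in $\{q,q+1\}$, $q=\lfloor\bc[1]/m\rfloor$), whence any two solutions are connected. This proves irreducibility for $k=2$. For $k$ with $2\le k\le m$, every unit transfer (a size-$2$ swap) is also a size-$k$ swap: pad $\bz$ and $\bz'$ by the same multiset of $k-2$ additional elements of $\bx-\bz$, which exists because $\|\bx-\bz\|_1=m-2\ge k-2$; hence the transition graph of $\Preduced$ contains that of $\Preducedtwo$ and is likewise strongly connected, so $\Mreduced$ is irreducible. (When $k>m$ the chain is not even well-defined since $\binom{m}{k}=0$; and when $m\le 1$ there is nothing to prove, as $|\mcX|\le 1$.)

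The step I expect to require the most care is the lifting in Phase 1: checking that a $(d-1)$-dimensional unit transfer really does correspond to a legal $d$-dimensional transition — precisely the point at which appending a fixed last coordinate must preserve membership in $\mc V$ — and keeping straight that after Phase 1 it is only the projections, not the full solution, that have been brought into agreement, with the last coordinates reconciled separately in Phase 2. The full-box hypothesis is used exactly there and in Phase 2, to keep the modified elements inside $\mc V$; for a general $\mc V$ the reduced chain need not be irreducible, as the pathological instances constructed later show. The remaining points — the $\pairs(\bx,\bx')$ membership check, termination of the transfer processes, and the base case — are routine.
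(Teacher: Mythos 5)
Your proof is correct, and it takes the same basic approach as the paper — use the box structure of $\mc V$ to work one coordinate at a time via ``unit transfers'' — but you make rigorous a step that the paper merely asserts. The paper's proof opens with ``it suffices to consider each dimension independently. Thus, we proceed under the assumption that $d=1$,'' and then runs an induction over positions in an ordered-tuple representation of the solution. Your two-phase induction on the dimension $d$ (Phase 1: match the $(d-1)$-dimensional projections by lifting a lower-dimensional path; Phase 2: reconcile the last coordinates by unit transfers) is exactly the argument needed to justify that assertion, and your $d=1$ base case (drive any solution to the unique balanced multiset) is a valid, slightly different route to the same conclusion the paper's per-position induction reaches. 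You also explicitly handle $k>2$ by padding, which the paper delegates to a remark in the surrounding text.

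One small point worth tightening: you define the lift only for unit transfers, but then invoke ``the $(d-1)$-dimensional chain is irreducible'' to obtain the path $\phi(\bx)\to\phi(\bx'')$ — and transitions of $\Preducedtwo$ are general size-$2$ swaps, not just unit transfers. Your induction actually establishes the stronger statement that any two solutions are joined by a sequence of \emph{unit transfers} (your base case and both phases use only unit transfers), so you should state that stronger claim as the inductive hypothesis; with it, the lift in Phase 1 is well-defined. (Alternatively, one can check directly that a general size-$2$ swap in $d-1$ dimensions also lifts — append the untouched last coordinates of the two removed elements to the two inserted ones, using that $\mc V$ is a box — but stating the stronger hypothesis is cleaner.) Likewise, in Phase 2 the ``bijection'' bookkeeping should be read as fixing an ordering of the elements of $\bx$ and of $\bx''$ with matching projections and performing transfers on that ordered tuple; this is fine but deserves a sentence. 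Neither of these is a gap in the underlying idea.
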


\Cref{lem:hyperrectangle} provides some insight as to why we might expect
connectivity in $\Mreduced$. Most of the attributes in our $d$-dimensional
encoding represent counts of individuals within each household who belong to
certain demographic groups (e.g., number of adults, number of Hispanic
individuals, etc.). Given two solutions $\bx, \bx' \in \mcX$, when should we
able to transition between them by only changing a small number of households
(i.e.,~$k=2$) at a time? Intuitively, if $\bx$ contains two households $\bv_i$
and $\bv_j$, suppose we were to shuffle the individuals within those households,
producing two new household types $\bv_{i'}$ and $\bv_{j'}$. If $\mc V = \mc A$,
both of these new household types will also be in $\mc V$. And because we're
using the same set of individuals, we have not changed the overall demographic
composition of the block. As a result, replacing $\bv_i, \bv_j$ with $\bv_{i'},
\bv_{j'}$ will still yield a valid solution $\bx'' \in \mc X$. Doing this
repeatedly allows us to find a path from $\bx$ to $\bx'$, meaning $\Mreduced$ is
irreducible.

Thus, if $\mc V$ is very ``dense,'' meaning all possible household types within
some range have nonzero probability, $\Mreduced$ is irreducible. In practice,
$\mc V$ is not this dense; there exist some household types we never observe in
the PUMS, meaning $\mc V \subset \mc A$. On the other hand,
\Cref{lem:hyperrectangle} provides a particularly strong notion of connectivity.
In the proof, we show that when $\mc V = \mc A$, there are multiple disjoint
paths between pairs $\bx, \bx' \in \mcX$, not just the single path required for
irreducibility. Future work may be able to relax this assumption to show
irreducibility under weaker conditions, for example, when household types are
missing with some probability $p$~\citep[see, e.g.,][]{percolation1999grimmett}.

Finally, even when $\Mreduced$ is not irreducible, the reduced chain improves
solution quality: a standard coupling argument shows that for any initial
distribution $\pi_0$, $\|\pi_0 \Preduced - \pi\|_1 \le \|\pi_0 - \pi\|_1$, even
when $\Mreduced$ is not irreducible. Moreover, we show that $\Mreduced$
converges to a distribution that is in a sense optimal given initial
distribution $\pi_0$:

\begin{restatable}{lemma}{reducedconvergence}
  \label{lem:reduced-converge}
  Let $\{\mc C_i\}_{i=1}^c$ be the connected components of the undirected graph
  induced by $\Preduced$.
  Given initial state $\bX_0 \sim \pi_0$, \Cref{alg:reduced} converges to
  $\pi^+(\pi_0) \triangleq \lim_{t \to \infty} \pi_0
  \Preduced^t$ satisfying
  \begin{align*}
    \pi^+(\pi_0) \in \argmin_{\pi' : \pi'(\mc C_i) = \pi_0(\mc C_i) ~ \forall i
    \in [c]} \dTV(\pi, \pi').
  \end{align*}
  In other words, $\pi^+(\pi_0)$ is as close as possible to $\pi$ subject to the
  constraint that it puts the same probability mass in each connected component
  that $\pi_0$ does.
\end{restatable}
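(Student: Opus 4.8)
The plan is to decompose $\mcX$ into the connected components $\mc C_1, \dots, \mc C_c$ of the graph induced by $\Preduced$ and analyze the chain one component at a time. Note first that $\pi$ has full support on $\mcX$ (immediate from \Cref{eq:piD-def}, since each $\Pr_{\mc D}[\bv_i] > 0$), so detailed balance gives $\Preduced(\bx,\bx') > 0 \iff \Preduced(\bx',\bx) > 0$; hence the induced graph is genuinely undirected and the components are well-defined. No positive-probability transition leaves a component, so for each $i$ the restriction of $\Preduced$ to $\mc C_i$ is a row-stochastic matrix; it is irreducible by construction and aperiodic because $\Preduced$ is lazy. The detailed balance equations continue to hold within $\mc C_i$, so the renormalized restriction $\pi_{\mc C_i}(\bx) \triangleq \pi(\bx)/\pi(\mc C_i)$ (for $\bx \in \mc C_i$) is the unique stationary distribution of this restricted chain.

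First I would pin down the limit $\pi^+(\pi_0)$. Because probability mass never crosses between components, $(\pi_0 \Preduced^t)(\mc C_i) = \pi_0(\mc C_i)$ for every $t$. Writing $\pi_0 = \sum_{i : \pi_0(\mc C_i) > 0} \pi_0(\mc C_i)\, \mu_i$ as a mixture of its conditionals $\mu_i \triangleq (\pi_0 \given \bx \in \mc C_i)$ and using linearity, $\pi_0 \Preduced^t = \sum_i \pi_0(\mc C_i)\,(\mu_i \Preduced^t)$. Applying the convergence theorem for finite irreducible aperiodic chains (see, e.g., \citet{levin2017markov}) to each restricted chain gives $\mu_i \Preduced^t \to \pi_{\mc C_i}$, so the limit exists and $\pi^+(\pi_0) = \sum_i \pi_0(\mc C_i)\, \pi_{\mc C_i}$. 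In particular $\pi^+(\pi_0)(\mc C_i) = \pi_0(\mc C_i)$ for all $i$, so $\pi^+(\pi_0)$ is feasible for the minimization.

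Next I would check that this feasible point is optimal. The total variation distance splits across components, and within each block the triangle inequality together with feasibility yields a lower bound:
\begin{align*}
  \dTV(\pi, \pi') = \frac{1}{2}\sum_{i=1}^c \sum_{\bx \in \mc C_i} |\pi(\bx) - \pi'(\bx)| \ge \frac{1}{2}\sum_{i=1}^c \left| \sum_{\bx \in \mc C_i}(\pi(\bx) - \pi'(\bx)) \right| = \frac{1}{2}\sum_{i=1}^c |\pi(\mc C_i) - \pi_0(\mc C_i)|
\end{align*}
for every $\pi'$ with $\pi'(\mc C_i) = \pi_0(\mc C_i)$ for all $i \in [c]$. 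It then suffices to observe that $\pi^+(\pi_0)$ attains this bound: on $\mc C_i$ it equals $\tfrac{\pi_0(\mc C_i)}{\pi(\mc C_i)}\,\pi$, a constant multiple of $\pi$, so every residual $\pi(\bx) - \pi^+(\pi_0)(\bx) = \big(1 - \pi_0(\mc C_i)/\pi(\mc C_i)\big)\,\pi(\bx)$ carries a single sign, whence $\sum_{\bx \in \mc C_i} |\pi(\bx) - \pi^+(\pi_0)(\bx)| = |\pi(\mc C_i) - \pi_0(\mc C_i)|$. Thus $\dTV(\pi, \pi^+(\pi_0))$ equals the lower bound and $\pi^+(\pi_0) \in \argmin_{\pi' : \pi'(\mc C_i) = \pi_0(\mc C_i) ~\forall i \in [c]}\dTV(\pi,\pi')$.

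The component decomposition and the appeal to the standard convergence theorem are routine; the only real content is the per-component optimality argument, and the observation that makes it painless is that rescaling $\pi$ by a constant leaves all residuals of a single sign, collapsing the triangle inequality to equality precisely at $\pi^+(\pi_0)$. The main thing to handle carefully is the bookkeeping for components with $\pi_0(\mc C_i) = 0$ (they simply contribute nothing to either sum) and the verification that the induced graph is undirected — both dispatched by the full-support remark at the start.
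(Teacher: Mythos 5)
Your proof is correct and follows essentially the same route as the paper's: decompose into connected components, observe that the restricted chain on each component is lazy, irreducible, and reversible with stationary distribution $\pi \given \bx \in \mc C_i$, express $\pi^+(\pi_0)$ as the mixture $\sum_i \pi_0(\mc C_i)(\pi\given\bx\in\mc C_i)$, lower-bound $\dTV(\pi,\pi')$ over feasible $\pi'$ by $\tfrac{1}{2}\sum_i|\pi(\mc C_i)-\pi_0(\mc C_i)|$, and check that $\pi^+(\pi_0)$ attains it. The only stylistic difference is that you justify tightness by noting the residuals on each component share a sign (a constant rescaling of $\pi$), whereas the paper computes $\dTV(\pi^+,\pi)$ by direct substitution; these are the same calculation. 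Your extra remark that full support of $\pi$ plus detailed balance makes the induced graph genuinely undirected is a small useful check the paper leaves implicit.
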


\subsubsection{The reduced chain as a truncation of the simple chain.}

\def\hplane{\mathcal{I}}
\def\lattice{\mathcal{L}}

We conclude this section with some intuition on the relationship between the
simple chain and the reduced chain. We can think of the simple chain as
operating on a lattice $\lattice$ over $\Zplus^n$, where each $\bx \in \mc Y$
represents a lattice node. The constraint $\bV \bx = \bc$ is an intersection of
hyperplanes $\hplane$ over $\R^n$, and the set of exact solutions $\mcX$ is the
intersection of $\lattice$ and $\hplane$. The simple chain corresponds to a
random walk over the lattice that never goes ``above'' $\hplane$. The parameter
$\gamma$ effectively biases this random walk away from the origin by penalizing
nodes exponentially in their distance from $\hplane$. In contrast, the reduced
chain only operates on $\mcX = \lattice \cap \hplane$. But because transitions
are defined by replacing $k$ elements at a time, two states $\bx, \bx' \in \mcX$
have a transition under $\Preduced$ if and only if there is a path between them
in $\lattice$ of length at most $2k$.

This intuition leads
to a slightly different way to modify the simple chain: instead of penalizing
inexact solutions with $\exp(-\gamma \|\bV \bx - \bc\|_1)$ as in
\Cref{eq:tpg-def}, we could truncate the lattice with a threshold penalty like
$\mathbbm{1}[\|\bV \bx - \bc\|_1 \le \omega]$ for some parameter $\omega \in
\Zplus$. This would prevent the Markov chain from ever transitioning to states
that are distance $>\omega$ from $\mc I$, effectively restricting the state
space of the simple chain to be \begin{align*}
  \mc Y_\omega \triangleq \{\bx : \bV \bx \preceq \bc \wedge \|\bV \bx - \bc\|_1 \le
  \omega \}.
\end{align*}
We can write the reduced chain by simply setting $\omega = 0$ in this
definition (i.e., $\mc Y_{\omega=0} = \mcX$). If $\mcX$ occupies sufficiently
large mass within $\mc Y_\omega$ for some $\omega > 0$, rejection sampling on
this chain might lead to a practical algorithm in our setting. Of course, as is
the case with the reduced chain, we would no longer be able to guarantee that
this modification to the simple chain remains irreducible. We defer such
investigations, as well as more complex methods to modify the state space, to future work.

\section{A Practical Algorithm Based on the Reduced Chain}
\label{sec:overall}

We conclude with a description of how we implement our algorithm in practice.
Because we cannot guarantee that the reduced chain is irreducible, we draw
inspiration from \Cref{lem:reduced-converge} to choose an initial distribution
$\pi_0$ over initial states to reduce total variation distance to the desired
distribution even when the reduced chain is not irreducible. Our algorithm has
three main hyperparameters: $k$, $t$, and $N$. As before, $k$ gives the number
of elements we remove and replace at each iteration of the reduced chain. Larger
values of $k$ increase connectivity at the expense of increased computational
cost. The number of MCMC iterations we run is $t$, where in our experiments
(\Cref{sec:reduced-empirical}) we found that $t$ on the order of $10^5$ seems
appropriate.

$N$ governs our initial state: instead of choosing a single initial
state, we use ILP to enumerate the best $N$ solutions according to a linear
approximation to $\pi$ (see \Cref{app:reduced-alg} for details). If a particular
block has $|\mcX_b| < N$, we will enumerate all of them, meaning we can simply
sample according to $\pi$ with no need for MCMC. If, however, $|\mcX_b| \ge N$,
we will enumerate some $\mc S \subseteq \mcX$ such that $|\mc S| = N$. We choose
our initial distribution to be $\pi_0 = (\pi \given \bx \in \mc S)$, meaning we
sample a random $\bx_0$ from $\mc S$ according to $\pi$ and use this as our
starting state for MCMC. For a formal description of this algorithm, see
\Cref{alg:overall} in \Cref{app:final-alg}. This choice of $\pi_0$ has two
benefits:
\begin{enumerate}
  \item If $\Mreduced$ is not irreducible, the stationary distribution we will
    converge to is $\pi^+(\pi_0)$ as defined in \Cref{lem:reduced-converge}.
  \item If $\Mreduced$ is irreducible, this choice of $\pi_0$ provides slightly
    better mixing time guarantees than starting with a single deterministic
    $\bx_0$:
\end{enumerate}

\begin{restatable}{lemma}{overall}
  \label{lem:overall}
  If $M_k$ is irreducible, then \Cref{alg:overall} produces an
  $\varepsilon$-approximate sample from $\pi$ after at most
  \begin{align*}
    \mixingtime(\varepsilon; P_k, \bX) \le \relaxationk \log \p{\frac{1}{2\varepsilon
    \sqrt{\pi(\mc S)}}}
  \end{align*}
  Markov chain iterations.
\end{restatable}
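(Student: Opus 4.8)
The plan is to bound the mixing time started from the random initial state $\bX \sim \pi_0$, where $\pi_0 = (\pi \given \bx \in \mc S)$, by leveraging the spectral bound \Cref{eq:mt-bounds} but with a sharper handle on the ``$\log(1/\pi(\bx_0))$'' term. The standard upper bound in \Cref{eq:mt-bounds} gives, for each fixed deterministic start $\bx_0 \in \mc S$, that $\dTV(\pi_0' \Preduced^t, \pi) \le \varepsilon$ once $t \ge \relaxationk \log(1/(\varepsilon \pi(\bx_0)))$; the issue is that $\pi(\bx_0)$ can be as small as $\pi(\mc S)/N$, which is worse than the claimed $\sqrt{\pi(\mc S)}$. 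The key idea is that averaging the start over $\pi_0$ lets us replace a worst-case $\ell_\infty$-type bound with an $\ell_2$-type bound, and the square root is exactly the signature of an $\ell_2$ argument.

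Concretely, I would first recall the standard $\ell_2$ contraction estimate for a reversible lazy chain: writing $\pi_0 \Preduced^t$ in terms of the spectral decomposition of $\Preduced$ with respect to the $\pi$-inner product, one gets
\begin{align*}
  4\,\dTV(\pi_0 \Preduced^t, \pi)^2 \le \left\| \frac{\pi_0 \Preduced^t}{\pi} - 1 \right\|_{2,\pi}^2 \le \lambda_2^{2t} \left\| \frac{\pi_0}{\pi} - 1 \right\|_{2,\pi}^2 \le \lambda_2^{2t} \left\| \frac{\pi_0}{\pi} \right\|_{2,\pi}^2 = \lambda_2^{2t} \cdot \chi^2(\pi_0 \,\|\, \pi),
\end{align*}
where $\chi^2$ here denotes $\sum_{\bx} \pi_0(\bx)^2 / \pi(\bx)$. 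The second step would be to evaluate this $\chi^2$ divergence for our specific $\pi_0$: since $\pi_0(\bx) = \pi(\bx)/\pi(\mc S)$ for $\bx \in \mc S$ and $0$ otherwise, we get $\sum_{\bx \in \mc S} \pi(\bx)^2 / (\pi(\mc S)^2 \pi(\bx)) = \sum_{\bx \in \mc S} \pi(\bx) / \pi(\mc S)^2 = 1/\pi(\mc S)$. Plugging in, $2\,\dTV(\pi_0 \Preduced^t, \pi) \le \lambda_2^t / \sqrt{\pi(\mc S)}$, and using $\lambda_2^t \le e^{-t/\relaxationk}$ (valid since $\lambda_2 = 1 - 1/\relaxationk \le e^{-1/\relaxationk}$, and $\lambda_2 \ge 0$ by laziness, so there are no negative-eigenvalue complications), this is at most $\varepsilon$ once $t \ge \relaxationk \log\!\big(1/(2\varepsilon \sqrt{\pi(\mc S)})\big)$, which is the claimed bound.

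I expect the main obstacle to be purely bookkeeping: making sure the $\ell_2 \to \dTV$ conversion ($2 d_{\mathrm{TV}} = \|\cdot\|_1 \le \|\cdot\|_{2,\pi}$ by Cauchy--Schwarz) and the spectral contraction are stated for the correct (right-multiplication, row-vector) convention used in the paper, and that laziness/reversibility are invoked so that all eigenvalues lie in $[0,1)$ on the orthogonal complement of the stationary eigenvector — this is what makes $\|\cdot P^t\|$ contract by exactly $\lambda_2^t$ rather than $\max(|\lambda_2|,|\lambda_n|)^t$. Reversibility of $\Preduced$ with respect to $\pi$ is already established in \Cref{sec:reduced} via detailed balance, and laziness is noted there too, so both hypotheses are in hand; irreducibility is assumed in the statement, guaranteeing $\pi$ is the unique stationary distribution and $\lambda_2 < 1$. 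One small point worth a sentence: we should note $\bX \in \mc S$ almost surely so $\pi(\bX) > 0$ and $\pi_0 \ll \pi$, making the $\chi^2$ divergence finite.
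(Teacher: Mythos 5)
Your proposal is correct and takes essentially the same route as the paper: the paper's proof invokes the same $\ell_{2,\pi}$ spectral contraction bound (cited as \citet[Theorem 3.4]{sousi2020mixing} rather than re-derived from the spectral decomposition), computes $\|\pi_0-\pi\|_{2,\pi}^2 = 1/\pi(\mc S)-1 \le 1/\pi(\mc S)$ for $\pi_0 = (\pi \given \bx \in \mc S)$ exactly as you do, and converts back to total variation via the same Jensen/Cauchy--Schwarz inequality. The only cosmetic differences are that the paper works with $\exp(-t/\relaxationk)$ throughout rather than passing through $\lambda_2^t$ and the bound $1-x\le e^{-x}$, and that it phrases the initial-distance computation as $\|\pi_0-\pi\|_{2,\pi}^2$ rather than your (slightly looser but equivalent) drop of the $-1$ term; these are trivial.
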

(Recall that if $|\mcX_b| < N$, then \Cref{alg:overall} does not use MCMC at
all.) Thus, the number of MCMC iterations needed reduces as $\mc S$ captures
more probability mass. We prove \Cref{lem:overall} in \Cref{app:final-alg}.

\section{Evaluating and Enforcing Representativeness}
\label{sec:evaluation}

We now return to our original goal: generating a dataset such that the
marginal distribution over households approximates the PUMS distribution. The
PUMS data are very granular, consisting of detailed information on each
individual in each household. In contrast, the synthetic microdata we produce
are much coarser. Consistent with the SF1 data release, we report the following
pieces of information for each household in our microdata:\footnote{We could
  instead provide more granular information like household status and
householder race. This would lead to an increase in total variation distance. We
choose to omit them here because downstream analyses we seek to support
(including running privacy-preserving algorithms) only use these attributes.}
\begin{itemize}
  \item Number of persons of each of the 7 race categories
  \item Number of Hispanic persons
  \item Number of adults
\end{itemize}
Let $\mc L \subset \Zplus^9$ be the set of all low-dimensional
household ``types'' defined this way. For each $i \in [n]$, let $\ell_i$ be the
low-dimensional projection of $\bv_i$ onto these features. Note that this means
for $i \ne j$, it is possible to have $\ell_i = \ell_j$ if $\bv_i$ and $\bv_j$
agree on these 9 attributes.
Our goal will be to compare the
distribution over types induced by our microdata to the distribution of types
found in the PUMS.

To do so, we introduce additional notation. Let $\mc B$ be the set of blocks in
a given state. For a block $b \in \mc B$, let $\pi_b$ be the distribution $\pi$
over $\mcX_b$, the set of solutions for block $b$,
as defined in \Cref{eq:piD-def}. Let $m_b$ be the number of
households in block $b$. Then, let $q_\ell$ be the expected frequency of households
of type $\ell$ in our microdata. For $\ell \in \mc L$,
\def\suml{\sum_{\stackrel{i \in [n]}{\ell_i = \ell}}}
\begin{equation}
  \label{eq:qt-def}
  q_\ell \triangleq \frac{1}{|\mc B|} \sum_{b \in \mc B} \frac{1}{m_b}
\suml \EE{\bX \sim \pi_b}{\bX[i]}.
\end{equation}
We can compare this directly with $p_\ell$, defined to be the frequency of
households of type $\ell$ in the PUMS:
\begin{align*}
  p_\ell \triangleq \suml \Pr_{\mc D}[\bv_i].
\end{align*}

Ideally, we would have $q_\ell = p_\ell$ for all $\ell \in \mc L$, meaning our methods
are unbiased with respect to overall household distribution. However, this will
in general not be the case for a variety of reasons. First, as detailed further
in \Cref{app:encoding}, data may be incomplete, meaning we cannot always sample
from $\mc X_b$ for every block $b$. (In such cases, we coarsen the constraints
we enforce to ensure we still get reasonable results.) More subtly, even with
perfect data, our induced distribution is not unbiased. Consider the following
example with three blocks:
\begin{example}[Empirical frequencies may be biased]
  \label{ex:bias}
  Block A contains one household with two Black persons (type 1); block B
  contains two households, each of which contains one white person and one Black
  person (type 2); and block C contains one household with two white persons
  (type 3).
\end{example}

Overall, the frequency of the household types 1, 2, and 3 are $1/4$, $1/2$, and
$1/4$ respectively. Given these frequencies, our approach will correctly
reconstruct blocks A and C, since there is a unique solution to each of these
blocks. Block A can only be reconstructed as a single household of type 1, and
block C requires a single household of type 3. However, block B admits multiple
possible solutions: it can can be reconstructed either as one household of type
1 and one household of type 3, or as two households of type 2. Under the
distribution $\pi$ as defined in \Cref{eq:piD-def}, these will each be sampled
with probabilities 1/3 and 2/3 respectively. As a result, despite the fact that
$(p_1, p_2, p_3) = (1/4, 1/2, 1/4)$, our induced frequencies will be $(q_1, q_2,
q_3) = (1/3, 1/3, 1/3)$. Thus, in general, $q_\ell \ne p_\ell$.

\subsection{Empirical differences in the household distribution.}

Given that our induced distribution will not be unbiased in general, we
empirically evaluate just how close it is. To do so, we use \Cref{alg:overall}
to sample a synthetic dataset given by $\{\bx_b\}_{b \in \mc B}$ across an
entire state. Note that there is an additional source of error here: we do not
in general approximately sample from $\pi_b$ because when we are unable to
completely enumerate $\mcX_b$ (which is precisely when we need the MCMC methods
developed earlier) and the reduced chain is not connected, we instead sample
from $\pi^+(\pi \given \bx \in \mc S)$ as defined in
\Cref{lem:reduced-converge}.

We will compare the PUMS distribution over types $p_\ell$ to the empirical
frequencies $\hat q_\ell$ over the household types, defined analogously to
\Cref{eq:qt-def}:
\begin{align*}
  \hat q_\ell \triangleq \frac{1}{|\mc B|} \sum_{b \in \mc B} \frac{1}{m_b} \suml
  \bx_b[i].
\end{align*}
In what follows, we only report statistics involving $\hat q$ based on a single
run of \Cref{alg:overall}. Due to the computational cost involved, we are unable
to provide confidence intervals. Let $\mc P$ be the distribution over $\mc L$
with probabilities given by $p_\ell$, and let $\empdist$ be the distribution with
probabilities $\hat q_\ell$. We can now evaluate the distance between our empirical
distribution $\mc P$ and $\empdist$ using the total variation distance $\dTV(\mc
P, \empdist) = \frac{1}{2} \sum_{\ell \in \mc L} |p_\ell - \hat q_\ell|$.

For Alabama, we find that $\dTV(\Dpums, \empdist) = \ALTVDUnadjustedAll$.
However, there is a source of error that is beyond our control and, when
accounted for, reduces this gap significantly: The PUMS distribution is
inconsistent with the distribution reported by SF1. For example, according to
SF1, the proportion of single-person households in Alabama is \ALSynSizeOne,
whereas according to the PUMS, it is \ALPUMSSizeOne. Because we are constrained
to exactly match the number of single-person households reported by SF1, we are
guaranteed to incur at least this error relative to the PUMS.

To account for
this, in our analysis, we compare against a modified distribution $\Drw$
derived by
reweighting $\mc P$ to to match the SF1 data on characteristics that are
completely determined by SF1.
By doing so, we attempt to control for
discrepancies between the PUMS
distribution and SF1 that are outside our control. Specifically, we reweight $\mc P$
to match SF1 on both overall household size (up to size 7) and
ethnicity among single-person households (where the proportion of such
households is exactly determined by SF1 table \texttt{P28H}), producing
$\Drw$.\footnote{Note that this does not affect our original data
  generation process since the constraints exactly specify the number of
households of each size.}
For example, for a household $\ell \in \mc L$ containing two individuals, we define
\begin{align*}
  \Pr_{\ell \sim \Drw}[\ell] \triangleq \Pr_{\ell \sim \mc P}[\ell] \cdot
  \frac{\Pr_{\ell' \sim
    \empdist}[\ell' \text{ contains two individuals}]}{\Pr_{\ell' \sim \mc
    P}[\ell'
  \text{ contains two individuals}]}.
\end{align*}
By construction, $\Drw$ will match the SF1 distribution on our chosen set of
characteristics $\mc C$. In our case $\mc C \triangleq \{1H, 1N, 2, 3, 4, 5, 6,
7+\}$ where $1H$ and $1N$ represent single-person households containing a
Hispanic individual and a non-Hispanic individual respectively. These
characteristics form a partition of the set of all households, and SF1
completely determines $\Pr_{\ell \sim \empdist}[\ell \text{ is of type } c]$ for all
$c \in \mc C$.\footnote{For a small fraction of blocks with incomplete
information, this is not strictly true. See \Cref{app:encoding} for details.}
Thus, for any $c \in \mc C$,
\begin{align*}
  \Pr_{\ell \sim \Drw}[\ell \text{ is of type } c]
  &= \sum_{\ell : \ell \text{ is of type } c} \Pr_{\ell \sim \Drw}[\ell] \\
  &= \sum_{\ell : \ell \text{ is of type } c} \Pr_{\ell \sim \mc P}[\ell] \cdot
  \frac{\Pr_{\ell' \sim \empdist}[\ell' \text{ is of type } c]}{\Pr_{\ell' \sim
    \mc P}[\ell'
  \text{ is of type } c]} \\
  &= \frac{\Pr_{\ell' \sim \empdist}[\ell' \text{ is of type } c]}{\Pr_{\ell' \sim \mc
  P}[\ell' \text{ is of type } c]} \sum_{\ell : \ell \text{ is of type } c}
  \Pr_{\ell \sim
  \mc P}[\ell] \\
  &= \frac{\Pr_{\ell' \sim \empdist}[\ell' \text{ is of type } c]}{\Pr_{\ell' \sim \mc
  P}[\ell' \text{ is of type } c]} \cdot \Pr_{\ell \sim \mc P}[\ell \text{ is of type }
  c] \\
  &= \Pr_{\ell' \sim \empdist}[\ell' \text{ is of type } c].
\end{align*}
Again, because SF1 completely determines $\Pr_{\ell \sim \empdist}[\ell \text{ is of
type } c]$, adjusting $\mc P$ in this way effectively accounts for the fact that
our two data sources do not exactly agree---that is, $\Pr_{\ell \sim \mc P}[\ell
\text{ is of type } c]$ is inconsistent with SF1.

Under this adjustment, again in Alabama, $\dTV(\Drw, \empdist) =
\ALTVDAdjustedAll$. Finally, for data incompleteness reasons detailed in
\Cref{app:encoding}, $\mc X_b$ is empty for some blocks, and we can only solve a
coarsened version of our original problem. If we only consider the distribution
$\empdistrw$ that excludes these blocks, we find $\dTV(\Drw, \empdistrw) =
\ALTVDAdjustedAccOne$.

It's not immediately obvious how to interpret these TVD figures. On the one
hand, we're only measuring TVD at the state level; TVD at the block level would
necessarily be larger, though we cannot measure it without access to complete
microdata. On the other hand, we know that the PUMS sample is biased, meaning
there is some irreducible TVD between it and any microdata consistent with SF1.
While we have partially accounted for this by adjusting the PUMS on household
size (going from $\empdist$ to $\empdistrw$), there may be irreducible TVD along
other dimensions. (For example, if the expected number of persons of a
particular race given by the PUMS is different from the number of persons of
that race reported by SF1 in the state, this would be another source of
irreducible TVD.) While we cannot conclusively resolve the sources of error at
play, we will find some evidence below that at least some of this error is due
to our methods, not irreducible TVD.

A similar trend holds for NV, summarized in \Cref{tab:tvd} (labeled ``no
reweighting''), though the total variation distance is substantially larger.
This is consistent with a broader trend: Nevada requires more MCMC iterations
per block, requires MCMC more often (i.e., $|\mcX_b|$ is more often greater than
$N$; see \Cref{fig:heavy-tail}), and requires more computation time overall
relative to Alabama. We speculate that this is because Nevada tends to have
substantially more people per block (an average of 75.8, as opposed to 35.3 for
Alabama) as well as a larger support for the household distribution $\mc D$
(i.e., more distinct household types). Replicating our work across more states
will likely lead to more insight here.

\begin{table}[htpb]
  \centering
  \begin{tabular}{c c c c c}
    \toprule
    State & $\lambda$ & $\dTV(\Dpums, \empdist)$ & $\dTV(\Drw, \empdist)$ & $\dTV(\Drw,
    \empdistrw)$ \\
    \midrule
    AL & no reweighting & \ALTVDUnadjustedAll & \ALTVDAdjustedAll &
    \ALTVDAdjustedAccOne \\
       & $10^{-3}$ & \ALlzzoTVDUnadjustedAll & \ALlzzoTVDAdjustedAll &
       \ALlzzoTVDAdjustedAccOne \\
       & $10^{-5}$ & \ALlzzzzoTVDUnadjustedAll & \ALlzzzzoTVDAdjustedAll &
       \ALlzzzzoTVDAdjustedAccOne \\
    \midrule
    NV & no reweighting & \NVTVDUnadjustedAll & \NVTVDAdjustedAll &
    \NVTVDAdjustedAccOne \\
       & $10^{-3}$ & \NVlzzoTVDUnadjustedAll & \NVlzzoTVDAdjustedAll &
       \NVlzzoTVDAdjustedAccOne \\
       & $10^{-5}$ & \NVlzzzzoTVDUnadjustedAll & \NVlzzzzoTVDAdjustedAll &
       \NVlzzzzoTVDAdjustedAccOne \\
    \bottomrule
  \end{tabular}
  \caption{Total variation distance between empirical distribution and statewide
  PUMS.}
  \label{tab:tvd}
\end{table}

\subsection{Improving representativeness via reweighting.}

When the PUMS distribution differs substantially from our empirical frequencies,
we can adjust the inputs to our algorithm to try to close the gap.
\Cref{alg:overall} depends on the input distribution $\mc D$, and we can
\textit{reweight} $\mc D$ (drawing some inspiration from \citet{wang2024post})
to produce a new distribution $\mc D_\lambda$ such that, when we run
\Cref{alg:overall}, the resulting empirical frequencies $\empdist$ are closer to
the original PUMS distribution $\mc P$. We provide a heuristic in what follows,
showing that our reweighting scheme reduces total variation distance to $\Drw$.

Suppose we run \Cref{alg:overall} once, resulting in empirical household type
frequencies $\hat q_\ell$. Consider the distribution $\mc D_\lambda$ defined as
follows for some smoothing parameter $\lambda$:
\begin{align*}
  \Pr_{\mc D_\lambda}[\bv_i] \propto \Pr_{\mc D}[\bv_i] \cdot \p{\frac{p_{\ell_i} +
  \lambda}{\hat q_{\ell_i} + \lambda}}.
\end{align*}
We might expect that if we run \Cref{alg:overall} on $\mc D_\lambda$ instead of
$\mc D$, we will end up with a lower total variation distance in our household
type distribution. Intuitively, this is because $\mc D_\lambda$ attempts to
``correct'' for the bias induced by \Cref{alg:overall}, boosting the frequency
of households that appear infrequently in $\empdist$ relative to $\Dpums$ and
reducing the frequencies of households that are overrepresented. \Cref{tab:tvd}
shows that reweighting $\mc D$ decreases total variation distance, though it
does not bring it down to 0. It is possible that additional adjustments could
reduce TVD further.

Given \Cref{ex:bias}, we might wonder whether multiracial households are
likely to be underrepresented in our data. This is indeed the case in both
states we analyze, where the frequency of multiracial households drops from
$\ALPUMSMRPct\%$ (AL PUMS) to $\ALSynMRPct\%$  (AL synthetic data) and from
$\NVPUMSMRPct\%$ (NV PUMS) to $\NVSynMRPct\%$ (NV synthetic data). At least
facially, this can be remedied by artificially boosting the frequencies of
multiracial households in $\mc D_\lambda$. As a crude example, if we let
$\lambda=10^{-5}$ and double the frequency of each multiracial household in $\mc
D_\lambda$ (before normalizing it), we raise the frequency of multiracial
households in $\empdist$ to $\NVmrtwoSynMRPct\%$ in Nevada, much closer to the
figure of $\NVPUMSMRPct\%$ found in the NV PUMS. Again, further work is needed
to determine whether this has broader effects on the realism of our synthetic
microdata. Practitioners interested in a particular aspect of the distribution
(e.g.,~multiracial households or single-parent households) should explicitly
verify that the distribution of those characteristics in the synthetic microdata
matches the PUMS distribution.

\section{Discussion and Limitations}
\label{sec:conclusion}

We have presented algorithms to generate synthetic data and demonstrated that
these algorithms are efficient in the Census context. While these data ``look
like'' microdata, it is important to recognize that no sample from this
distribution should be treated as ground truth; analyses should look for
consistent findings over multiple samples from this distribution. The
distribution chosen here is principled, but future work could test whether
alternative distributions produce more realistic data, and whether this paradigm
of synthetic data generation via combinatorial optimization is appropriate in
contexts beyond the Census.

Finally, we comment on potential uses for our synthetic microdata. One clear use
case is researchers who seek to \textbf{study the properties of disclosure avoidance
systems}. For example, a researcher could generate synthetic datasets, run a
disclosure avoidance method on it (e.g.,~TopDown \citep{abowd2018us}), and
perform fine-grained analyses of its privacy and accuracy properties. A growing
line of work seeks to do just this
\citep[e.g.,][]{cohen2022census,christ2022differential,kenny2021impact,bailie2023can},
and the methods presented here will enable a more systematic approach. Moreover,
scientists who depend on Census data may be interested in \textbf{biases induced by
disclosure avoidance algorithms}
\citep[e.g.~][]{mueller20222020,ruggles2019differential,hauer2021differential,santos2020differential,winkler2021differential};
a potential use for our methods is to estimate and perhaps correct for these
biases. If a researcher is interested in a particular statistic measured using
Census data, they could estimate the effect that a disclosure avoidance tool
will have on that statistic over the distribution of synthetic data we generate.
As noted in \Cref{sec:introduction}, we suggest that all analyses be
performed across multiple samples of synthetic data. Used appropriately, the
methods presented here can be a valuable tool for researchers to understand the
properties of privacy-preserving algorithms.

Our methods could easily be extended to other population synthesis settings,
potentially with more complex population models \citep[see,
e.g.,~][]{chapuis2022generation}. This might introduce additional complexity:
Unlike our PUMS-derived distribution, population models in general can have
exponentially large support (recall that $n$ is the size of the support of $\mc
D$). If, for example, every possible household exists with nonzero probability,
writing down the matrix $\bV$ (let alone solving an integer program with it) can
be computationally infeasible. To get around this, one could take $K$ samples
from the population distribution $\mc D$ to get an empirical distribution
$\hat{\mc D}_K$ with bounded support. We could even make this more efficient by
ruling out samples that are incompatible with the SF1 counts $\bc$. (For
example, if a block has no 5-person households, we would not need to include
5-person households when sampling $\hat{\mc D}_K$.) Doing so independently for
each block would still yield an exact match to the aggregate counts, though we
may need to make $K$ sufficiently large to ensure that a valid solution exists.

Our work has important limitations, both technical and conceptual. Technically,
our methods are not guaranteed to converge to the desired stationary
distribution, a limitation we are unlikely to overcome due to the NP-hardness of
the underlying sampling problem. As noted in \Cref{ex:bias} our choice of
distribution $\pi$ does not lead to an unbiased set of households. The
heuristics we present in \Cref{sec:evaluation} reduce but do not eliminate this
bias, and researchers should take care that synthetic data match the PUMS data
on characteristics important to their analyses. We are also limited by the fact
that the PUMS is only a sample of households, meaning low-frequency household
types may never appear.

Future work could extend our methods in a number of different ways. Our
particular problem formulation sought to match particular statistics on race,
ethnicity, gender, and household size. SF1 data contain far more information
than these, however, and in theory one could seek to match other characteristics
as well. For instance, researchers interested in studying older populations
may wish to match statistics from table \texttt{P25} (``Households by Presence
of People 65 Years and Over, Household Size, and Household Type''). Of course,
the more constraints we add, the larger the dimension $d$ of the problem. This
could increase the computational cost of finding solutions or overconstrain the
problem, making it impossible to solve exactly. A better understanding of this
trade-off is needed to take advantage of all of the data provided by SF1. Beyond
the Decennial Census, research and public policy often rely on other Census
products like the American Community Survey. Extending our methods to these
other data releases may be a fruitful avenue for future research. Finally, our
formulation treats the state-wide household distribution as invariant
conditional on the block-level statistics. If, however, the researcher has more
information about geographic variation that the block-level statistics do not
encode (for example, that a particular region has more multiracial households
than another), they could in principle encode that information by using a
different distribution over households $\mc D$ for each block.

Ultimately, there is plenty of room to customize the methods we have presented
here for any particular use case. Our results have shown that it is possible to
efficiently generate samples from a plausible distribution over synthetic Census
microdata. Our hope is that researchers will be able to tailor our approach
to fit their needs, enabling more comprehensive research into privacy-preserving
methods and their broader impacts on downstream analyses.

\paragraph*{Acknowledgements.}
The authors thank Parikshit Gopalan for helpful conversations. This work was
supported in part by a Google Research Scholar award and the Center for Research
on Computation and Society (CRCS), Harvard.

\bibliographystyle{abbrvnat}
\bibliography{refs}

\clearpage
\appendix
\paragraph*{Organization of the appendix.}

\Cref{app:encoding} contains detailed information on our encoding of Census data
into $d$-dimensional vectors. We present a rejection sampling algorithm that is
equivalent to our algorithm under our choice of $\pi$ in \Cref{app:rejection}.
In \Cref{app:simple-omitted,app:reduced-omitted}, we provide omitted proofs and
details from \Cref{sec:simple,sec:reduced} respectively. We present additional
empirical results for both the simple and reduced chains in
\Cref{app:additional-empirical}. We formally show that \probname\ is NP-hard in
\Cref{app:hardness}. In \Cref{app:bad-examples}, we construct pathological
synthetic examples under which the reduced chain is either not irreducible or
has exponentially large mixing time. Finally, in \Cref{app:final-alg}, we
describe our final algorithm and theoretically characterize its performance.

\section{Data Encoding Details}
\label{app:encoding}

Census data are encoded with 7 race categories (white, Black, American Indian and
Native Alaskan, Asian, Hawaiian and Pacific Islander, other, and two or more)
and two ethnicity categories (Hispanic or non-Hispanic). We encode each distinct
household $\bv_i$ and block-level statistic $\bc$ as a 135-dimensional integer
vector. The encoding is as follows, annotated with the dimensionality of each
component:
\begin{itemize}
  \item Number of individuals of each (race, ethnicity)
    combination \textbf{(14)}
  \item Number of adults (defined as 18 or older) of
    each race \textbf{(7)}
  \item Number of Hispanic adults \textbf{(1)} \item
    (Householder race) $\times$ (whether or not the household is a family)
    $\times$ (household size in $\{1, 2, 3, 4, 5, 6, 7+\}$) \textbf{(98)}. Note
    that these form a partition, so every household has a 1 in exactly one of
    these 98 components.
  \item For households with a Hispanic householder, (whether or not the
    household is a family) $\times$ (household size in $\{1, 2, 3, 4, 5, 6,
    7+\}$) \textbf{(14)}\footnote{Because ethnicity is encoded as a binary
      variable and the remaining two attributes are completely determined by the
    (race) $\times$ (is family) $\times$ (household size) variables, including
  both (Householder is Hispanic) and (Householder is not Hispanic) is redundant
here.}
  \item Number of households (a constant 1 for each household) \textbf{(1)}
\end{itemize}
Each row of the PUMS sample contains all of this information. We
choose this particular encoding because this information can also be found in
block-level aggregate data from SF1.\footnote{SF1 data can be accessed at
\url{https://www.nhgis.org/}.} Each block-level vector $\bc$ is the sum of the
these characteristics of all households in the block. These totals can be found
in a combination of SF1 tables \texttt{P3}, \texttt{P28A}--\texttt{H}, and
\texttt{P16A}--\texttt{H}.
Some of these tables do not account for \textit{group quarters} (as opposed to
households), which exist in around 1\% of blocks in both AL and NV.
We exclude these blocks from our analysis. In principle, our methods would also
apply when reconstructing group quarters. Moreover, there are blocks for which
no solution to our optimization problem exists (i.e., no combination of PUMS
households matches the reported block totals). This can happen because the PUMS
data are incomplete, since a ``rare'' household might be needed to match the
block-level constraints. We exclude these blocks as well (about 5\% in AL and
14\% in NV). In our publicly available code, when we encounter such blocks, we
simply reduce the dimensionality of the vectors and apply the same techniques to
solve them. For example, instead of matching the number of adults of each race
and ethnicity, we simply match the total number of adults in the block.

There are $n = 2,745$ distinct households (encoded as 135-dimensional vectors)
in the AL PUMS and $n = 4,094$ distinct households in the NV PUMS. The data we
use are subject to disclosure avoidance systems before their release. They do
not represent ``ground truth''; nevertheless, we treat them as the source of the
distribution we seek to sample from. Downstream analyses should be careful not
to rely on the exact locations of very rare households, since these are unlikely
to be accurate.

\section{Rejection Sampling Formulation}
\label{app:rejection}

Here, we present a rejection sampling algorithm (\Cref{alg:rejection}) that
samples from $\mcX$ according to the distribution given in \cref{eq:piD-def}.
This algorithm is far too inefficient to be practical; it is intended simply to
illustrate and provide motivation for the problem specification here.

\algrenewcommand\algorithmiccomment[1]{\hfill $\triangleright$ #1}

\begin{algorithm}
  \caption{\textproc{RejectionSampling}$(\bV, \bc, \mc D)$}
  \label{alg:rejection}
  \begin{algorithmic}
    \Repeat
    \Comment{Repeat until a sample is accepted}
    \State{$\bx \gets \mathbf{0} \in \Zplus^{n}$}
    \Comment{Empty multiset}
    \While{$\bV \bx \preceq \bc$}
    \Comment{$\preceq$ denotes element-wise $\le$.}
    \State{Sample $\bv_i$ from $\mc D$}
    \State{$\bx[i] \gets \bx[i] + 1$}
    \EndWhile
    \Until{$\bV \bx = \bc$}
    \State{\textbf{return} $\bx$}
  \end{algorithmic}
\end{algorithm}

\section{Omitted Proofs and Details from \Cref{sec:simple}}
\label{app:simple-omitted}

\subsection{Formal algorithm description for the simple chain}
\label{app:simple-alg}

\begin{algorithm}
  \caption{\textproc{Simple}($\bV, \bc, f, \gamma, t$)}
  \label{alg:gibbs}
  \begin{algorithmic}
    \Repeat
    \State $\bx \gets \mathbf{0} \in \Zplus^n$
    \For{$t$ iterations}
    \WithProbability{$0.5$}
    \State \textbf{continue}
    \EndWithProbability
    \State $i \gets$ a random integer in $[n]$ such that $\bv_i
    \preceq \bc$
    \Comment{Ignore ineligible households}
    \State $\br \gets \bc - \bV \bx_{i \gets 0}$
    \State $g_{\max} \gets \max~\{g \in \Zplus : g \bv_i \preceq \br\}$
    \State $\mc S \gets \{\bx_{i \gets g} : 0 \le g \le g_{\max}\}$
    \State $\bx \gets$ a random sample from $\tpg \given \bx \in \mc S$ (i.e.,
    $\propto f(\bx) \exp(-\gamma \|\bV \bx - \bc\|)$ on $\mc S$)
    \EndFor
    \Until{$\bV \bx = \bc$}
    \State \Return $\bx$
  \end{algorithmic}
\end{algorithm}

\subsection{Theoretical results for the simple chain}
\label{app:simple-theory}

First, we prove \Cref{lem:eps-sol-density}. We restate each result before
proving it.
\epssoldensity*

\begin{proof}
  First, we show that bounded TVD implies bounded TVD on conditional
  distributions, inflated by a factor of $1/\sigma(\mcX)$.
  \begin{align*}
    \dTV(\sigma, \sigma') &\le \varepsilon \\
    \|\sigma-\sigma'\|_1 &\le 2\varepsilon \\
    \sum_{\bx \in \mcX} |\sigma(\bx) - \sigma'(\bx)| &\le 2\varepsilon \\
    \sum_{\bx \in \mcX} \left|\frac{\sigma(\bx)}{\sigma(\mcX)} -
    \frac{\sigma'(\bx)}{\sigma(\mcX)}\right| &\le \frac{2\varepsilon}{\sigma(\mcX)} \\
    \sum_{\bx \in \mcX} \left|\frac{\sigma(\bx)}{\sigma(\mcX)} -
    \frac{\sigma'(\bx)}{\sigma'(\mcX)} + \frac{\sigma'(\bx)}{\sigma'(\mcX)} -
    \frac{\sigma'(\bx)}{\sigma(\mcX)}\right| &\le \frac{2\varepsilon}{\sigma(\mcX)} \\
    \sum_{\bx \in \mcX} \left|\frac{\sigma(\bx)}{\sigma(\mcX)} -
    \frac{\sigma'(\bx)}{\sigma'(\mcX)}\right| - \left|\frac{\sigma'(\bx)}{\sigma'(\mcX)} -
    \frac{\sigma'(\bx)}{\sigma(\mcX)}\right| &\le \frac{2\varepsilon}{\sigma(\mcX)}
    \tag{triangle inequality} \\
    \sum_{\bx \in \mcX} \left|\sigma_{\mcX}(\bx) - \sigma_{\mcX}'(\bx)\right| -
    \sigma'(\bx)\left|\frac{1}{\sigma'(\mcX)} - \frac{1}{\sigma(\mcX)}\right| &\le
    \frac{2\varepsilon}{\sigma(\mcX)} \\
    \|\sigma_{\mcX} - \sigma_{\mcX}'\|_1 - \sigma'(\mcX)\left|\frac{1}{\sigma'(\mcX)} -
    \frac{1}{\sigma(\mcX)}\right| &\le \frac{2\varepsilon}{\sigma(\mcX)} \\
    \|\sigma_{\mcX} - \sigma_{\mcX}'\|_1 &\le \frac{2\varepsilon}{\sigma(\mcX)} +
    \sigma'(\mcX) \frac{|\sigma(\mcX) - \sigma'(\mcX)|}{\sigma'(\mcX) \sigma(\mcX)}  \\
    \|\sigma_{\mcX} - \sigma_{\mcX}'\|_1 &\le \frac{2\varepsilon}{\sigma(\mcX)} +
    \frac{|\sigma(\mcX) - \sigma'(\mcX)|}{\sigma(\mcX)} \\
    \|\sigma_{\mcX} - \sigma_{\mcX}'\|_1 &\le \frac{3\varepsilon}{\sigma(\mcX)}
    \tag{$\dTV(\sigma, \sigma') \le \varepsilon$} \\
    \dTV(\sigma_{\mcX}, \sigma_{\mcX}') &\le \frac{3\varepsilon}{2\sigma(\mcX)}
  \end{align*}
  To show that this is tight to within a constant factor, consider an instance
  where $\mcX = \{\bx_0, \bx_1\}$ and assume without loss of generality
  $\sigma(\bx_0) \le \sigma(\bx_1)$. For a given $\varepsilon$ define $\sigma'$
  as follows:
  \begin{align*}
    \sigma'(\bx_0) &= \sigma(\bx_0) + \varepsilon \\
    \sigma'(\bx_1) &= \max(0, \sigma(\bx_1) - \varepsilon) \\
    \sigma'(\bx) &= \begin{cases}
      \sigma(\bx) & \sigma(\bx_1) \ge \varepsilon \\
      \sigma(\bx) \p{1 - \frac{\varepsilon - \sigma(\bx_1))}{1 - \sigma(\mcX)}} & \sigma(\bx_1) < \varepsilon
    \end{cases}
    \tag{$\bx \notin \mcX$}
  \end{align*}
  First, observe that if $\sigma(\bx_1) < \varepsilon$, then $\sigma'(\bx_1) = 0$ and
  $\dTV(\sigma_{\mcX}, \sigma'_{\mcX}) \ge \frac{1}{2}$. This is within a factor of 2
  of our upper bound since trivially $\dTV(\sigma_{\mcX}, \sigma'_{\mcX}) \le 1$. If
  $\sigma(\bx_1) \ge \varepsilon$, then
  \begin{align*}
    \dTV(\sigma_{\mcX}, \sigma'_{\mcX})
    &= \frac{1}{2} \|\sigma_{\mcX} - \sigma'_{\mcX}\|_1 \\
    &= \frac{1}{2} \sum_{\bx \in \mcX} |\sigma_{\mcX}(\bx) - \sigma'_{\mcX}(\bx)| \\
    &= \frac{1}{2} \sum_{\bx \in \mcX} \left|\frac{\sigma(\bx)}{\sigma(\mcX)} -
    \frac{\sigma'(\bx)}{\sigma'(\mcX)} \right| \\
    &= \frac{1}{2} \sum_{\bx \in \mcX} \left|\frac{\sigma(\bx)}{\sigma(\mcX)} -
    \frac{\sigma'(\bx)}{\sigma(\mcX)} \right| \tag{$\sigma'(\mcX) = \sigma(\mcX)$} \\
    &= \frac{1}{2\sigma(\mcX)} \sum_{\bx \in \mcX} \left| \sigma(\bx) - \sigma'(\bx) \right| \\
    &= \frac{1}{2\sigma(\mcX)} (|\sigma(\bx_0) - \sigma'(\bx_0)| + |\sigma(\bx_1) -
    \sigma'(\bx_1)|) \\
    &= \frac{\varepsilon}{\sigma(\mcX)}.
  \end{align*}
\end{proof}

We next provide a theoretical explanation for the poor performance of the simple
chain here, which we will use to motivate the reduced chain in
\Cref{sec:reduced}. In particular, we characterize how $\gamma$ impacts
$\numsampgamma$. We show how varying $\gamma$ creates a trade-off between raising
$\solutiondensitystar$ and lowering $\mixinggammalb$. First, we show that
$\solutiondensitystar$ increases with $\gamma$, ranging from 0 to 1 as $\gamma$
ranges from $-\infty$ to $\infty$.

\soldensity*
\begin{proof}
  First, we show that $\solutiondensitystar$ is monotonically increasing in
  $\gamma$. By definition,
  \begin{align*}
    \frac{d}{d\gamma} \solutiondensitystar
    &= \frac{d}{d\gamma} \frac{\tpg(\mcX)}{\tpg(\mc Y)} \\
    &= \frac{d}{d\gamma} \frac{\sum_{\bx \in \mcX}f(\bx) \exp(-\gamma \|\bV \bx
    - \bc\|_1)}{\sum_{\bx \in \mc Y} f(\bx) \exp(-\gamma \|\bV \bx - \bc\|_1)} \\
    &= \frac{d}{d\gamma} \frac{\sum_{\bx \in \mcX}f(\bx)}{\sum_{\bx \in
    \mc Y}f(\bx) \exp(-\gamma \|\bV \bx - \bc\|_1)}
    \tag{$\bV \bx - \bc = 0$ for $x \in \mcX$} \\
    &= \p{\sum_{\bx \in \mcX} f(\bx)} \frac{d}{d\gamma} \frac{1}{\sum_{\bx \in
    \mc Y}f(\bx) \exp(-\gamma \|\bV \bx - \bc\|_1)}.
  \end{align*}
  It suffices to show
  \begin{equation*}
    \sum_{\bx \in \mc Y} \frac{d}{d\gamma} f(\bx) \exp(-\gamma \|\bc - \bV
    \bx\|_1)
    = \sum_{\bx \in \mc Y} -\|\bc - \bV \bx\|_1 f(\bx) \exp(-\gamma \|\bc - \bV
    \bx\|_1)
    < 0.
  \end{equation*}
  Next, we show that $\solutiondensitystar$ has the claimed limits.
  For any $\bx' \in \mc Y \backslash \mcX$ and
  $\gamma < 0$,
  \begin{align*}
    \exp(-\gamma \|\bc - \bV \bx'\|_1) f(\bx')
    \ge \exp(-\gamma) f(\bx').
  \end{align*}
  Therefore,
  \begin{align*}
    \lim_{\gamma \to -\infty} \solutiondensitystar
    &= \lim_{\gamma \to -\infty} \frac{\tpg(\mcX)}{\tpg(\mc Y)} \\
    &= \lim_{\gamma \to -\infty} \frac{\sum_{\bx \in \mcX} f(\bx) \exp(-\gamma
    \|\bc - \bV \bx\|_1)}{\sum_{\bx \in \mc Y} f(\bx) \exp(-\gamma \|\bc - \bV
    \bx\|_1)} \\
    &\le \lim_{\gamma \to -\infty} \frac{\sum_{\bx \in \mcX} f(\bx) \exp(-\gamma
      \|\bc - \bV \bx\|_1)}{\sum_{\bx \in \mc Y \backslash \mcX} f(\bx)
      \exp(-\gamma \|\bc - \bV \bx\|_1)} \\
    &\le \lim_{\gamma \to -\infty} \frac{\sum_{\bx \in \mcX}
    f(\bx)}{\exp(-\gamma) \sum_{\bx' \in \mc Y \backslash \mcX} f(\bx')} \\
    &= 0.
  \end{align*}
  Similarly, for $\bx' \in \mc Y \backslash \mcX$ and $\gamma > 0$,
  \begin{align*}
    \exp(-\gamma \|\bc - \bV \bx'\|_1) f(\bx')
    \le \exp(-\gamma) f(\bx').
  \end{align*}
  Therefore,
  \begin{align*}
    \lim_{\gamma \to \infty} \solutiondensitystar
    &= \lim_{\gamma \to \infty} \frac{\tpg(\mcX)}{\tpg(\mc Y)} \\
    &= \lim_{\gamma \to \infty} \frac{\sum_{\bx \in \mcX} f(\bx) \exp(-\gamma
    \|\bc - \bV \bx\|_1)}{\sum_{\bx \in \mc Y} f(\bx) \exp(-\gamma \|\bc - \bV
    \bx\|_1)} \\
    &= \lim_{\gamma \to \infty} \frac{\sum_{\bx \in \mcX} f(\bx)}{\sum_{\bx
      \in \mc Y \backslash \mcX} f(\bx) \exp(-\gamma \|\bc - \bV \bx\|_1) +
    \sum_{\bx \in \mcX} f(\bx)} \\
    &\ge \lim_{\gamma \to \infty} \frac{\sum_{\bx \in \mcX}
      f(\bx)}{\exp(-\gamma) \sum_{\bx \in \mc Y \backslash \mcX} f(\bx) +
      \sum_{\bx \in \mcX} f(\bx)} \\
    &= 1.
  \end{align*}
\end{proof}

\expmixingtime*
\begin{proof}
  We will provide a lower an upper bound on the conductance of $\Mgamma$ and use
  Cheeger's inequality to lower-bound the mixing time. The conductance
  \citep[see, e.g.,][]{guruswami2016rapidly} of $\Mgamma$ is defined as
  \begin{equation}
    \Phi(\Mgamma) = \min_{\substack{\mc S \subset \mc Y\\ 0 < \tpg(\mc S)
    \le 1/2}} \frac{\Qgamma(\mc S, \overline{\mc S})}{\tpg(\mc S)},
  \end{equation}
  where $\Qgamma(\bx, \bx') = \tpg(\bx) \Pgamma(\bx, \bx')$ and $\Qgamma(\mc S,
  \overline{\mc S}) = \sum_{\bx \in \mc S, \bx' \notin \mc S} \Qgamma(\bx,
  \bx')$. We will show that in particular,
  $\Qgamma(\mcX, \overline{\mcX})/\tpg(\mcX)$ is small, providing an upper
  bound for $\Phi(\Mgamma)$.

  To do so, let $p_i = \Pr_{\mc D}[\bv_i]$ and $\ell$ = $\|\bv_i\|_1$. Observe
  that for any $\bx \in \mcX$ and $i \in [n]$, if $\gamma$ is sufficiently
  large,
  \begin{align*}
    \frac{\tpg(\bx)}{\tpg(\Delta(\bx, i))}
    &= \frac{\tpg(\bx)}{\sum_{g = 0}^{\bx[i]} \tpg(\bx_{i \gets g})} \\
    &= \frac{f(\bx)}{\sum_{g = 0}^{\bx[i]} f(\bx_{i \gets g}) \exp(-\gamma\|\bc
    - \bV \bx_{i \gets g}\|_1)} \\
    &= \frac{f(\bx)}{\sum_{g = 0}^{\bx[i]} f(\bx_{i \gets (\bx[i]-g)})
    \exp(-\gamma\|\bc - \bV \bx_{i \gets (\bx[i]-g)}\|_1)} \\
    &= \frac{f(\bx)}{\sum_{g = 0}^{\bx[i]} f(\bx_{i \gets (\bx[i]-g)})
    \exp(-\gamma g \|\bv_i\|_1)} \\
    &= \frac{1}{1 + \sum_{g = 1}^{\bx[i]} \frac{f(\bx_{i \gets
    (\bx[i]-g)})}{f(\bx)} \exp(-\gamma g \ell)} \\
    &\ge 1 - \sum_{g = 1}^{\bx[i]} \frac{f(\bx_{i \gets
    (\bx[i]-g)})}{f(\bx)} \exp(-\gamma g \ell) \tag{$\frac{1}{1+a} \ge 1-a$ for
    $a \ge 0$} \\
    &= 1 - \sum_{g = 1}^{\bx[i]} \frac{\binom{k -
    g}{\bx[i]-g}}{\binom{k}{\bx[i]} p_i^{g}} \exp(-\gamma g \ell) \\
    &\ge 1 - \sum_{g = 1}^{\bx[i]} p_i^{-g}
    \exp(-\gamma g \ell) \\
    &= 1 - \sum_{g = 1}^{\bx[i]} (p_i \exp(\gamma \ell))^{-g} \\
    &\ge 1 - \sum_{g = 1}^{\infty} (p_i \exp(\gamma \ell))^{-g} \\
    &= 1 - \frac{p_i^{-1} \exp(-\gamma \ell)}{1 - p_i^{-1} \exp(-\gamma \ell)}
    \tag{for $p_i^{-1} \exp(-\gamma \ell) < 1$} \\
    &\ge 1 - 2(\min_i p_i)^{-1} \exp(-\gamma \ell)
    \tag{for $p_i^{-1} \exp(-\gamma \ell) < 1/2$} \\
    &= 1 - R \exp(-\gamma \ell)
    \numberthis \label{eq:tpg-ratio-lb}
  \end{align*}
  for $R \triangleq 2(\min_i p_i)^{-1}$. With this,
  \begin{align*}
    \frac{\Qgamma(\mcX, \overline{\mcX})}{\tpg(\mcX)}
    &= \frac{\sum_{\bx \in \mcX, \bx' \in \overline{\mcX}_{\bx}} \tpg(\bx)
    \Pgamma(\bx, \bx')}{\tpg(\mcX)} \\
    &= \frac{\sum_{\bx \in \mcX} \tpg(\bx) \sum_{\bx' \in
    \overline{\mcX}_{\bx}} \Pgamma(\bx, \bx')}{\tpg(\mcX)} \\
    &= \frac{\sum_{\bx \in \mcX} \tpg(\bx) \sum_{i \in [n]} \sum_{\bx' \in
      \overline{\mcX}_{\bx} \cap \Delta(\bx, i)} \frac{\tpg(\bx')}{2n
    \tpg(\Delta(\bx, i))} }{\tpg(\mcX)} \\
    &= \frac{\sum_{\bx \in \mcX} \tpg(\bx) \sum_{i \in [n]}
      \frac{\tpg(\Delta(\bx, i)) - \tpg(\bx)}{2n \tpg(\Delta(\bx, i))}
    }{\tpg(\mcX)} \\
    &= \frac{\sum_{\bx \in \mcX} \tpg(\bx) \frac{1}{2n} \sum_{i \in [n]} \p{1 -
    \frac{\tpg(\bx)}{\tpg(\Delta(\bx, i))}}}{\tpg(\mcX)} \\
    &\le \frac{\sum_{\bx \in \mcX} \tpg(\bx) \frac{1}{2n} \sum_{i \in [n]} R
    \exp(-\gamma \ell)}{\tpg(\mcX)} \tag{by \Cref{eq:tpg-ratio-lb}} \\
    &= \frac{\sum_{\bx \in \mcX} \tpg(\bx) \frac{1}{2} R \exp(-\gamma
    \ell)}{\tpg(\mcX)} \\
    &= \frac{\tpg(\mcX) R \exp(-\gamma \ell)}{2\tpg(\mcX)} \\
    &= \frac{R \exp(-\gamma \ell)}{2}.
  \end{align*}
  As a result,
  \begin{equation}
    \Phi(\Mgamma)
    \le \frac{\Qgamma(\mcX, \overline{\mcX})}{\tpg(\mcX)}
    \le \frac{R \exp(-\gamma \ell)}{2}.
  \end{equation}
  Next, we lower-bound $\lambda_2(\Pgamma)$ with Cheeger's
  inequality~\citep{jerrum1988conductance,lawler1988bounds}:
  \begin{align*}
    \numberthis \label{eq:cheegers}
    1 - 2\Phi(\Mgamma) \le \lambda_2
  \end{align*}
  which yields
  \begin{equation}
    \lambda_2 \ge 1 - \exp(-\ell \gamma) R.
  \end{equation}
  Finally, we bound the relaxation time:
  \begin{align*}
    \relaxationgamma
    &= \frac{1}{1-\lambda_2} \\
    &\ge \frac{1}{1-(1 - \exp(-\ell \gamma) R)} \\
    &= \frac{1}{\exp(-\ell \gamma) R} \\
    &= \Omega(\ell \gamma).
  \end{align*}
\end{proof}

Taken together,
\Cref{lem:sol-density-increasing,lem:mixing-time} describe the trade-off in our
choice of $\gamma$: for small values of $\gamma$, samples from $\tpg$ rarely
fall in $\mcX$, making rejection sampling inefficient. For large values of
$\gamma$, the mixing time of the simple chain increases exponentially, requiring
many iterations of the simple chain to generate each sample. These results tell
us that the optimal choice of $\gamma$ is finite.

\finitegamma*

\begin{proof}
  Using \Cref{eq:num-samp-lb,lem:sol-density-increasing},
  \begin{align*}
    \lim_{\gamma \to -\infty} \numsampgamma
    &\ge \lim_{\gamma \to -\infty} \numsamplbgamma
    = \lim_{\gamma \to -\infty} \frac{(\relaxationgamma-1) \log
    \p{\frac{3}{4\varepsilon \solutiondensitystar}}}{\solutiondensitystar\p{1 +
    \frac{2\varepsilon}{3}}}
    = \infty.
  \end{align*}
  Using \Cref{eq:num-samp-lb,lem:mixing-time},
  \begin{align*}
    \lim_{\gamma \to \infty} \numsampgamma
    &\ge \lim_{\gamma \to \infty} \numsamplbgamma
    = \lim_{\gamma \to \infty} \frac{(\relaxationgamma-1) \log
    \p{\frac{3}{4\varepsilon \solutiondensitystar}}}{\solutiondensitystar\p{1 +
    \frac{2\varepsilon}{3}}}
    = \lim_{\gamma \to \infty} \Omega(\exp(\ell \gamma))
    = \infty.
  \end{align*}
  Since $\numsampgamma$ is finite for any finite $\gamma$, it must be minimized
  by some finite $\gamma$.
\end{proof}

\section{Omitted Proofs and Details from \Cref{sec:reduced}}
\label{app:reduced-omitted}

\subsection{An algorithm based on the reduced chain}
\label{app:reduced-alg}

\begin{algorithm}
  \caption{\textproc{Reduced}($\bV, \bc, f, k, t, \bx_0$)}
  \label{alg:reduced}
  \begin{algorithmic}
    \State $\bx \gets \bx_0$
    \For{$t$ iterations}
    \WithProbability{$0.5$}
    \State \textbf{continue}
    \EndWithProbability
    \State $\bz \gets$ random $k$ items from $\bx$ without replacement
    \WithProbability{$1 - 1/\prod_i \binom{\bx[i]}{\bz[i]}$}
    \State \textbf{continue}
    \EndWithProbability
    \State $\mc Z(\bz) \gets \{\bz' : \bV \bz = \bV \bz'\}$
    \Comment{Using ILP; can be cached for efficiency}
    \State $\mc S \gets \varnothing$
    \For{$\bz' \in \mc Z(\bz)$}
    \State $\bx' \gets \bx - \bz + \bz'$
    \State Add $\bx'$ to $\mc S$
    \EndFor
    \State $\bx \gets$ a random $\bx \in \mc S$ according to $\pi \given \bx
    \in \mc S$ (i.e., proportional to $f(\cdot)$ on $\mc S$)
    \EndFor
    \State \Return $\bx$
  \end{algorithmic}
\end{algorithm}

\paragraph*{Choosing an initial state $\bx_0$.}

In order to guarantee that the initial state for the reduced chain $\bx_0$
satisfies $\pi(\bx_0) \ge 1/|\mcX|$, we use the fact that ILP solvers like
Gurobi support maximizing linear objectives subject to a linear constraint.
While our chosen $f(\cdot)$
from \cref{eq:piD-def} is not linear in $\bx$, we can instead use the following
linear approximation to $\log f$:
\begin{align*}
  \numberthis \label{eq:pi-approx}
  L(\bx) = \sum_{i=1}^n \bx[i] \log \Pr_{\mc D}[\bv_i].
\end{align*}
Note that to obtain $L$, we dropped the multinomial coefficient in
\cref{eq:piD-def}. To obtain an intial $\bx_0$ with large $\pi(\bx_0)$, we use
Gurobi to enumerate the $N$ solutions with the largest values of $L$ and, of
those, choose the one that maximizes $f$. Gurobi additionally provides a bound
on the largest objective value of any solution not yet enumerated, which we can
translate into a bound on $f$. This allows us to guarantee that our chosen
initial $\bx_0$ satisfies $\pi(\bx_0) \ge 1/|\mcX|$.

Finally, we remark on hyperparameters $k$ and $t$. As discussed in
\Cref{sec:reduced-empirical}, $k=3$ appears to be a good choice. For $t$,
experiments with Alabama and Nevada suggest $t \approx 10^5$ suffices;
however, given that AL and NV have different characteristics, it is possible
that other states will require larger values of $t$.

\subsection{Omitted proofs from \Cref{sec:reduced-theory}}
\label{app:proofs-reduced}

We restate each result from \Cref{sec:reduced-theory} before proving it.

\statespaceub*

\begin{proof}
  We take inspiration from a long line of work that bounds the number of
  feasible solutions to a knapsack problem
  problem~\citep{beged1972lower,padberg1971remark,achou1974number,lambe1974bounds,lambe1977upper,hujter1988improved,mahmoudvand2010number}.
  The first inequality is trivially true since $\mcX \subset \mc Y$.
  We use the following combinatorial identity (see, e.g., \citet[eq.
  (5.2)]{feller1968introduction}). Let $A_{r, n}$ be the number of distinct
  tuples $\br \in \Zplus^n$ such that $\sum_{i=1}^n \br[i] = r$. Then,
  \begin{align*}
    A_{r, n} = \binom{n+r-1}{r}.
    \numberthis \label{eq:feller}
  \end{align*}
  Let $j$ be the index such that $\bv_i[j] = 1$ for all $i$, which exists by
  assumption since $\bc$ encodes the number of elements in each solution.
  Observe that every $\bx \in \mc Y$ satisfies 
  \begin{align*}
    \bV \bx &\le \bc \\
    (\bV \bx)[j] &\le \bc[j] \\
    \sum_{i=1}^n \bx[i] \bv_i[j] &\le m \\
    \sum_{i=1}^n \bx[i] &\le m
  \end{align*}
  This means that each $\bx \in \mc Y$ satisfies $\sum_{i=1}^n \bx[i] = \ell$
  for some $\ell \in \{0, \dots, m\}$. Thus, by \cref{eq:feller},
  \begin{align*}
    |\mc Y|
    &\le \sum_{\ell=0}^m A_{\ell, n} \\
    &= \sum_{\ell=0}^m \binom{n+\ell-1}{\ell} \\
    &= \sum_{\ell=0}^m \binom{n+\ell-1}{n-1} \\
    &= \binom{n+m}{n} \tag{hockey stick identity} \\
    &= \binom{n+m}{m}.
  \end{align*}
  Therefore,
  \begin{align*}
    \log |\mc Y|
    &\le \log \binom{n+m}{m} \\
    &\le \log \p{\frac{e(n+m)}{m}}^m \\
    &= m \log \p{\frac{e(n+m)}{m}} \\
    &\le m \p{1 + \log \p{\frac{n}{m} + 1}} \\
    &\le m \p{1 + \log (n + 1)}.
  \end{align*}
  The second line is a standard upper bound on binomial coefficients and follows
  from the fact that $m! \ge \frac{m^m}{e^{m-1}}$. 
\end{proof}

\hyperrectangle*

\begin{proof}
  We will show by induction that for any $\bx, \bx' \in \mcX$, $\bx$ can be
  transformed to $\bx'$ through a sequence of transitions under $\Preducedtwo$.
  Because we're considering the case where $\mc V$ is the intersection between a
  lattice and a hyperrectangle, it suffices to consider each dimension
  independently. Thus, we proceed under the assumption that $d = 1$. Let $\by$
  and $\by'$ respectively be the matrix representations of $\bx$ and $\bx'$
  obtained by listing all of their elements $\bv_i$ in order, with repeats for
  multiplicity. Thus, $\by, \by' \in \Zplus^{m}$. Note that this is a one-to-one
  mapping.

  We proceed by induction on $s \in \{1, \dots, m\}$. Assume $\by[i] = \by'[i]$
  for all $i < s$. We will show the existence of a path from $\bx$ to some
  $\bx^+$ such that the corresponding $\by^+[i] = \by'[i]$ for all $i \le s$. If
  $\by[s] = \by'[s]$, we are done; otherwise, assume without loss of generality
  that $\by[s] < \by'[s]$.

  Let $j$ be an index such that $j > s$ and $\by[j] > \by'[j]$.
  This must exist since $\|\by\|_1 = \bV \bx = \bV \bx' = \|\by'\|_1$, $\by[s]'
  > \by[s]$, and $\by[i] = \by'[i]$ for all $i < s$. We will construct a valid
  solution $\bx''$ by
  removing $(\by[s], \by[j])$ from $\bx$ and replacing them with $(\by[s] + 1,
  \by[j] - 1)$. By convexity, these two new elements are both in $\mc V$.
  Repeating this argument produces a path from $\bx$ to $\bx^+$ where $\bx^+[i]
  = \bx'[i]$ for all $i \le s$ as desired.
\end{proof}

\reducedconvergence*
\begin{proof}
  For clarity of notation, we will write $\pi^+$ instead of $\pi^+(\pi_0)$. Let
  $\mc C(\bx_0)$ be the partition to which $\bx_0$ belongs, i.e., $\bx_0 \in \mc
  C(\bx_0)$. For a connected component $\mc C$, consider the Markov chain
  $\Mreduced(\mc C)$ defined over the states in $\mc C$. $\Mreduced(\mc
  C(\bx_0))$ is lazy and irreducible and $\pi$ satisfies the detailed balance
  equations. As a result, the stationary distribution of $\Mreduced(\mc
  C(\bx_0))$ is $\pi \given \bx \in \mc C(\bx_0)$. Because $\bx_0$ is sampled
  from $\pi_0$, we can write $\pi^+$ as a linear combination of conditional
  distributions:
  \begin{align*}
    \numberthis \label{eq:pi-plus}
    \pi^+
    = \sum_{\bx \in \mcX} \pi_0(\bx) \cdot (\pi \given \bx \in \mc C(\bx))
    = \sum_{i \in [c]} \pi_0(\mc C_i) \cdot (\pi \given \bx \in \mc C_i).
  \end{align*}
  Therefore,
  \begin{align*}
    \dTV(\pi^+, \pi)
    &= \frac{1}{2} \sum_{\bx \in \mcX} |\pi^+(\bx) - \pi(\bx)| \\
    &= \frac{1}{2} \sum_{i \in [c]} \sum_{\bx \in \mc C_i} |\pi^+(\bx) -
    \pi(\bx)| \\
    &= \frac{1}{2} \sum_{i \in [c]} \sum_{\bx \in \mc C_i} \left|\pi_0(\mc C_i)
    \cdot (\pi \given \bx \in \mc C_i)(\bx) - \pi(\bx)\right| \tag{by
    \Cref{eq:pi-plus}} \\
    &= \frac{1}{2} \sum_{i \in [c]} \sum_{\bx \in \mc C_i} \left|\frac{\pi(\bx)
    \pi_0(\mc C_i)}{\pi(\mc C_i)} - \pi(\bx)\right| \\
    &= \frac{1}{2} \sum_{i \in [c]} \sum_{\bx \in \mc C_i} \pi(\bx)
    \left|\frac{\pi_0(\mc C_i)}{\pi(\mc C_i)} - 1 \right| \\
    &= \frac{1}{2} \sum_{i \in [c]} \pi(\mc C_i) \left|\frac{\pi_0(\mc
    C_i)}{\pi(\mc C_i)} - 1 \right| \\
    &= \frac{1}{2} \sum_{i \in [c]} \left|\pi_0(\mc C_i) - \pi(\mc C_i) \right|.
    \numberthis \label{eq:dtv-pi-plus}
  \end{align*}
  For any distribution $\pi'$ such that $\pi'(\mc C_i) = \pi_0(\mc C_i)$ for all
  $i \in [c]$,
  \begin{align*}
    \dTV(\pi', \pi)
    &= \frac{1}{2} \sum_{\bx \in \mcX} |\pi'(\bx) - \pi(\bx)| \\
    &= \frac{1}{2} \sum_{i \in [c]} \sum_{\bx \in \mc C_i} |\pi'(\bx) -
    \pi(\bx)| \\
    &\ge \frac{1}{2} \sum_{i \in [c]} \left| \sum_{\bx \in \mc C_i} \pi'(\bx) -
    \pi(\bx) \right| \\
    &= \frac{1}{2} \sum_{i \in [c]} \left|\pi'(\mc C_i) - \pi(\mc C_i) \right|
    \\
    &= \frac{1}{2} \sum_{i \in [c]} \left|\pi_0(\mc C_i) - \pi(\mc C_i) \right|
    \tag{by assumption} \\
    &= \dTV(\pi^+, \pi) \tag{by \Cref{eq:dtv-pi-plus}}
  \end{align*}
  as claimed.
\end{proof}

\section{Additional Empirical Results}
\label{app:additional-empirical}

\paragraph*{Computation time.}

To compare the computation time per iteration of the simple and reduced chains,
we select a random 20 blocks from Alabama and perform $10^5$ MCMC iterations of
each method. We choose blocks such that:
\begin{itemize}
  \item $|\mcX| > 1$ (some blocks have no solution with our formulation, which
    we ignore; in practice, we can solve these by matching a subset of
    statistics. See \Cref{app:encoding} for details.)
  \item $m > 3$ (since we will be running the reduce chain with $k=3$, and the
    reduced chain is trivial when $m \le k$)
\end{itemize}
For each block, we run the simple chain with $\gamma = 0.8$ and the reduced
chain with $k=3$ for $10^5$ iterations. Note that $\gamma$ is unlikely to affect
computation time; it primarily influences mixing time. We find that $10^5$
iterations requires on average 10.22 seconds (std. dev. 2.71) and 8.23 seconds
(std. dev. 6.29) for the simple and reduced chains respectively. Experiments are
run on an Apple M1 MacBook Pro with 32GB memory. For reference, enumerating 5000
solutions via ILP for 20 randomly selected Alabama blocks (with at least 5000
solutions each) takes 4.58 seconds (std. dev. 0.95) on the same hardware. As a
result, running the reduced chain on blocks with $|\mcX_b| \ge 5000$ does not
dramatically increase computation time.

As a final point of reference, running \Cref{alg:overall} on a computing cluster
with $N=5,000$, $k=3$, and $t=10^5$ takes around 21 days (Alabama) and 37 days
(Nevada) of 12-threaded CPU time to sample a dataset for the entire state. This
is highly parallelizable since our algorithm operates independently on each
block.

\begin{figure}[ht]
  \centering
  \begin{subfigure}[b]{0.48\textwidth}
    \includegraphics[width=\textwidth]{./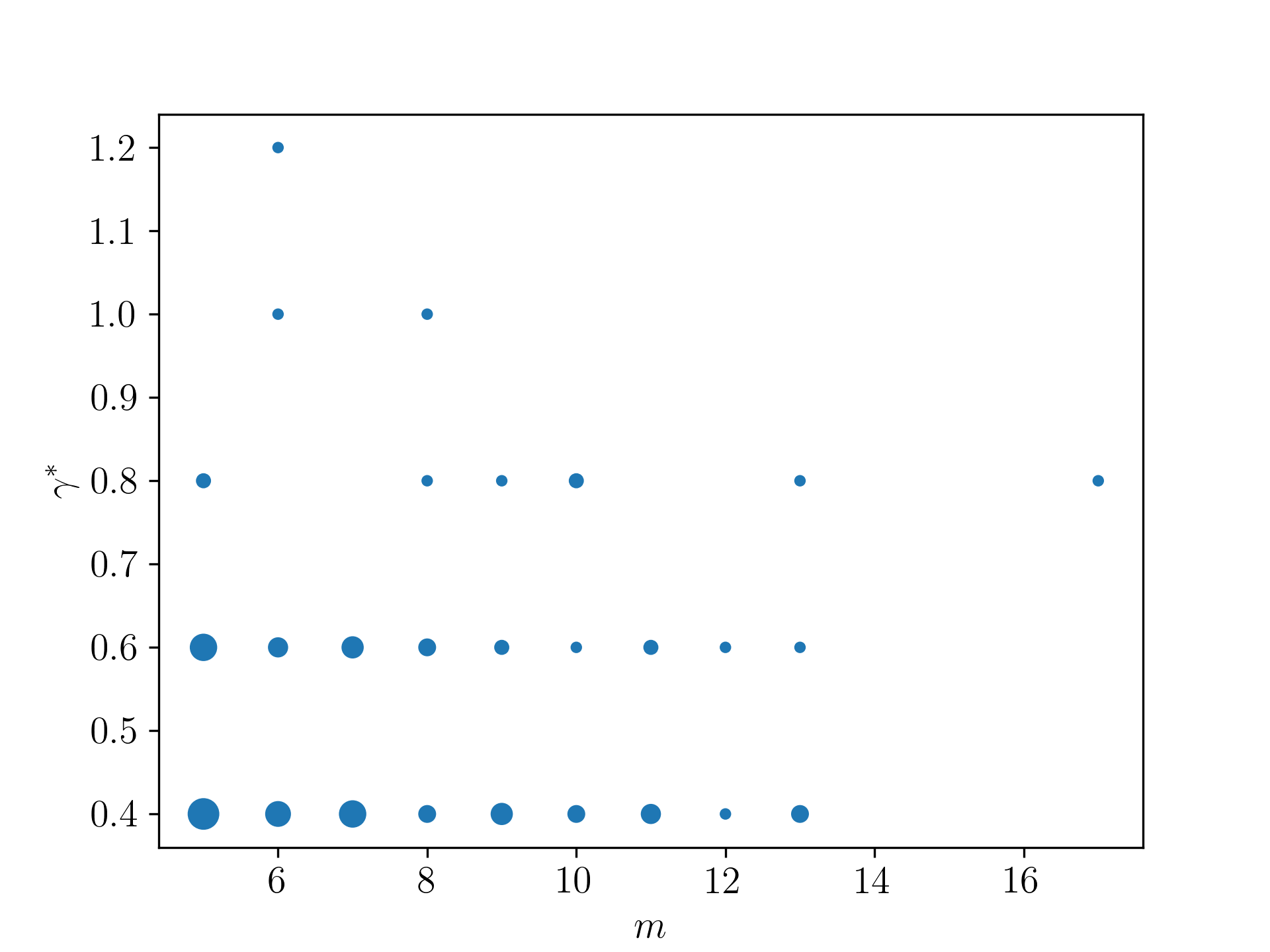}
    \caption{AL}
    \label{fig:opt-gamma-AL}
  \end{subfigure}
  \hfill
  \begin{subfigure}[b]{0.48\textwidth}
    \includegraphics[width=\textwidth]{./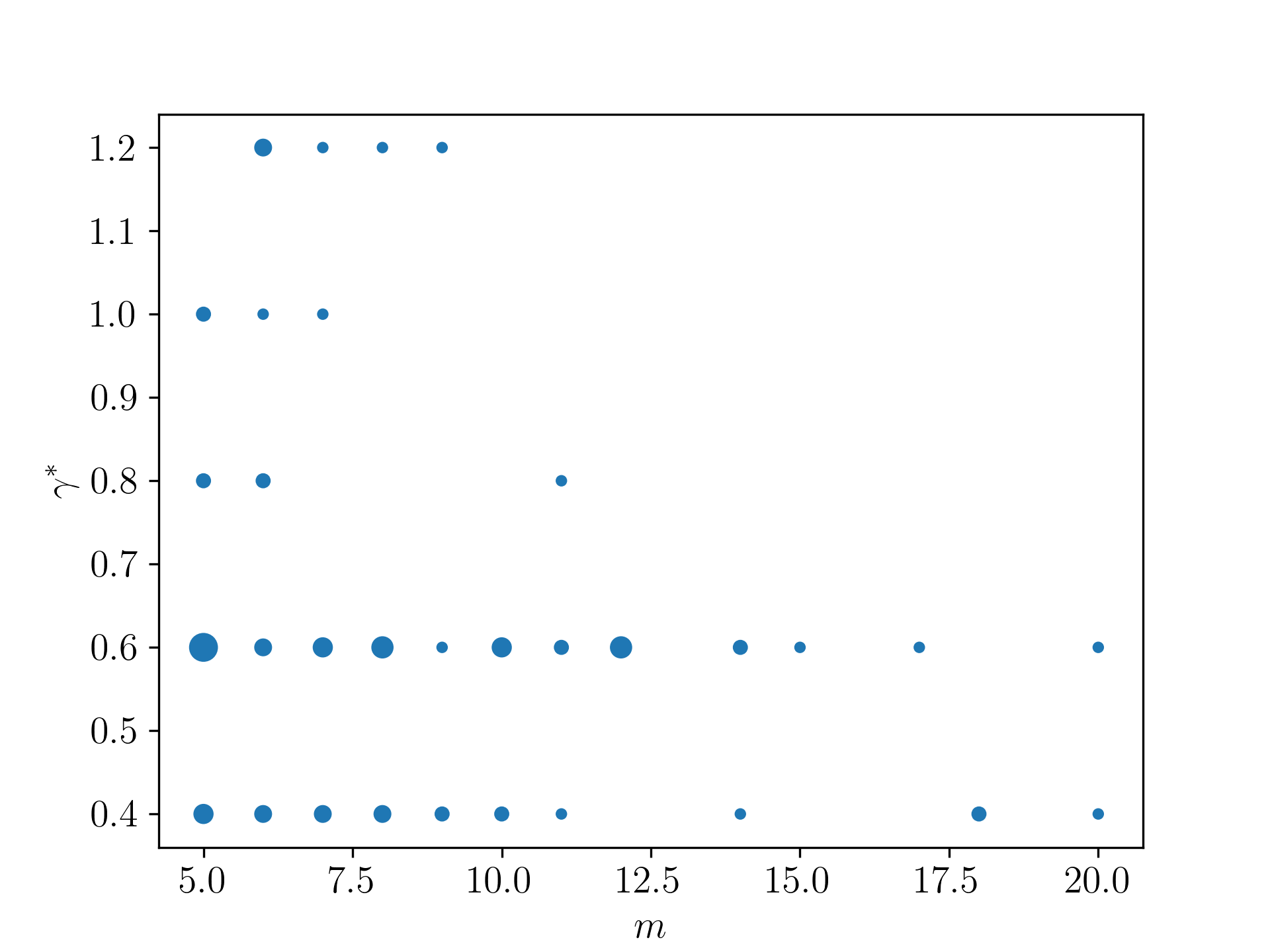}
    \caption{NV}
    \label{fig:opt-gamma-NV}
  \end{subfigure}
  \caption{Optimal choice of $\gamma$ vs. $m$. Marker size is proportional to
  the number of such blocks.}
  \label{fig:opt-gamma}
\end{figure}

\begin{figure}[ht]
  \centering
  \begin{subfigure}[b]{0.48\textwidth}
    \includegraphics[width=\textwidth]{./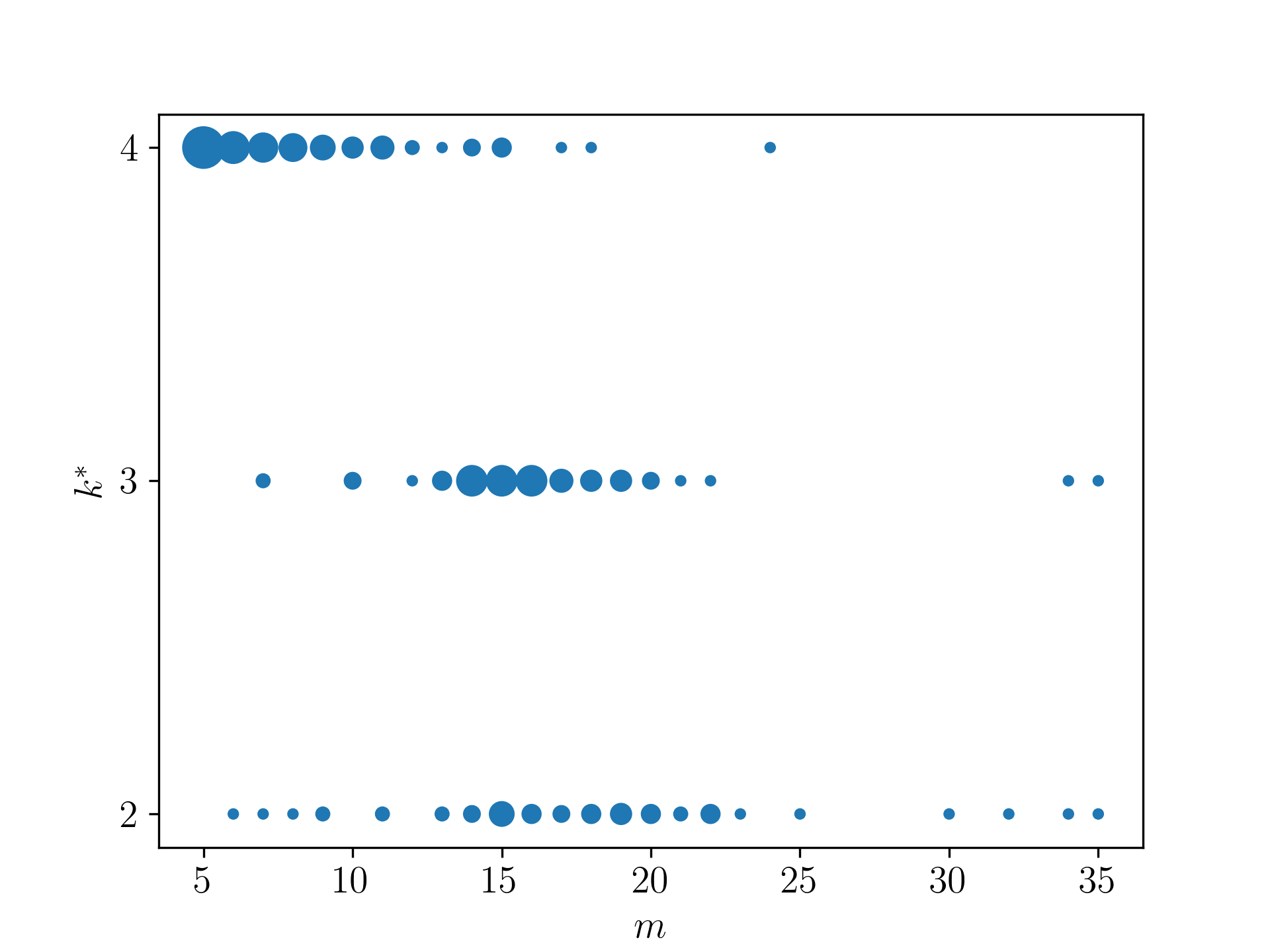}
    \caption{AL}
    \label{fig:opt-k-AL}
  \end{subfigure}
  \hfill
  \begin{subfigure}[b]{0.48\textwidth}
    \includegraphics[width=\textwidth]{./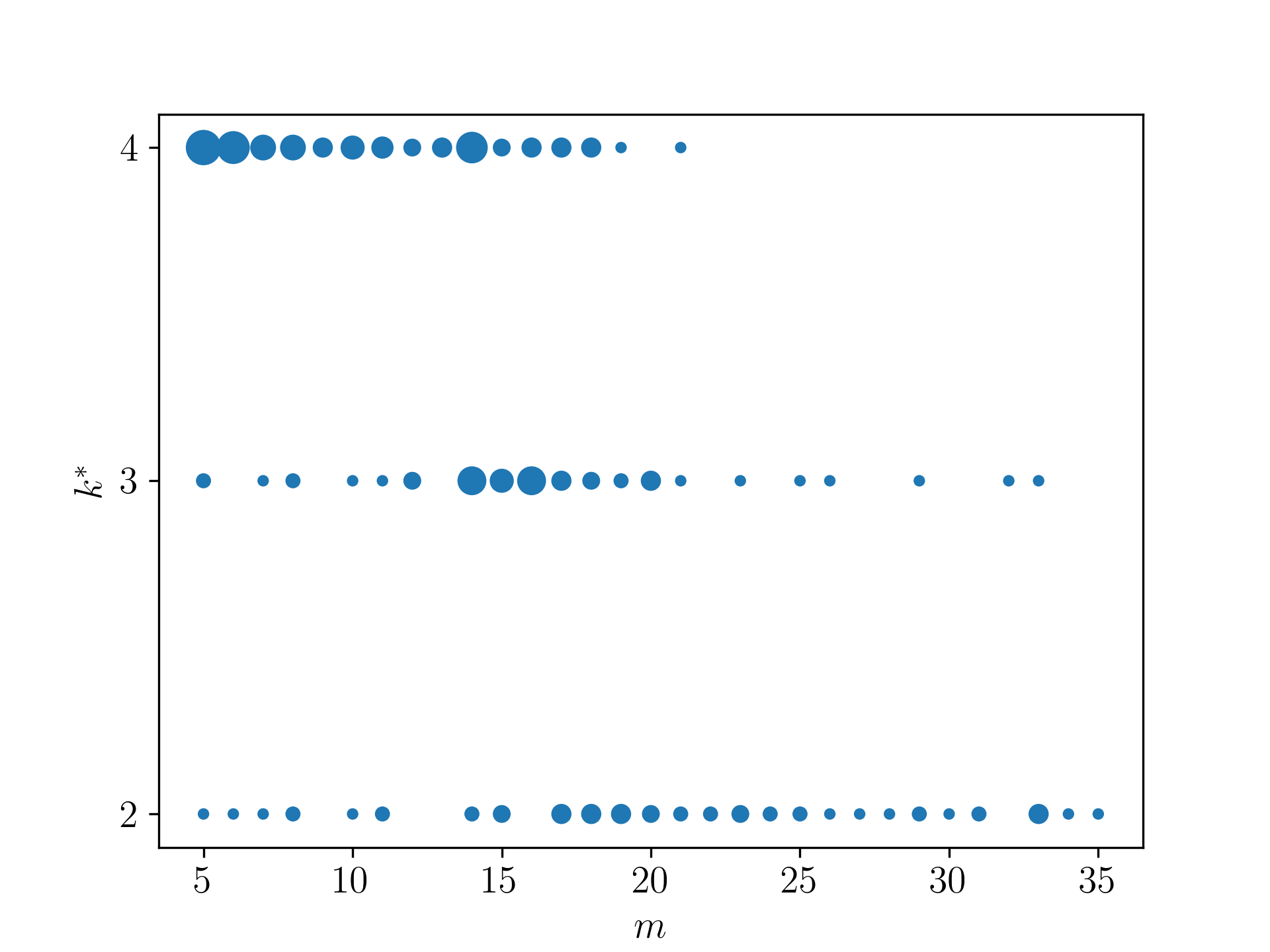}
    \caption{NV}
    \label{fig:opt-k-NV}
  \end{subfigure}
  \caption{Optimal choice of $k$ vs. $m$. Marker size is proportional to the
  number of such blocks.}
  \label{fig:opt-k}
\end{figure}

\paragraph*{Optimal choices of $\gamma$ and $k$.}

\Cref{fig:opt-gamma} shows the relationship between $m$ and the $\gamma^*$ that
minimizes $\numsamplbgammastar$. There are a few blocks for which $\gamma = 1.2$
appears to be optimal, but this is often because low-precision estimates for
$\lambda_2(P_{\gamma=1.2})$ lead to underestimates of
$\underline{N}_{\gamma=1.2}$. Computing $\lambda_2(\Pgamma)$ is computationally
very expensive for larger values of $\gamma$, which is why we don't attempt to
evaluate $\gamma > 1.2$. \Cref{fig:opt-k} provides analogous results for the
reduced chain, finding that $k = 4$ appears to be optimal for smaller blocks,
but as blocks get larger, smaller values of $k$ are better.

\begin{figure}[ht]
  \centering
  \includegraphics[width=0.8\textwidth]{./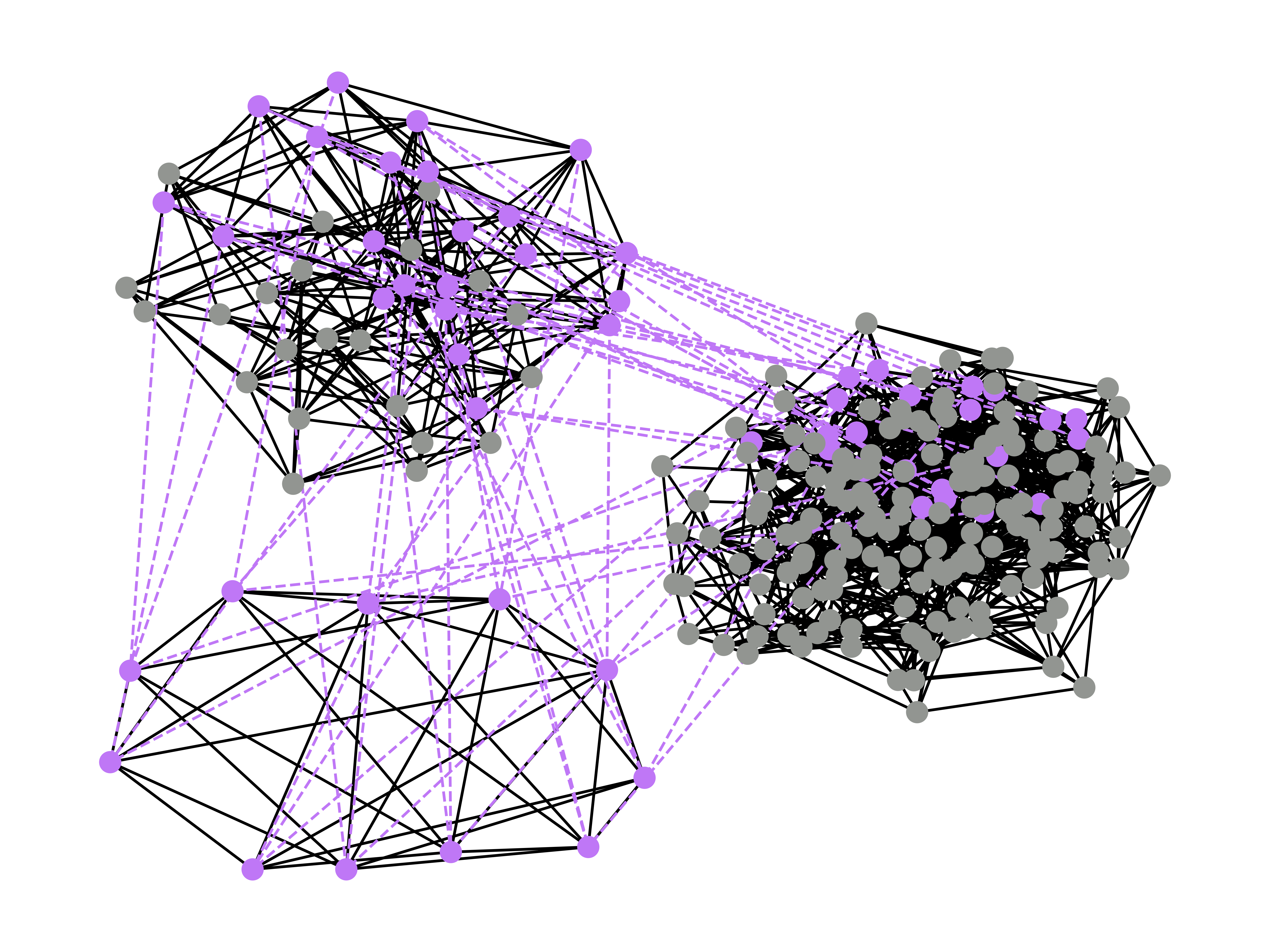}
  \caption{Connected components in the state space of the reduced chain with
    $k=2$ for a particular block in Alabama (Madison County, tract 2501, block
    2053). Dashed purple lines show additional crossing edges present with
  $k=3$, making the resulting graph connected.}
  \label{fig:components}
\end{figure}

\paragraph*{Disconnected graphs.}

Finally, we consider instances where the reduced chain is not irreducible for
$k=2$. \Cref{fig:components} shows one such example, where we draw the
(undirected) graph given by one-step transitions using $P_{k=2}$. Each node in
the graph is some $\bx \in \mcX$. The graph has 3 connected components, and
crucially, increasing to $k=3$ adds many crossing edges (shown with purple
dashes). Other examples we examined are qualitatively similar, though the
density of additional connections with $k=3$ varies.

\section{Hardness Results}
\label{app:hardness}
Existing hardness results for multidimensional knapsack and subset sum do not
strictly apply here since (1) we are interested in the decision version, and (2)
our solutions can be multisets. However, standard 3SAT reductions to subset sum
can be easily adapted to our multidimensional multiset case.
\begin{claim}
  \label{clm:hardness}
  \probname\ is NP-hard.
\end{claim}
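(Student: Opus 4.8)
The plan is to prove NP-hardness by a polynomial-time reduction from \emph{Exact Cover by 3-Sets} (X3C), a classical NP-complete problem: given a ground set $X$ with $|X| = 3q$ and a collection $\mc C$ of three-element subsets of $X$, decide whether some subcollection $\mc C' \subseteq \mc C$ partitions $X$ (each element covered exactly once). I prefer this over the ``3SAT $\to$ Subset Sum'' route hinted at in the text because X3C is already an \emph{exact covering} problem, so it maps onto the equality constraint $\bV \bx = \bc$ without digit-packing gadgets or slack variables, and in particular the fact that we are allowed to use vectors with multiplicity (the ``multiset'' in multidimensional multiset sum) causes no difficulty.

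Given an X3C instance, I would construct the \probname\ instance with $d \triangleq 3q$ (one coordinate per element of $X$), one vector $\bv_C \in \Zplus^{3q}$ for each $C \in \mc C$ defined to be the $\{0,1\}$-indicator vector of $C$ (exactly three ones), and target $\bc \triangleq \mathbf 1$, the all-ones vector. All data are nonnegative integers, and the instance has size polynomial in that of the X3C instance, so the reduction runs in polynomial time.

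Correctness has two directions. Given an exact cover $\mc C'$, its $\{0,1\}$-indicator $\bx$ satisfies $(\bV \bx)[u] = |\{C \in \mc C' : u \in C\}| = 1$ for every $u \in X$, so $\bx \in \mcX$ and $\mcX \ne \varnothing$. Conversely, given any $\bx \in \mcX \subseteq \Zplus^{|\mc C|}$, the coordinate-$u$ equation reads $\sum_{C \ni u} \bx[C] = 1$ with all $\bx[C]$ nonnegative integers; this forces every $\bx[C] \in \{0,1\}$ (if some $\bx[C] \ge 2$, then for any $u \in C$ the coordinate-$u$ sum would be $\ge 2$) and forces each $u$ to lie in exactly one $C$ with $\bx[C] = 1$, so $\{C : \bx[C] = 1\}$ is an exact cover. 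Hence $\mcX \ne \varnothing$ if and only if the X3C instance is a yes-instance, and since X3C is NP-hard, so is \probname.

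I do not anticipate a genuine obstacle; the only point requiring care is precisely what makes adapting the textbook 3SAT-to-Subset-Sum construction slightly delicate — namely that permitting repeated use of a vector could a priori create spurious solutions (e.g.\ ``padding'' an unsatisfied clause). The X3C encoding sidesteps this because the target $\mathbf 1$ together with nonnegativity of the entries of $\bV$ already rules out multiplicities above $1$. If one instead wanted to follow the paper's phrasing literally, an alternative is to reduce from Subset Sum, adding for each item $i$ a slack vector $\be_i$ and a constraint row $\bx[i] + \bs[i] = 1$ to enforce the $0/1$ restriction; I would mention this as a remark but carry out the X3C version as the main argument.
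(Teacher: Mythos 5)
Your reduction from X3C is correct, and it takes a genuinely different route from the paper. The paper reduces from 3SAT by adapting the classical 3SAT-to-Subset-Sum construction: it introduces $2n$ literal vectors plus $3m$ clause slack vectors with magnitudes $4,5,6$ and target entry $7$ per clause, and then must argue that the slack encoding still forces at least one literal column per clause even though columns may be reused. Your X3C reduction avoids that machinery entirely: one indicator column per triple, target $\mathbf 1$, and the nonnegativity of entries together with the all-ones target automatically forces every $\bx[C] \in \{0,1\}$, so the ``multiset'' relaxation costs nothing. This is the right observation, and you correctly identify it as the crux — it is exactly what makes the direct Subset-Sum digit-packing gadgets delicate in the multiset setting, and exactly why X3C is the cleaner source problem here. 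What the paper's reduction buys is a more self-contained connection to the textbook Subset-Sum hardness argument (and perhaps a closer structural resemblance to the Census instances, where entries of $\bc$ are not $0/1$); what yours buys is brevity, no slack variables, and a one-line argument that the equality constraint rules out multiplicity $\ge 2$. One small cosmetic note: the paper's instances always include a row of $\bV$ consisting entirely of $1$'s (encoding the household count); if you wanted your hard instances to have that same shape you could append such a row with $\bc$-entry $q$, which any exact cover satisfies automatically — but this is not required by the statement of \probname, so your proof stands as written.
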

\begin{proof}
  Given a 3SAT instance with variables $y_1, \dots, y_n$ and clauses $c_1, \dots,
  c_m$, we can construct a \probname\ instance with $2n + 3m$ elements and
  dimension $n + m$ using a slight variant of the standard 3SAT reduction to
  subset sum. We use the following encoding. For $i \in [n]$,
  \begin{align*}
    \bv_{2(i-1)}[j] &=
    \begin{cases}
      1 & j \le n \wedge i = j \\
      1 & j > n \wedge c_{j-n} \text{ contains } y_i \\
      0 & \text{otherwise}
    \end{cases} \\
    \bv_{2i-1}[j] &=
    \begin{cases}
      1 & j \le n \wedge i = j \\
      1 & j > n \wedge c_{j-n} \text{ contains } \overline{y_i} \\
      0 & \text{otherwise}
    \end{cases} \\
    \bv_{2i}[j] &=
    \begin{cases}
      4 & j = 2n -1 + i \\
      0 & \text{otherwise}
    \end{cases} \\
    \bv_{2i+1}[j] &=
    \begin{cases}
      5 & j = 2n + i \\
      0 & \text{otherwise}
    \end{cases} \\
    \bv_{2i+2}[j] &=
    \begin{cases}
      6 & j = 2n + 1 + i \\
      0 & \text{otherwise}
    \end{cases}
  \end{align*}
  Intuitively, the $\bv_i$ for $i \le 2n$ represent the variables, with one
  variable for $y_i$ and one for $\overline{y}_i$. The remaining $3m$ vectors
  are slack vectors for the clauses, corresponding to the fact that clause $j$
  can be satisfied by 1, 2, or 3 of its variables. Let the counts vector be
  \begin{align*}
    \bc[j] =
    \begin{cases}
      1 & j \le n \\
      7 & j > n
    \end{cases}.
  \end{align*}
  We will show that this \probname\ instance is feasible if and only if the
  original 3SAT instance was feasible.

  \textbf{3SAT solution $\Longrightarrow$ \probname\ solution.} Let $Y_i^*$ be a
  3SAT solution. Let $z_j$ be the number of ways in which $c_j$ is satisfied by
  $Y_i^*$. Note that $z_j \in \{1, 2, 3\}$.
  \begin{align*}
    \bx[j] =
    \begin{cases}
      1 & j \le 2n \wedge j \text{ is odd} \wedge Y_{(j+1)/2}^* = \text{ true}
      \\
      1 & j \le 2n \wedge j \text{ is even} \wedge Y_{j/2}^* = \text{ false} \\
      1 & j > 2n \wedge j \mod 3 = 1 \wedge z_{(j+2)/3} = 3 \\
      1 & j > 2n \wedge j \mod 3 = 2 \wedge z_{(j+1)/3} = 2 \\
      1 & j > 2n \wedge j \mod 3 = 3 \wedge z_{j/3} = 1 \\
      0 & \text{otherwise}
    \end{cases}
  \end{align*}
  Observe that $\bV \bx = \bc$ because:
  \begin{itemize}
    \item For $j \le n$, $(\bV \bx)[j] = \bc[j]$. This is because exactly one
      of $\bx[2i-1], \bx[2i]$ can be 1 for $i \le n$ by construction.
    \item For $j > n$, $(\bV \bx)[j] = \bc[j]$. This is because for each
      clause $\ell$, the first $2n$ columns of $\bV$ contribute exactly $z_\ell$
      to $\bc[\ell + 2n]$. The remaining $3m$ columns contribute $7 - z_\ell$.
  \end{itemize}

  \textbf{\probname\ solution $\Longrightarrow$ 3SAT solution.} Given a solution
  $\bx$, we assign variables as follows. $y_i$ is set to true if $\bx[2i-1] = 1$
  and false if not. We will show that for $j > n$, $\bV\bx[j] = \bc[j]
  \Longrightarrow$ clause $c_{j - n}$ is satisfied. To see this, note that no
  multiset consisting of only the last $3m$ columns of $\bV$ can sum to $7$,
  since they only contain the values 4, 5, and 6. Thus, $\bx$ must select at
  least one column from the first $2n$ that has a nonzero entry in position $j$.
  The only such columns correspond to variables satisfying $c_{j - n}$ (either
  $\bv_{2j-1}$ or $\bv_{2j}$).
\end{proof}

\section{Pathological Examples for the Reduced Chain}
\label{app:bad-examples}

\begin{example}[Disconnected reduced chain]
  \label{ex:disconnected}
  In the following example, the reduced chain is disconnected for $k=2$.
\end{example}

\begin{align*}
  \bV =
  \begin{pmatrix}
    3 & 0 & 0 & 1 \\
    0 & 3 & 0 & 1 \\
    0 & 0 & 3 & 1 \\
    1 & 1 & 1 & 1 \\
  \end{pmatrix}
  &&
  \bc =
  \begin{pmatrix}
    3 \\
    3 \\
    3 \\
    3
  \end{pmatrix}
\end{align*}
Solutions:
\begin{align*}
  \mcX = \left\{
    \begin{pmatrix}
      1 \\
      1 \\
      1 \\
      0
    \end{pmatrix},
    \begin{pmatrix}
      0 \\
      0 \\
      0 \\
      3
    \end{pmatrix}
  \right\}
\end{align*}
To move between the two solutions in $\mcX$, we must remove and replace 3
elements at a time, meaning the reduced chain is disconnected for $k=2$. This
example can be generalized to be disconnected for any $k < m$.

\begin{example}[Connected reduced chain with high mixing time]
  \label{ex:exp-mixing}
  \label{ex:connected-high-mt}
\end{example}

Here, we provide an instance $(\bV, \bc, \pi)$ such that the reduced chain with
$k=2$ is connected but has exponentially large mixing time.
Let $\pi$ assign equal probability to all solutions, i.e., $\pi(\bx) =
1/|\mcX|$. Let $\mc I_\ell$ be the set of $\ell$-dimensional unit basis vectors,
i.e., each $\be_i \in \mc I_\ell$ has a 1 in the $i$th position and 0 elsewhere.
Let $\bM_\ell \in \Zplus^{\ell \times \p{\ell + \binom{\ell}{2}}}$ be the matrix
formed by taking as columns all possible sums of two vectors (with replacement)
from $\mc I_\ell$ and multiplying them by $\ell$. For example,
\begin{align*}
  \bM_3 =
  \begin{pmatrix}
    6 & 0 & 0 & 3 & 3 & 0 \\
    0 & 6 & 0 & 3 & 0 & 3 \\
    0 & 0 & 6 & 0 & 3 & 3
  \end{pmatrix}.
\end{align*}
\newcommand{\onemat}[2]{\mathbf{1}_{#1 \times #2}}
\newcommand{\zeromat}[2]{\mathbf{0}_{#1 \times #2}}
Let $\bS_\ell \in \Zplus^{\ell \times (\ell-2)}$ be the matrix where column $i
\in \{1, \dots, \ell-2\}$ has as its first $i+1$ entries $2\ell$ and 0
elsewhere. For example,
\begin{align*}
  \bS_3 =
  \begin{pmatrix}
    6 \\
    6 \\
    0 
  \end{pmatrix}
  &&
  \bS_4 =
  \begin{pmatrix}
    8 & 8 \\
    8 & 8 \\
    0 & 8 \\
    0 & 0
  \end{pmatrix}.
\end{align*}
Let $\bT_\ell \in \Zplus^{\ell \times (2\ell-1)}$ be the matrix where the $i$th
column has $i+1$ everywhere. For example,
\begin{align*}
  \bT_3 =
  \begin{pmatrix}
    2 & 3 & 4 & 5 & 6 \\
    2 & 3 & 4 & 5 & 6 \\
    2 & 3 & 4 & 5 & 6
  \end{pmatrix}.
\end{align*}
Let $\onemat{i}{j}$ be the $i \times j$ matrix consisting of all ones.
Similarly, let $\zeromat{i}{j}$ be the matrix consisting of all zeros. Then,
define
\begin{align*}
  \bV_\ell =
  \begin{pmatrix}
    \\
    \bM_\ell & \zeromat{\ell}{1} & \bS_\ell & \bT_\ell & &
    \onemat{\ell}{\p{\ell + \binom{\ell}{2}}} & \onemat{\ell}{1} &
    \onemat{\ell}{(\ell-2)} & \onemat{\ell}{(2\ell-1)} \\
           & & & & \onemat{2\ell}{1} \\
    \onemat{\ell}{\p{\ell + \binom{\ell}{2}}} & \onemat{\ell}{1} &
    \onemat{\ell}{(\ell-2)} & \onemat{\ell}{(2\ell-1)} & & \bM_\ell &
    \zeromat{\ell}{1} & \bS_\ell & \bT_\ell \\
    \\
  \end{pmatrix}
\end{align*}

\setcounter{MaxMatrixCols}{30}
For example,
\begin{align*}
  \bV_3 =
  \begin{pmatrix}
    6 & 0 & 0 & 3 & 3 & 0 & 0 & 6 & 2 & 3 & 4 & 5 & 6 & 1 & 1 & 1 & 1 & 1 & 1 & 1 & 1 & 1 & 1 & 1 & 1 & 1 & 1 \\
    0 & 6 & 0 & 3 & 0 & 3 & 0 & 6 & 2 & 3 & 4 & 5 & 6 & 1 & 1 & 1 & 1 & 1 & 1 & 1 & 1 & 1 & 1 & 1 & 1 & 1 & 1 \\
    0 & 0 & 6 & 0 & 3 & 3 & 0 & 0 & 2 & 3 & 4 & 5 & 6 & 1 & 1 & 1 & 1 & 1 & 1 & 1 & 1 & 1 & 1 & 1 & 1 & 1 & 1 \\
    1 & 1 & 1 & 1 & 1 & 1 & 1 & 1 & 1 & 1 & 1 & 1 & 1 & 1 & 6 & 0 & 0 & 3 & 3 & 0 & 0 & 6 & 2 & 3 & 4 & 5 & 6  \\
    1 & 1 & 1 & 1 & 1 & 1 & 1 & 1 & 1 & 1 & 1 & 1 & 1 & 1 & 0 & 6 & 0 & 3 & 0 & 3 & 0 & 6 & 2 & 3 & 4 & 5 & 6  \\
    1 & 1 & 1 & 1 & 1 & 1 & 1 & 1 & 1 & 1 & 1 & 1 & 1 & 1 & 0 & 0 & 6 & 0 & 3 & 3 & 0 & 0 & 2 & 3 & 4 & 5 & 6  \\
  \end{pmatrix}
\end{align*}

Let $\bc_\ell \in \Zplus^{2\ell}$ have $2\ell$ in each entry. We will show that
the reduced chain on the problem defined by $\p{\bV_\ell, \bc_\ell, \pi}$ is
connected for $k=2$ but has relaxation time exponential in $\ell$. To do so, we
upper-bound conductance and use Cheeger's inequality.

\begin{lemma}
  The relaxation time of the chain from \Cref{ex:connected-high-mt} with $k=2$
  is $\Omega\p{\p{\frac{\ell}{e}}^\ell}$.
\end{lemma}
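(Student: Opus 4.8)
The plan is to prove the bound by exhibiting a cut of the state space whose conductance is exponentially small in $\ell$ and converting this into a relaxation‑time lower bound via Cheeger's inequality \cref{eq:cheegers}, exactly as in the proof of \Cref{lem:mixing-time}. A $k=2$ move preserves $\|\bx\|_1$, so I will work inside the connected component $\mcX^{\dagger}$ of $\mcX$ containing the solutions that use $\ell$ copies of $\bM_\ell$‑columns together with $\ell$ copies of the column $(\mathbf 0_\ell;\mathbf 1_\ell)$ (zero in the first $\ell$ coordinates, one in the last $\ell$); these all have $\|\bx\|_1 = 2\ell$, and ``the reduced chain'' in the statement is the chain on $\mcX^{\dagger}$ — if it should turn out not to be connected, its relaxation time is infinite and the bound is trivially true, so assume connectivity (which the \Cref{ex:connected-high-mt} claims and which one checks by exhibiting explicit bridge paths). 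Under this assumption $\mcX^{\dagger}$ is exactly the set of size‑$2\ell$ solutions.

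The cut I would use is $\mc S$ = the set of solutions all of whose columns come from the $\bM_\ell$‑block, the $\bS_\ell$‑block, or the single column $(\mathbf 0_\ell;\mathbf 1_\ell)$, i.e.\ ``top‑native solutions that never use $\bT_\ell$.'' First, $\pi(\mc S)\le 1/2$: since $\bV_\ell$ is, up to a permutation of columns, symmetric under interchanging its first $\ell$ and last $\ell$ rows, the mirror image $\mc S'$ of $\mc S$ is a disjoint subset of $\mcX^{\dagger}$ with $|\mc S'| = |\mc S|$. Second, $|\mc S|$ is large: the $\bM_\ell$‑only solutions in $\mc S$ (no $\bS_\ell$‑column; $\ell$ copies of $\bM_\ell$‑columns $\ell(\be_i+\be_j)$ summing to $(2\ell,\dots,2\ell)$; the remaining $\ell$ columns forced to be $(\mathbf 0_\ell;\mathbf 1_\ell)$) are in bijection with $2$‑regular multigraphs on $[\ell]$ (loops counted with degree $2$), whose number has exponential generating function $e^{x/2}/\sqrt{1-x}$; singularity analysis plus Stirling give this count as $\sim\sqrt{2e}\,(\ell/e)^{\ell}$, so $|\mc S| = \Omega((\ell/e)^{\ell})$. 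Third — the crux — I would show that $\mc S$ has boundary of size $O(1)$: the only solution in $\mc S$ from which a single $k=2$ move can leave $\mc S$ is $\bx^{\dagger}$, which consists of the loop column $2\ell\be_\ell$, the $\bS_\ell$‑column with top part $2\ell(\be_1+\dots+\be_{\ell-1})$, and $(2\ell-2)$ copies of $(\mathbf 0_\ell;\mathbf 1_\ell)$; its exit moves replace $\{\,2\ell\be_\ell,\ 2\ell(\be_1+\dots+\be_{\ell-1})\,\}$ either by $\{\,\bT_\ell\text{-column }2\ell\cdot\mathbf 1,\ (\mathbf 0_\ell;\mathbf 1_\ell)\,\}$ or by $\{\,\onemat{2\ell}{1},\ \bT_\ell\text{-column }(2\ell-1)\mathbf 1\,\}$. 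Granting this, with $Q(\mc S,\overline{\mc S})\triangleq\sum_{\bx\in\mc S,\,\bx'\notin\mc S}\pi(\bx)\Preducedtwo(\bx,\bx')$ we get $Q(\mc S,\overline{\mc S}) = \pi(\bx^{\dagger})\sum_{\bx'\notin\mc S}\Preducedtwo(\bx^{\dagger},\bx')\le \pi(\bx^{\dagger}) = 1/|\mcX^{\dagger}|$, hence the conductance ratio $\Phi(\mc S)\triangleq Q(\mc S,\overline{\mc S})/\pi(\mc S)\le 1/|\mc S| = O((e/\ell)^{\ell})$, and Cheeger's inequality \cref{eq:cheegers} yields $\relaxation \ge 1/(2\Phi(\mc S)) = \Omega((\ell/e)^{\ell})$.

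The main obstacle is the boundary computation. One must enumerate the ways a $k=2$ move from a solution using only $\bM_\ell$‑, $\bS_\ell$‑, and $(\mathbf 0_\ell;\mathbf 1_\ell)$‑columns can introduce a forbidden column — a $\bT_\ell$‑column, any ``bottom‑native'' column, or $\onemat{2\ell}{1}$. Every forbidden column contributes $\mathbf 1$ to the first $\ell$ coordinates; examining the first‑$\ell$ equation, and using that $\bS_\ell$ appears with multiplicity $\le 1$ (every coordinate of $\bc_\ell$ is only $2\ell$) and that there is no $\bT_\ell$‑column available in the source to cancel with, forces the two removed columns to have a \emph{uniform} sum $v\cdot\mathbf 1$ in the first $\ell$ coordinates, and the only non‑degenerate way to write $v\cdot\mathbf 1$ as a sum of two columns from $\{\bM_\ell\text{-cols},\ \bS_\ell\text{-cols},\ (\mathbf 0_\ell;\mathbf 1_\ell)\}$ is $v=2\ell$ with summands the loop at $\ell$ and the length‑$(\ell-1)$ $\bS_\ell$‑prefix — i.e.\ the source is $\bx^{\dagger}$. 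This is exactly where the shapes of $\bS_\ell$ (an incremental prefix pattern, never a full or singleton support) and $\bT_\ell$ (uniform patterns of every magnitude up to $2\ell$) do the work: they are the only channel by which the two halves of $\bV_\ell$ communicate, and their rigidity collapses the boundary to a single point. Carrying out the case analysis airtight, including the handful of small‑$\ell$ exceptions (which do not affect the asymptotics), is the bulk of the work; separately verifying irreducibility of the chain on $\mcX^{\dagger}$ for $k=2$ is a secondary task whose failure would only make the claim easier.
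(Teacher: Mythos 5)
Your proposal follows the same overall strategy as the paper: take the cut $\mc S = \mcX_L$ (solutions supported on the top $\bM_\ell$-block, the top $\bS_\ell$-block, and the column $(\mathbf 0_\ell;\mathbf 1_\ell)$), identify a single boundary state, lower-bound $|\mcX_L|$ via a bijection with collections of cycles / $2$-regular multigraphs on $[\ell]$, and apply Cheeger's inequality to convert a conductance bound into a relaxation-time bound. Your observation that a $k=2$ move preserves $\|\bx\|_1$ (so that, e.g., for $\ell=3$ there are size-$4$ solutions such as two copies of $[2,2,2,1,1,1]$ plus two copies of $[1,1,1,2,2,2]$, which cannot communicate with size-$6$ solutions) is a real subtlety the paper elides; since $\tau_{\mathrm{rel}}=\infty$ when the chain is reducible, the lemma's lower bound is vacuously true in that case, so restricting to a connected component (or simply proceeding with the conductance bound regardless) both work.

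There are, however, two substantive slips in the boundary argument. First, the assertion that ``every forbidden column contributes $\mathbf 1$ to the first $\ell$ coordinates'' is false: the top $\bT_\ell$ columns contribute $(i+1)\cdot\mathbf 1$ for $i\in\{1,\dots,2\ell-1\}$, not $\mathbf 1$. The correct invariant is that every forbidden column is \emph{constant} on the first $\ell$ coordinates. Second, the inference that the two removed columns must have a uniform sum on the first $\ell$ coordinates does not follow in general: if $\bv_c$ is forbidden but $\bv_d$ is allowed, then $\bv_a+\bv_b=\bv_c+\bv_d$ need not be uniform on the first $\ell$ entries, so one must instead argue (as the paper does, by enumerating the $\bv_a,\bv_b$ pairs whose sum has full support on the first $\ell$ coordinates and then checking for each that no legal $\bv_d=\bs-\bv_c$ exists) that the non-uniform cases are ruled out because the required $\bv_d$ is not a column of $\bV_\ell$. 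The conclusion — that $\bx^\dagger$ (the paper's $\bx^*$, consisting of $2\ell\be_\ell$, the $\bS_\ell$ column $2\ell(\be_1+\dots+\be_{\ell-1})$, and $2\ell-2$ copies of $(\mathbf 0_\ell;\mathbf 1_\ell)$) is the unique boundary state — is right, but your stated route to it has a gap. Finally, the exponential generating function you quote, $e^{x/2}/\sqrt{1-x}$, is incorrect; the structures in question (collections of undirected cycles allowing loops and $2$-cycles) have EGF $(1-x)^{-1/2}\exp\!\left(\frac{x}{2}+\frac{x^2}{4}\right)$ (see \citealp[Ex.~5.2.9]{stanley1999enumerative}, OEIS A002135), though since the extra factor $e^{x^2/4}$ is analytic this only changes the leading constant, and the $\Omega((\ell/e)^\ell)$ asymptotic you rely on is unaffected.
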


\begin{proof}
  Let $\mc V_L$ be the set of columns of $\bV_\ell$ shown in
  \textcolor{red}{\textbf{bold red}}.
  \begin{align*}
    \bV_\ell =
    \begin{pmatrix}
      \\
      \incolor{\bM_\ell} & \incolor{\zeromat{\ell}{1}} & \incolor{\bS_\ell} &
      \bT_\ell & & \onemat{\ell}{\p{\ell + \binom{\ell}{2}}} & \onemat{\ell}{1}
               & \onemat{\ell}{(\ell-2)} & \onemat{\ell}{(2\ell-1)} \\
               & & & & \onemat{2\ell}{1} \\
      \incolor{\onemat{\ell}{\p{\ell + \binom{\ell}{2}}}} &
      \incolor{\onemat{\ell}{1}} & \incolor{\onemat{\ell}{(\ell-2)}} &
      \onemat{\ell}{(2\ell-1)} & & \bM_\ell & \zeromat{\ell}{1} & \bS_\ell &
      \bT_\ell \\
      \\
    \end{pmatrix}
  \end{align*}
  For example,
  \begin{align*}
    \bV_3 =
    \begin{pmatrix}
      \incolor{6} & \incolor{0} & \incolor{0} & \incolor{3} & \incolor{3} & \incolor{0} & \incolor{0} & \incolor{6} & 2 & 3 & 4 & 5 & 6 & 1 & 1 & 1 & 1 & 1 & 1 & 1 & 1 & 1 & 1 & 1 & 1 & 1 & 1 \\
      \incolor{0} & \incolor{6} & \incolor{0} & \incolor{3} & \incolor{0} & \incolor{3} & \incolor{0} & \incolor{6} & 2 & 3 & 4 & 5 & 6 & 1 & 1 & 1 & 1 & 1 & 1 & 1 & 1 & 1 & 1 & 1 & 1 & 1 & 1 \\
      \incolor{0} & \incolor{0} & \incolor{6} & \incolor{0} & \incolor{3} & \incolor{3} & \incolor{0} & \incolor{0} & 2 & 3 & 4 & 5 & 6 & 1 & 1 & 1 & 1 & 1 & 1 & 1 & 1 & 1 & 1 & 1 & 1 & 1 & 1 \\
      \incolor{1} & \incolor{1} & \incolor{1} & \incolor{1} & \incolor{1} & \incolor{1} & \incolor{1} & \incolor{1} & 1 & 1 & 1 & 1 & 1 & 1 & 6 & 0 & 0 & 3 & 3 & 0 & 0 & 6 & 2 & 3 & 4 & 5 & 6  \\
      \incolor{1} & \incolor{1} & \incolor{1} & \incolor{1} & \incolor{1} & \incolor{1} & \incolor{1} & \incolor{1} & 1 & 1 & 1 & 1 & 1 & 1 & 0 & 6 & 0 & 3 & 0 & 3 & 0 & 6 & 2 & 3 & 4 & 5 & 6  \\
      \incolor{1} & \incolor{1} & \incolor{1} & \incolor{1} & \incolor{1} & \incolor{1} & \incolor{1} & \incolor{1} & 1 & 1 & 1 & 1 & 1 & 1 & 0 & 0 & 6 & 0 & 3 & 3 & 0 & 0 & 2 & 3 & 4 & 5 & 6  \\
    \end{pmatrix}.
  \end{align*}
  Formally, this is
  \begin{align*}
    \mc V_L
    &\triangleq
    \left\{\bv_i : i \le 2 \ell + \binom{\ell}{2} - 1\right\}.
  \end{align*}
  Let $\mcX_L \subset \mcX$ be the set of all solutions that only uses elements
  from $\mc V_L$. Formally, this is
  \begin{align*}
    \mcX_L \triangleq \left\{\bx \in \mcX : \bx[i] > 0 \Longrightarrow i \le
    2\ell + \binom{\ell}{2} - 1 \right\}.
  \end{align*}
  Define $\mcX_R$ analogously.

  \paragraph*{The chain is irreducible.}

  Observe that $\mcX_L$ is connected to $\mcX_R$ by swaps of size $k=2$:
  intuitively, starting at a state in $\mcX_L$, we can successively replace
  elements of $\mc V_L$ with elements from $\bS_\ell$ and the vector $[0  \dots
  0 ~ 1 \dots 1]$ until we reach the state with 1 copy of $[2\ell \dots 2\ell ~
  1 \dots 1]$ and $2\ell-1$ copies of $[0 \dots 0 ~ 1 \dots 1]$. From there, we
  can use elements from $\bT_\ell$ to reach the solution that consists of
  $2\ell$ copies of $[1 \cdots 1]$. Symmetrically, this state is reachable from
  $\mcX_R$, meaning the two are connected. More generally, from any solution, by
  construction the solution consisting of $2\ell$ copies of $[1 \cdots 1]$ is
  reachable.

  To bound conductance, we will show that:
  \begin{enumerate}
    \item only one state $\bx^* \in \mcX_L$ has edges to states in $\mcX
      \backslash \mcX_L$,
    \item $\pi(\bx^*) / \pi(\mcX_L) = 1/|\mcX_L|$ is small.
  \end{enumerate}

  \paragraph*{Only one state has crossing edges.}

  Let $\bx^* \in \mcX_L$ have an edge to some $\bx' \notin \mcX_L$. Treating
  $\bx^*$ as a multiset, we will reason about the number of copies of each
  vector $\bv$ it contains. We will show that $\bx^*$ has 1 copy of $[0 \cdots 0
  ~ 2\ell ~ 1 \cdots 1]$, 1 copy of $[2\ell \cdots 2\ell ~ 0 ~ 1 \cdots 1]$, and
  $2\ell-2$ copies of $[0 \cdots 0 ~ 1 \cdots 1]$.

  Let $\bv_a$ and $\bv_b$ be the two elements that can be removed from $\bx^*$
  and replaced by $\bv_c$ and $\bv_d$ to yield some $\bx' \notin \mc X_L$. Let
  $\bs = \bv_a + \bv_b$. Observe that the last $\ell$ entries of $\bs$ are each
  2 because $\bx^*$ only uses vectors in $\mc V_L$, which all have 1's in their
  last $\ell$ entries. Further, observe that each $\bv \notin \mc V_L$ has its
  first $\ell$ entries nonzero and identical. Without loss of generality, let
  $\bv_c \notin \mc V_L$. In order for $\bv_a + \bv_b = \bs = \bv_c + \bv_d$, it
  must be the case that each of the first $\ell$ entries of $\bs$ is nonzero,
  since each of the first $\ell$ entries of $\bv_c$ is nonzero. For sufficiently
  large $\ell$ (i.e., $\ell > 4$), only three combinations of elements $\bv_a,
  \bv_b \in \mc V_L$ have this property. Denoting the first $\ell$ entries of a
  vector $\bv$ as $\bv[:\ell]$, these three possible combinations are:
  \begin{align*}
    \bv_a[:\ell] &=
    \begin{pmatrix}
      0 \\
      \vdots \\
      0 \\
      \ell \\
      \ell
    \end{pmatrix}, ~~
    \bv_b[:\ell] =
    \begin{pmatrix}
      2 \ell \\
      \vdots \\
      2\ell \\
      0 \\
      0
    \end{pmatrix} \\
    \bv_a[:\ell] &=
    \begin{pmatrix}
      0 \\
      \vdots \\
      0 \\
      \ell \\
      \ell
    \end{pmatrix}, ~~
    \bv_b[:\ell] =
    \begin{pmatrix}
      2 \ell \\
      \vdots \\
      2\ell \\
      2\ell \\
      0
    \end{pmatrix} \\
    \bv_a[:\ell] &=
    \begin{pmatrix}
      0 \\
      \vdots \\
      0 \\
      0 \\
      2\ell \\
    \end{pmatrix}, ~~
    \bv_b[:\ell] =
    \begin{pmatrix}
      2 \ell \\
      \vdots \\
      2\ell \\
      2\ell \\
      0
    \end{pmatrix}
  \end{align*}
  This yields possible sums for the first $\ell$ entries:
  \begin{align*}
    (\bv_a + \bv_b)[:\ell] = \bs[:\ell] \in \left\{
      \begin{pmatrix}
        2\ell \\
        \vdots \\
        2\ell \\
        \ell \\
        \ell
      \end{pmatrix},
      \begin{pmatrix}
        2\ell \\
        \vdots \\
        2\ell \\
        3\ell \\
        \ell
      \end{pmatrix},
      \begin{pmatrix}
        2\ell \\
        \vdots \\
        2\ell \\
        2\ell \\
        2\ell
      \end{pmatrix}
    \right\}
  \end{align*}
  For the first 2 possibilities, it must be the case that the first $\ell$
  entries of $\bv_c$ are identical (because $\bv_c \notin \mc V_L$) and at most
  $\ell$ (because otherwise $\bv_d = \bs - \bv_c$ would contain negative
  entries). But we can verify that by construction, no $\bv_d$ exists that
  yields the desired sum. Thus, the only possible sum is $\bs = [2\ell \dots
  2\ell ~ 2 \dots 2]$. Given this choice of $\bs$, the only solution that
  includes the required $\bv_a$ and $\bv_b$ must include these two elements and
  $2\ell-2$ copies of $[0 \dots 0 ~ 1 \dots 1]$. This is $\bx^*$ as claimed.

  \paragraph*{$\mcX_L$ contains many solutions.}

  Next, we lower-bound $|\mcX_L|$. A fairly straightforward argument shows
  $|\mcX_L| \ge 2^{\lfloor \ell/2 \rfloor}$. To see this, note that for each pair
  of indices $i \ne j \le \ell$, there are two ways to make those entries
  $2\ell$ and the rest of the first $\ell$ entries 0:
  \begin{align*}
    \begin{pmatrix}
      2\ell \\
      0
    \end{pmatrix}
    +
    \begin{pmatrix}
      0 \\
      2\ell
    \end{pmatrix}
    &&
    \begin{pmatrix}
      \ell \\
      \ell
    \end{pmatrix}
    +
    \begin{pmatrix}
      \ell \\
      \ell
    \end{pmatrix}
  \end{align*}
  Because there are $\lfloor\ell/2\rfloor$ disjoint pairs of entries in the
  first $\ell$, there are at least $2^{\lfloor\ell/2\rfloor}$ distinct
  solutions.

  A more sophisticated argument shows that $|\mcX_L| =
  \Omega\p{\p{\frac{\ell}{e}}^\ell}$. Let $\mcX_L'$ be the set of solutions that
  only use the first $\ell + \binom{\ell}{2}$ columns of $\bV_\ell$. Since
  $\mcX_L' \subseteq \mcX_L$, $|\mcX_L'| \le |\mcX_L|$. Observe that each of the
  first $\ell + \binom{\ell}{2}$ columns of $\bV_\ell$ can be indexed by $(i,
  j)$ for $i \le j$, where $\bv_{(i,j)}$ is $\ell \times (\be_i + \be_j)$ for
  unit basis vectors $\be_i$ and $\be_j$. (A unit basis vector $\be_i$ has a 1
  in position $i$ and 0 elsewhere.) Any $\bx \in \mcX_L'$ can thus be expressed
  as a sequence of these ordered pairs.

  Further, any $\bx \in \mcX_L'$ must have exactly two copies of $\ell \times
  \be_i$ for each $i \in [\ell]$. For some solution $\bx \in \mcX_L'$, observe
  that beginning at an arbitrary $\bv_{(i,j)}$ in the solution, we can trace out
  a cycle $i \to j$, $j \to k$, \ldots, $z \to i$ simply by finding the item
  that provides the matching pair for each $\be_j$. Thus, each $\bx \in \mcX_L'$
  uniquely corresponds to a set of (undirected) cycles over $\ell$ elements.
  (For example, the appearance of $\bv_{(i, i)}$ corresponds to a self-loop, and
  $\bv_{(i, j)}$ appearing twice corresponds to a cycle of length 2.) Thus,
  $|\mcX_L'|$ is exactly the number of distinct such collections of cycles.

  Defining $a(\ell)$ to be the number of distinct collections of cycles over
  $\ell$ elements, \citet[Example 5.2.9]{stanley1999enumerative} shows that
  $a(\ell)$ satisfies
  \begin{align*}
    \sum_{\ell \ge 0} a(\ell) \frac{x^\ell}{\ell!} = (1-x)^{-1/2} \exp\p{\frac{x}{2} +
    \frac{x^2}{4}}.
  \end{align*}
  The sequence $a(0), a(1), \dots$ appears as OEIS sequence
  \href{https://oeis.org/A002135}{A002135} \citep{oeis}, and satisfies the
  recurrence $a(\ell) = \ell \cdot a(\ell-1) - \binom{\ell-1}{2} a(\ell-3)$.
  Asymptotically (c.f. Pietro Majer \citep{oeis}),
  \begin{align*}
    a(\ell) \sim \sqrt{2} \exp\p{\frac{3}{4}} \p{\frac{\ell}{e}}^{\ell} \p{1 +
    O\p{\frac{1}{\ell}}}.
  \end{align*}
  Thus, $|\mcX_L| \ge |\mcX_L'| = a(\ell) = \Omega\p{\p{\frac{\ell}{e}}^\ell}$.

  \paragraph*{Conductance and Cheeger's inequality.}

  With this, we are ready to bound the chain's conductance and thus its
  relaxation time. We need the following definitions. For our Markov chain
  $M_\ell$ with transition matrix $P_\ell$,
  \begin{align*}
    Q_\ell(\bx, \bx')
    &\triangleq \pi(\bx') P_\ell(\bx, \bx') \\
    Q_\ell(\mc S, \overline{\mc S})
    &\triangleq \sum_{\bx \in \mc S, \bx' \notin \mc S} Q_\ell(\bx,
    \bx') \\
    \Phi(\mc S, \overline{\mc S})
    &\triangleq \frac{Q_\ell(\mc S, \overline{\mc S})}{\pi(\mc S)} \\
    \Phi(M_\ell)
    &\triangleq \min_{\mc S : 0 < \pi(\mc S) \le 1/2} \Phi(\mc S, \overline{\mc
    S}) \\
    &\le \Phi(\mcX_L, \overline{\mcX_L}) \\
    &= \frac{\sum_{\bx \in \mcX_L, \bx' \notin \mcX_L} \pi(\bx) P_\ell(\bx,
    \bx')}{\pi(\mcX_L)} \\
    &\le \frac{\sum_{\bx \in \mcX_L, \bx' \notin \mcX_L}
    \pi(\bx)}{\pi(\mcX_L)} \tag{$P_\ell(\cdot, \cdot) \le 1$} \\
    &= \frac{\pi(\bx^*)}{\pi(\mcX_L)} \tag{$\bx^*$ is the only state with a
    crossing edge} \\
    &= \frac{1}{|\mcX_L|} \tag{$\pi(\bx) \propto 1$} \\
    &= O\p{\p{\frac{e}{\ell}}^\ell}
  \end{align*}
  By Cheeger's inequality~\citep{jerrum1988conductance,lawler1988bounds},
  \begin{align*}
    \lambda_2(P_\ell) \ge 1 - 2\Phi(M_\ell).
  \end{align*}
  This means that
  \begin{align*}
    \relaxation(P_\ell) = \frac{1}{1-\lambda_2(P_\ell)} \ge
    \frac{1}{2\Phi(M_\ell)} =
    \Omega\p{\p{\frac{\ell}{e}}^\ell}.
  \end{align*}
\end{proof}

\section{Final Algorithm Description}
\label{app:final-alg}

\begin{algorithm}
  \caption{\textproc{ILPandMCMC}($\bV, \bc, f, N, k, t$)}
  \label{alg:overall}
  \begin{algorithmic}
    \State $\mc S \gets$ subset of $\mcX$ of size at most $N$
    \Comment{Using ILP, with objective $L(\cdot)$ given by \Cref{eq:pi-approx}}
    \State $\bx \gets$ a random $\bx \in \mc S$ according to $\pi \given \bx \in
    \mc S$
    \If{$|\mc S| < N$}
    \Comment{In this case, $(\pi \given \bx \in \mc S) = \pi$}
    \State \Return $\bx$
    \Else
    \State \Return \Call{Reduced}{$\bV, \bc, f, k, t, \bx$}
    \EndIf
  \end{algorithmic}
\end{algorithm}

Recall that our upper bounds on $\numsampk$ relied on the fact that $\pi(\bx_0)
\ge 1/|\mcX|$. This may no longer be true when $\bx_0$ is sampled from $\pi
\given \bx \in \mc S$. Here, we provide a bound that only depends on
$\pi(\mc S)$, not $|\mc X|$. As long as $\pi(\mc S) \ge 1/|\mcX|$, our new bound
is stronger.
\overall*
\begin{proof}
  Let $\bX \sim \pi_0$ where $\pi_0 = \pi \given \bx \in \mc S$.
  Let $\Preduced^t(\bX, \cdot)$ be the distribution of over the reduced chain
  beginning with $\bX$ after $t$ steps. Following \citet[Theorem
  3.4]{sousi2020mixing}, for a reversible Markov chain with transition matrix
  $\Preduced$ and relaxation time $\relaxation$,
  \begin{align*}
    \numberthis \label{eq:l2-bound}
    \|\Preduced^t(\bX, \cdot) - \pi\|_{2,\pi} \le \exp(-t/\relaxation) \|\pi_0 -
  \pi\|_{2,\pi},
  \end{align*}
  where given a stationary distribution $\pi$, we define
  \begin{align*}
    \|\sigma - \pi\|_{2,\pi} \triangleq \p{\sum_{\bx \in
    \mcX}\p{\frac{\sigma(\bx)}{\pi(\bx)} - 1}^2 \pi(\bx)}^{1/2}.
  \end{align*}
  For initial distribution $\pi_0 = \pi \given \bx \in \mc S$,
  \begin{align*}
    \|\pi_0 - \pi\|_{2,\pi}^2
    &= \sum_{\bx \in \mcX} \p{\frac{\pi_0(\bx)}{\pi(\bx)} - 1}^2 \pi(\bx) \\
    &= \sum_{\bx \in \mcX} \p{\frac{\pi_0(\bx)^2}{\pi(\bx)^2} -
    \frac{2\pi_0(\bx)}{\pi(\bx)} + 1} \pi(\bx) \\
    &= \sum_{\bx \in \mcX} \frac{\pi_0(\bx)^2}{\pi(\bx)} - 2\pi_0(\bx) +
    \pi(\bx) \\
    &= \p{\sum_{\bx \in \mcX} \frac{\pi_0(\bx)^2}{\pi(\bx)}} - 1 \\
    &= \p{\sum_{\bx \in \mc S} \frac{\pi_0(\bx)^2}{\pi(\bx)} + \sum_{\bx \notin
    \mc S} \frac{\pi_0(\bx)^2}{\pi(\bx)}} - 1 \\
    &= \p{\sum_{\bx \in \mc S} \frac{(\pi(\bx)/\pi(\mc S))^2}{\pi(\bx)}} - 1 \\
    &= \p{\sum_{\bx \in \mc S} \frac{\pi(\bx)}{\pi(\mc S)^2}} - 1 \\
    &= \frac{1}{\pi(\mc S)} - 1 \\
    &\le \frac{1}{\pi(\mc S)}.
  \end{align*}
  Thus, by~\cref{eq:l2-bound},
  \begin{align*}
    \numberthis \label{eq:2bound}
    \|\Preduced^t(\bX, \cdot) - \pi\|_{2,\pi} \le \exp(-t/\relaxation)
    \frac{1}{\sqrt{\pi(\mc S)}}.
  \end{align*}
  Next, observe that for any $\sigma$,
  \begin{align*}
    \|\sigma - \pi\|_1^2
    &= \p{\sum_{\bx \in \mcX} |\sigma(\bx) - \pi(\bx)|}^2 \\
    &= \p{\sum_{\bx \in \mcX} \left|\frac{\sigma(\bx)}{\pi(\bx)} - 1\right|
    \pi(\bx)}^2 \\
    &\le \sum_{\bx \in \mcX} \p{\frac{\sigma(\bx)}{\pi(\bx)} - 1}^2 \pi(\bx)
    \tag{Jensen's inequality} \\
    &= \|\sigma - \pi\|_{2,\pi}^2.
  \end{align*}
  Combining this with \cref{eq:2bound},
  \begin{align*}
    2\dTV(\Preduced^t(\bX, \cdot), \pi))
    = \|\Preduced^t(\bX, \cdot) - \pi\|_1
    \le \|\Preduced^t(\bX, \cdot) - \pi\|_{2,\pi}
    \le \exp(-t/\relaxation) \frac{1}{\sqrt{\pi(\mc S)}}.
  \end{align*}
  To achieve the desired bound on $\dTV(\Preduced^t(\bX, \cdot), \pi)$, it
  suffices to set
  \begin{align*}
    \frac{1}{2} \exp(-t/\relaxation) \frac{1}{\sqrt{\pi(\mc S)}} &\le \varepsilon \\
    \exp(-t/\relaxation) &\le 2\varepsilon \sqrt{\pi(\mc S)} \\
    t &\ge \relaxation \log \p{\frac{1}{2\varepsilon \sqrt{\pi(\mc
    S)}}}.
  \end{align*}
\end{proof}

\end{document}